\theoremstyle{plain}
\newtheorem{theorem}{Theorem}[section]
\newtheorem{proposition}[theorem]{Proposition}
\newtheorem{lemma}[theorem]{Lemma}
\newtheorem{corollary}[theorem]{Corollary}
\theoremstyle{definition}
\newtheorem{definition}[theorem]{Definition}
\newtheorem{remark}[theorem]{Remark}
\newtheorem{example}[theorem]{Example}
\newtheorem{assumption}[theorem]{Assumption}
\theoremstyle{remark}
\renewenvironment{thebibliography}[1]{%
\begin{oldthebibliography}{#1}%
\setlength{\baselineskip}{.9em}
\linespread{1}
\small
\setlength{\parskip}{0.2ex}%
\setlength{\itemsep}{.4em}
}%
{%
\end{oldthebibliography}%
}
\newcommand{\D}{\mathbb{D}}
\newcommand{\F}{\mathbb{F}}
\newcommand{\G}{\mathbb{G}}
\newcommand{\N}{\mathbb{N}}
\newcommand{\Q}{\mathbb{Q}}
\newcommand{\R}{\mathbb{R}}
\renewcommand{\S}{\mathbb{S}}
\newcommand{\cA}{\mathcal{A}}
\newcommand{\cB}{\mathcal{B}}
\newcommand{\cE}{\mathcal{E}}
\newcommand{\cF}{\mathcal{F}}
\newcommand{\cG}{\mathcal{G}}
\newcommand{\cH}{\mathcal{H}}
\newcommand{\cN}{\mathcal{N}}
\newcommand{\cP}{\mathcal{P}}
\newcommand{\cQ}{\mathcal{Q}}
\newcommand{\cR}{\mathcal{R}}
\newcommand{\cT}{\mathcal{T}}
\newcommand{\cX}{\mathcal{X}}
\newcommand{\cY}{\mathcal{Y}}
\newcommand{\fP}{\mathfrak{P}}
\DeclareMathOperator{\graph}{graph}
\DeclareMathOperator{\tr}{tr}
\DeclareMathOperator{\esssup}{ess\, sup}
\newcommand{\as}{\mbox{-a.s.}}
\newcommand{\qs}{\mbox{-q.s.}}
\newcommand{\1}{\mathbf{1}}
\newcommand{\sint}{\stackrel{\mbox{\tiny$\bullet$}}{}}
\newcommand{\br}[1]{\langle #1 \rangle}
\numberwithin{equation}{section}
\newcommand{\set}[1]{\left\{#1\right\}}
\newcommand{\Real}{\mathbb R}
\newcommand{\Natural}{\mathbb N}
\newcommand{\limn}{\lim_{n \to \infty}}
\newcommand{\nin}{{n \in \Natural}}
\newcommand{\tir}{{t \in \Real_+}}
\newcommand{\Lzp}{\mathbf{L}^0_+}
\newcommand{\Fc}{\mathcal{F}}
\newcommand{\Gc}{\mathcal{G}}
\newcommand{\Stop}{\mathcal{T}}
\newcommand{\Prob}{ {\mathcal{  P}}}
\newcommand{\Qprob}{\mathcal{Q}}
\newcommand{\pare}[1]{\left(#1\right)}
\newcommand{\bra}[1]{\left[#1\right]}
\newcommand{\dbra}[1]{[\kern-0.15em[ #1 ]\kern-0.15em]}
\newcommand{\dbraco}[1]{{[\kern-0.15em[ #1 [\kern-0.15em[ \, }}
\newcommand{\dbraoc}[1]{]\kern-0.15em] #1 ]\kern-0.15em]}
\newcommand{\bF}{\mathbb{F}}
\newcommand{\bG}{\mathbb{G}}
\newcommand{\dfn}{ := }
\newcommand{\simz}{\sim_{\zeta}}
\newcommand{\domz}{\ll_{\zeta}}
\newcommand{\domztom}{\ll_{\zeta^{t,\omega}}}
\newcommand{\Hcal}{\mathcal{H}}
\newcommand{\Hs}{\cH^{\rm simp}}
\def\Pa#1#2{\fP_{#1}(#2)}
\newcommand{\NAP}{{\rm NA}_{1}(\mathcal{P})}
\newcommand{\NA}{{\rm NA}_{1}}
\begin{document}

\title{\vspace{-0em}Robust Fundamental Theorem for Continuous Processes%
}
\date{\today}
\author{
  Sara Biagini%
  \thanks{
  Department of Economics and Management, University of Pisa, Pisa, sara.biagini@ec.unipi.it}
  \and
  Bruno Bouchard%
  \thanks{CEREMADE, Universit\'e Paris Dauphine and CREST-ENSAE, bouchard@ceremade.dauphine.fr. Research supported by ANR Liquirisk and Investissements d'Avenir (ANR-11-IDEX-0003/Labex Ecodec/ANR-11-LABX-0047).
  }
  \and
  Constantinos Kardaras%
  \thanks{Department of Statistics, London School of Economics and Political Science, London, k.kardaras@lse.ac.uk.}
  \and
  Marcel Nutz%
  \thanks{
  Departments of Statistics and Mathematics, Columbia University, New York, mnutz@columbia.edu. Research supported by NSF Grants DMS-1208985 and DMS-1512900. The authors would like to thank the Associate Editor and the anonymous referees for their constructive comments.
  }
 }

\maketitle \vspace{-1em}

\begin{abstract}
We study a continuous-time financial market with continuous price processes under model uncertainty,  modeled via a family $\mathcal{P}$ of possible physical measures. A robust notion ${\rm NA}_{1}(\mathcal{P})$ of no-arbitrage of the first kind is introduced; it postulates that a nonnegative, nonvanishing claim cannot be superhedged for free by using simple trading strategies. Our first main result is a version of the fundamental theorem of asset pricing: ${\rm NA}_{1}(\mathcal{P})$ holds if and only if every $P\in\mathcal{P}$ admits a martingale measure which is equivalent up to a certain lifetime. The second main result provides the existence of optimal superhedging strategies for general contingent claims and a representation of the superhedging price in terms of martingale measures.
\end{abstract}

\vspace{2em}

{\small
\noindent \emph{Keywords} Fundamental Theorem of Asset Pricing; Arbitrage of the First Kind; Superhedging duality; Nondominated Model

\noindent \emph{AMS 2010 Subject Classification}
91B25; 
60G44; 
93E20 
}

\newpage
\section{Introduction}

We consider a financial market where stocks are traded in continuous time. The (discounted) stock price process $S$ is assumed to be continuous, but its distribution in the sense of a stochastic model is not necessarily known. Rather, the market is considered under a family $\cP$ of probability measures: each $P\in\cP$ is understood as a possible model for the real-world dynamics of $S$. Two fundamental questions are studied in this context: the absence of arbitrage and its relation to linear pricing rules (fundamental theorem of asset pricing), and the range of arbitrage-free prices (superhedging theorem).

We introduce a robust notion of market viability, called no-arbitrage of the first kind and denoted $\NAP$. Given a contingent claim $f\geq 0$ at maturity $T$, let   $v^{{\rm simp}}(f)$   be the minimal initial capital necessary to superhedge~$f$ simultaneously under all models $P\in\cP$,
$$
  v^{{\rm simp}}(f) \dfn \inf\big\{x:\,\exists\, H \mbox{ with } x+ H\sint S_{T} \geq f\;P\as\mbox{ for all }P\in\cP  \big\}.
$$
In the above, we allow only simple trading strategies $H$, so that there are no limitations related to defining the stochastic integral $H\sint S$---no semimartingale assumption is made.
Our condition $\NAP$ then postulates that 
$$
  v^{{\rm simp}}(f)=0\quad \mbox{implies} \quad f=0\:\: P\as\mbox{ for all }P\in\cP.
$$
To state the same in reverse, the price $v^{{\rm simp}}(f)$ should be strictly positive if $P \{f>0 \} >0$ holds for some $P\in\cP$.
This condition corresponds to \cite[Definition~1.1]{Kardaras.10} when $\cP$ is a singleton; it will turn out to be a notion of market viability that is well suited for model uncertainty in continuous time.

The main goal of the fundamental theorem is to deduce the existence of martingale measures, or linear pricing rules, from the absence of arbitrage opportunities. In the classical case \cite{DalangMortonWillinger.90,DelbaenSchachermayer.94}, this measure is equivalent to the physical measure $P$. In the case of model uncertainty in a discrete-time market, the fundamental theorem of \cite{BouchardNutz.13} yields a family $\cQ$ of martingale measures such that each $P\in\cP$ is dominated by a martingale measure; the families $\cP$ and $\cQ$ are equivalent in the sense that they have the same polar sets. In the present setting with continuous processes, we find a result which is stronger in the sense that each $P$ admits an equivalent martingale measure~$Q$. On the other hand, equivalence needs to be defined in a weaker way: it is necessary to allow for a loss of mass in our martingale deflators; thus, the measures $Q$ may allocate mass outside the support of $\cP$. As a result, the equivalence of measures holds only up to a random time $\zeta$, and so does the martingale property. More precisely, we suppose that our model is set on a canonical space $\Omega$ of paths which are continuous before possibly jumping to a cemetery state, and $\zeta$ is the time of this jump. This ``lifetime'' is infinite and thus invisible under all $P\in\cP$, but may be finite under some $Q\in\cQ$. With these notions in place, our version of the fundamental theorem then states that $\NAP$ holds if and only if for every $P\in\cP$ there exists a local martingale measure $Q$ such that $Q$ and $P$ are equivalent prior to $\zeta$. See Definition~\ref{defn:ELMM} and Theorem~\ref{thm: FTAP} for the precise statements.

A related setting is considered in \cite{DolinskySoner.12} where $S$ is the canonical process in the space of continuous paths. Roughly speaking, the market model considered there corresponds to declaring all paths to be possible for the stock price, or including all measures in $\cP$. There is, then, no necessity for a definition of arbitrage; in some sense, the absence of the latter is implicit in the fact that all paths are possible. Nevertheless, the duality result stated in \cite{DolinskySoner.12} implies a conclusion in the direction of the fundamental theorem; namely, it follows that there must exist at least one martingale measure under the conditions of that result. A similar result on Skorokhod space is reported in~\cite{DolinskySoner.14}. We also refer to \cite{DavisHobson.07} for a discussion of different notions of arbitrage in the context of traded options. For versions of the robust fundamental theorem for discrete-time frictionless markets, see \cite{AcciaioBeiglbockPenknerSchachermayer.12, BouchardNutz.13, BurzoniFrittelliMaggis.14, Riedel.11}; for discrete-time markets with transaction costs, see \cite{BayraktarZhang.13, BayraktarZhangZhou.13, BouchardNutz.14trans, DolinskySoner.13}.

The second main result of the present paper is a superhedging theorem in our setting. Assume that $\NAP$ holds and let $f\geq0$ be a contingent claim, measurable at time $T$. Then, we establish the duality
\begin{multline*}
  \sup_{Q\in\cQ}  E^Q[f\1_{\zeta>T}] \\
  =\inf\big\{x:\,\exists\, H \mbox{ with } x+ H\sint S_{T} \geq f\;P\as\mbox{ for all }P\in\cP  \big\};
\end{multline*}
moreover, we construct an optimal superhedging strategy $H$---naturally, this necessitates continuous trading. See Theorem~\ref{th:duality} for the precise statement. 

The line of argument in the proof is similar to \cite{Nutz.14} where it is assumed that $\cP$ consists of martingale measures in the first place. In the present case, the martingale property holds only prior to $\zeta$ which necessitates a number of additional considerations. Generally speaking, the superhedging theorem is fairly well studied in the situation where $\cP$ consists of martingale measures; cf.\ \cite{BouchardMoreauNutz.12, DenisMartini.06, FernholzKaratzas.11, NeufeldNutz.12, NutzSoner.10, NutzZhang.12, Peng.10, PossamaiRoyerTouzi.13, SonerTouziZhang.2010rep, SonerTouziZhang.2010dual}, among others, or when  all paths are considered possible for the stock and options are also traded; see, e.g., \cite{CoxObloj.11,DavisHobson.07,DolinskySoner.12,DolinskySoner.14,GalichonHenryLabordereTouzi.11,Hobson.98}. We also refer to \cite{AcciaioBeiglbockPenknerSchachermayer.12, BayraktarZhangZhou.13, BouchardNutz.13, DolinskySoner.13, Nutz.13} for discrete-time markets. Finally, in   the   forthcoming independent work~\cite{CheriditoKupperTangpi.14}, absence of a duality gap will be established by functional analytic methods in a market more general than ours, under a condition that is stronger than $\NAP$.
 
The remainder of this paper is organized as follows. The setup is detailed in Section~\ref{sec: set up}, where we also define $\NAP$. In Section~\ref{sec : FTAP}, we discuss our version of the fundamental theorem of asset pricing. Section~\ref{sec: dyna prog and local mart} provides some technical results on prior-to-$\zeta$ equivalent martingale measures; these are used in Section~\ref{sec: superhedging duality}, where we study the superhedging theorem. Finally, the Appendix collects auxiliary results on F\"ollmer's exit measure and the particular path space that are used in the body of this paper.

\section{Setup} \label{sec: set up}

\subsection{Measurable Space and Model Uncertainty}\label{subsec: prelims}

We first construct the underlying measurable space $(\Omega,\Fc)$ used throughout the paper. Let $E$ be a Polish space and let $d_E$ be a complete metric consistent with the topology of $E$. Adjoining an isolated ``cemetery'' state~$\triangle$, we shall work with $\bar E:=E\cup \{\triangle\}$. It is easy to see that  $\bar E$ is again a Polish space under the metric
$$
d_{\bar E}(x,y):= 1\wedge d_{E}(x,y) \1_{\{\triangle \notin\{x,y\}\}}+\1_{\{\triangle \in\{x,y\}\}\cap \{x\ne y\}},\quad x,y\in\bar E.
$$
We then define $\Omega$ to be the space of all paths $\omega : \R_{+} \to \bar E$ which start at a 
given point $x_{*}\in E$, are c\`adl\`ag on $[0,\zeta(\omega))$ and constant on $[\zeta(\omega),\infty)$, where 
$$
  \zeta(\omega):=\inf\{t\geq0: \,\omega_{t}=\triangle\}
$$
is the ``lifetime'' of $\omega$. The function $\zeta$ takes values in $(0,\infty]$ since $x_{*}\in E$ and the paths are right-continuous.  It is shown in  Lemma~\ref{le:OmegaPolish} in the Appendix  that $\Omega$ carries a natural Polish topology.

We denote by $B=(B_{t})_{t\in\R_{+}}$ the canonical process, defined by $B_{t}(\omega)=\omega_{t}$, and by $\F=(\cF_{t})_{t\in\R_{+}}$ its the natural filtration, $\cF_{t}=\sigma(B_{s},\,s\leq t)$, and finally $\cF=\sigma(B_{s},\,s\in\R_{+})$. The set of $\F$-stopping times is denoted by $\Stop$. The minimal right-continuous filtration containing $\F$ is denoted by $\bF_+ = (\Fc_{t+})_{\tir}$, while $\Stop_+$ is the set of all $\bF_+$-stopping times. With these notions in place, we observe that $ \{ \zeta \leq t \} = \{B(t) = \triangle \} \in \Fc_t$ for all $\tir$ and hence that $\zeta \in \Stop$.
 

To represent model uncertainty, we shall work with a (nonempty) family~$\cP$ of probability measures on $(\Omega,\Fc)$, rather than a single measure. Each element $P\in\cP$ is interpreted as a possible model for the real-world dynamics; no domination assumption is made. We say that a property holds $\cP$-quasi-surely (or $\cP$-q.s.)  if it holds $P$-a.s.\ for all $P\in \cP$. We shall assume throughout that
\[
\zeta = \infty \quad \Prob\qs
\]
Thus, the cemetery state is actually invisible under the real-world models; its role will be to absorb the residual mass of certain martingale measures.

Given a $\sigma$-field $\Gc \subseteq \Fc$, we denote by $\Lzp(\Gc)$ the set of all $[0, \infty]$-valued, $\Gc$-measurable random variables that are $\Prob$-q.s.\ finite.

\subsection{Trading and Arbitrage}

The tradable assets are modeled by an $\R^{d}$-valued, $\bF$-adapted and right-continuous process  $S : \R_{+}\times\Omega \to \Real^d$ such that 
$$ 
\mbox{the paths of $S$ are $\Prob$-q.s.\ continuous.}
$$
No other assumption is made on $S$ at this stage; in particular, no semimartingale property is assumed. However, structural  properties will follow later as a consequence of our no-arbitrage condition.

A simple predictable\footnote{We define simple predictable strategies with respect to the filtration $\bF_+$; however, we recall that the class of predictable processes on $(\Omega, \bF)$ coincides with the   class of predictable processes on $(\Omega, \bF_+)$. The symbol $\dbraoc{\tau_{i-1}, \tau_i}$ denotes the stochastic interval.}
strategy is a process $H = \sum_{i = 1}^n h_i \1_{\dbraoc{\tau_{i-1}, \tau_i}}$, where $h_i = (h_i^j)_{j\le d}$ is $\Fc_{\tau_{i-1}+}$-measurable for all $i\le n$, and $(\tau_i)_{i\le  n}$ is a nondecreasing $\Stop_+$-valued sequence with $\tau_0 = 0$. Given an initial capital $x \in \Real_{+}$ and a simple predictable strategy $H$, we define the associated wealth process
\[
X^{x, H} = x + H\sint S = x + \sum_{i=1}^n \sum_{j=1}^d h_i^j \pare{S^j_{\tau_i \wedge \cdot} - S^j_{\tau_{i-1} \wedge \cdot}}.
\]
Moreover, we define  $\cH^{\rm simp}(x)$ as the class of all simple predictable processes $H$ such that $X^{x, H}$ remains nonnegative $\Prob$-q.s. (The superscript  ``${\rm simp}$''  acts as a mnemonic for ``simple'' in what follows.)
Given   $T \in \Real_+$ and $f \in \Lzp(\Fc_T)$, let 
\[
v^{{\rm simp}}(T, f) \dfn \inf \set{ x \in \Real_+ : \, \exists H \in \Hs(x) \text{ with } X^{x, H}_T \geq f \: \Prob \text{-q.s.}}
\]
be the superhedging price of the claim $f$ over the class of simple strategies. We can then introduce our notion of no-arbitrage of the first kind, stating that the superhedging price is null if and only if the claim is null $\cP$-q.s.

\begin{definition}\label{def: NA1(Pc)} We say that  $\NAP$  holds if
\[
\forall T \in \Real_+ \text{ and } f \in  \Lzp(\Fc_T), \quad v^{{\rm simp}}(T, f) = 0 \;\Longrightarrow\; f = 0 \; \Prob \text{-q.s.}
\] 
\end{definition}

This condition coincides with \cite[Definition~1.1]{Kardaras.10} when $\cP$ is a singleton.

\section{Fundamental Theorem of Asset Pricing}\label{sec : FTAP}
 

In order to state our version of the fundamental theorem of asset pricing, we first need to introduce the concept of prior-to-$\zeta$ equivalence.

\begin{definition}\label{def : prior to zeta abs cont and equiv}
Given two measures $P$ and $Q$ on $(\Omega, \Fc)$, we say that $Q$ is   \emph{prior-to-$\zeta$ absolutely continuous} with respect to $P$, if $Q\ll P$ holds on the space $\pare{\set{t < \zeta}, \Fc_t \cap \set{t < \zeta}}$ for all $\tir$.   This relation is denoted by $Q \domz P$.  If $Q \domz P$ and $P \domz Q$, we say that $P$ and $Q$ are \emph{prior-to-$\zeta$ equivalent} and denote this fact by $Q \simz P$.  
\end{definition}

In this definition, equivalence is used in the sense of unnormalized measures. Namely, even if the measures are probabilities on $(\Omega,\cF)$, they need not be probabilities on $\pare{\set{t < \zeta}, \Fc_t \cap \set{t < \zeta}}$, and $Q \simz P$ does not mean that $P (A) = 1$ implies  $Q (A) = 1$, even if $A\in \Fc_t \cap \set{t < \zeta}$. A second remark is that local (on $\Fc_t$, for all $\tir$) equivalence of two probabilities trivially implies prior-to-$\zeta$ equivalence, but the converse fails. The following simple example demonstrates these phenomena.

\begin{example}
Suppose that $E$ is a singleton. Then, $\bF$ is the smallest filtration that makes $\zeta$ a stopping time and $\Fc_t \cap \set{t < \zeta} = \set{\emptyset, \set{t < \zeta}}$ holds for all $\tir$. It follows that prior-to-$\zeta$ equivalence for any two probabilities $P$ and $Q$ on $(\Omega, \Fc)$ is tantamount to checking that $P\{\zeta > t\} > 0$  if and only if $ Q\{\zeta > t\} > 0$, for all $\tir$. On $(\Omega, \Fc)$, one can prescribe probabilities endowing any given law to $\zeta$; letting $P$ be such that $P\{\zeta < \infty\} = 0$ and $Q$ be such that $Q\{\zeta > t\} = \exp(-t)$ for $\tir$, it follows that $P$ is a probability on $\pare{\set{t < \zeta}, \Fc_t \cap \set{t < \zeta}}$ for all $t \in (0, \infty)$, while $Q$ is a strict sub-probability. Note also that the probabilities $P$ and $Q$ fail to be equivalent on $\Fc_t$ whenever $t \in (0, \infty)$; indeed, $P\{\zeta \leq t\} = 0$ and $Q\{\zeta \leq t\} > 0$ hold for all $t \in (0, \infty)$.
\end{example}

We refer to Section~\ref{sec: follmer measure} for further discussions on prior-to-$\zeta$ equivalence and proceed with the relevant concept of a local martingale measure. 

\begin{definition} \label{defn:ELMM}
Fix $P \in \Prob$. A probability $Q$ on $(\Omega, \Fc)$ is a \emph{prior-to-$\zeta$ equivalent local martingale measure  corresponding to $P$} if $Q \simz P$ and there exists a nondecreasing sequence $({\tau}_n)_{\nin}\subset \Stop_+$ such that
\begin{enumerate} 
	\item $\tau_n < \zeta$ for all $\nin$ and $\limn \tau_n = \zeta$ hold $Q$-a.s.,
	\item $(S_{t\wedge \tau_n })_{\tir}$ is an $(\bF_+, Q)$-martingale for all $\nin$.
\end{enumerate}
The class of all such probabilities $Q$ will be denoted by $\Qprob^P$.
\end{definition}

What follows is the main result of this section, the fundamental theorem of asset pricing. In the present incarnation, it states that the condition $\NAP$ of Definition~\ref{def: NA1(Pc)} holds if and only if we can find (at least) one prior-to-$\zeta$ equivalent local martingale measure for each possible model $P\in \cP$. 

\begin{theorem}\label{thm: FTAP}
Condition $
\NAP$ holds if and only if $\Qprob^P \neq \emptyset$ for all $P \in \Prob$.
\end{theorem}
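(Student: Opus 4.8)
The plan is to prove the two implications separately, with the bulk of the work going into the direction $\NAP \Rightarrow \Qprob^P \neq \emptyset$ for each fixed $P$.

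\medskip

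\textbf{The easy direction ($\Qprob^P\neq\emptyset$ for all $P$ implies $\NAP$).}
Fix $T$ and $f\in\Lzp(\Fc_T)$ with $v^{\mathrm{simp}}(T,f)=0$; I want to conclude $f=0$ $\Prob$-q.s., i.e.\ $P\{f>0\}=0$ for each $P\in\Prob$. Fix $P$ and pick $Q\in\Qprob^P$ with localizing sequence $(\tau_n)$. For any $\eps>0$ there is $H\in\Hs(\eps)$ with $X^{\eps,H}_T\geq f\geq 0$ $\Prob$-q.s. Since $H$ is simple and $(S_{t\wedge\tau_n})$ is an $(\bF_+,Q)$-martingale, the stopped wealth $X^{\eps,H}_{\cdot\wedge\tau_n}$ is a nonnegative $Q$-martingale (simple integrands preserve the martingale property against a martingale), hence $E^Q[X^{\eps,H}_{T\wedge\tau_n}]=\eps$. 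By Fatou (using $\tau_n\uparrow\zeta$ $Q$-a.s.\ and $Q$-a.s.\ continuity of $S$ before $\zeta$, so $X^{\eps,H}_{T\wedge\tau_n}\to X^{\eps,H}_{T}\1_{\{\zeta>T\}}+ (\lim_{t\uparrow\zeta}X^{\eps,H}_t)\1_{\{\zeta\le T\}}$ with the limit nonnegative),
\[
E^Q\big[X^{\eps,H}_{T}\1_{\{\zeta>T\}}\big]\leq \eps,
\qquad\text{hence}\qquad
E^Q\big[f\1_{\{\zeta>T\}}\big]\leq \eps .
\]
Letting $\eps\downarrow 0$ gives $E^Q[f\1_{\{\zeta>T\}}]=0$, so $f\1_{\{\zeta>T\}}=0$ $Q$-a.s. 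Now I use $Q\simz P$: the event $\{f>0\}\cap\{\zeta>T\}$ lies in $\Fc_T\cap\{T<\zeta\}$ (after intersecting appropriately), and it is $Q$-null, hence $P$-null by $P\domz Q$; since $P\{\zeta=\infty\}=1$, the event $\{f>0\}$ is itself $P$-null. As $P\in\Prob$ was arbitrary, $f=0$ $\Prob$-q.s.

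\medskip

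\textbf{The hard direction ($\NAP$ implies $\Qprob^P\neq\emptyset$ for each $P$).}
Fix $P\in\Prob$. The first step is to transfer the quasi-sure no-arbitrage condition into a statement about the single measure $P$: I claim $\NAP$ implies that $\NA$ (no arbitrage of the first kind for simple strategies, in the sense of \cite{Kardaras.10}) holds under $P$ for the process $S$. Indeed, if $v^{\mathrm{simp},P}(T,f)=0$ for some $f\in L^0_+(\Fc_T)$ with $P\{f>0\}>0$, then (enlarging $f$ to be $\cP$-measurable by setting it unchanged — note $f$ is already $\Fc_T$-measurable) the same $f$ witnesses a failure of $\NAP$, because a $\cP$-q.s.\ superhedge is in particular a $P$-a.s.\ superhedge, so $v^{\mathrm{simp}}(T,f)\le v^{\mathrm{simp},P}(T,f)=0$ while $f\ne 0$ $\cP$-q.s. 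Hence $S$ satisfies $\NA$ under $P$. By the single-measure theory of \cite{Kardaras.10} (its version of the fundamental theorem for $\NA$ with simple strategies on a general filtered space), $\NA$ under $P$ is equivalent to the existence of a strictly positive local martingale deflator $Y$ for $S$ under $P$: a strictly positive $(\bF_+,P)$-local martingale $Y$ with $Y_0=1$ such that $YS$ is an $(\bF_+,P)$-local martingale; in particular $S$ is a $P$-semimartingale.

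The second step converts the deflator into a genuine measure via F\"ollmer's exit-measure construction on the path space $\Omega$ — this is exactly why the cemetery $\triangle$ and the lifetime $\zeta$ were built into the setup, and why the appendix on F\"ollmer's exit measure is invoked. Write $Y$ as a local martingale with localizing sequence $(\sigma_n)$, $\sigma_n\uparrow\infty$ $P$-a.s. Normalizing $E^P[Y_0]=1$, the F\"ollmer measure $Q$ associated to $Y$ on $(\Omega,\Fc)$ is characterized by $dQ|_{\Fc_{\sigma_n}} = Y_{\sigma_n}\, dP|_{\Fc_{\sigma_n}}$ on $\{\sigma_n<\zeta\}$, with the possible loss of mass absorbed at $\triangle$; concretely $\tau_n:=\sigma_n$ (or a subsequence) satisfies $\tau_n<\zeta$ and $\tau_n\uparrow\zeta$ $Q$-a.s., and on $\{t<\zeta\}$ one has $dQ|_{\Fc_t} = Y_t\,dP|_{\Fc_t}$. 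From the strict positivity of $Y$ one reads off $Q\simz P$: for $A\in\Fc_t\cap\{t<\zeta\}$, $Q(A)=E^P[Y_t\1_A]$, which is $0$ iff $P(A)=0$ since $Y_t>0$ $P$-a.s. Finally, since $YS$ is a $P$-local martingale and $Y$ implements the density change, the Bayes/Girsanov rule for F\"ollmer measures yields that $(S_{t\wedge\tau_n})$ is an $(\bF_+,Q)$-martingale for each $n$ (after replacing $(\tau_n)$ by a further localizing subsequence that also makes the stopped process a true martingale, which is possible by the usual argument since $S_{\cdot\wedge\tau_n}$ is a nonnegative — or locally bounded — $Q$-local martingale). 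Hence $Q\in\Qprob^P$.

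\medskip

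\textbf{Main obstacle.}
The delicate point is the second step: the passage from a strictly positive local martingale deflator $Y$ under $P$ to a measure $Q$ on $(\Omega,\Fc)$ that is prior-to-$\zeta$ equivalent to $P$ \emph{and} makes $S$ a local martingale up to $\zeta$. One must ensure that the F\"ollmer exit measure exists and lives on exactly the path space $\Omega$ (c\`adl\`ag before jumping to $\triangle$, constant after), that the localizing times $(\tau_n)$ can be chosen inside $\Stop_+$ with $\tau_n\uparrow\zeta$ $Q$-a.s.\ and not merely $\tau_n\uparrow\infty$ on the non-exploding part, and that the martingale property — not just the local martingale property — of $S_{\cdot\wedge\tau_n}$ under $Q$ holds after a suitable choice of localizing sequence. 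Establishing these measure-theoretic facts on the particular path space is what the appendix results on F\"ollmer's exit measure are designed to supply, and quoting them carefully (rather than the partial-loss-of-mass heuristics) is where the real care is needed.
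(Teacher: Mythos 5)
Your easy direction and your outline of Step 2 of the hard direction (deflator $\Rightarrow$ F\"ollmer exit measure $\Rightarrow$ prior-to-$\zeta$ equivalent local martingale measure, with the localization issues you flag handled by the appendix results) are essentially the paper's argument. But there is a genuine gap at the very first step of the hard direction: your reduction from $\NAP$ to the single-measure condition ${\rm NA}_1(\{P\})$ is wrong as written. From ``a $\cP$-q.s.\ superhedge is in particular a $P$-a.s.\ superhedge'' you can only conclude $v^{{\rm simp},P}(T,f)\le v^{{\rm simp}}(T,f)$, i.e.\ exactly the opposite inequality to the one you use; this observation proves the \emph{converse} implication (${\rm NA}_1(\{P\})$ for all $P$ implies $\NAP$), not the one you need. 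A strategy witnessing $v^{{\rm simp},P}(T,g)=0$ superhedges $g$ only $P$-a.s.\ and, more importantly, its wealth is required to be nonnegative only $P$-a.s., so it need not even belong to $\Hs(x)$; hence a failure of ${\rm NA}_1(\{P\})$ does not directly produce a failure of $\NAP$.

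The missing idea is the stopping construction in the paper's Step 2, which is precisely where the continuity of $S$ enters (the paper stresses that the whole theorem fails without it). Given $H^n\in\cH^{{\rm simp},P}(1/n)$ with $X^{1/n,H^n}_T\geq g$ $P$-a.s., one sets $\tau^n=\inf\{t: X^{1/n,H^n}_t<0\}$ and $G^n=H^n\1_{\dbraoc{0,\tau^n}}$: by $\cP$-q.s.\ continuity of the paths of $S$, the stopped wealth $X^{1/n,G^n}$ is nonnegative $\cP$-q.s., so $G^n\in\Hs(1/n)$, while $\tau^n=\infty$ $P$-a.s.\ means nothing changes under $P$. Then $f:=\inf_n X^{1/n,G^n}_T$ satisfies $v^{{\rm simp}}(T,f)=0$ by construction and $f\geq g$ $P$-a.s., so $P\{f>0\}>0$, contradicting $\NAP$. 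Without this (or an equivalent device) your argument does not get off the ground, since all of the subsequent single-measure machinery (the Emery-closure results used to obtain the $P$-semimartingale property and to pass from simple to general admissible integrands before invoking the deflator theorem — a step you also compress considerably) is applied under ${\rm NA}_1(\{P\})$, which you have not established.
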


We  emphasize that this result necessitates the continuity of $S$; it is to be compared to the discrete-time case of \cite{BouchardNutz.13}. The following is a direct consequence of the theorem, but will actually be established in the course of its proof.

\begin{corollary} \label{co:FTAP}
  Let $\NAP$ hold. Then $S$ is a semimartingale under each $P\in \cP$.
\end{corollary}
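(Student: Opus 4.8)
The plan is to read the corollary off from the nontrivial implication ``$\NAP\Rightarrow\Qprob^P\neq\emptyset$'' of Theorem~\ref{thm: FTAP} (the semimartingale property is in any case an ingredient one has to establish en route to that implication). So I would fix $P\in\cP$, pick $Q\in\Qprob^P$, and let $(\tau_n)\subset\Stop_+$ be an associated localizing sequence as in Definition~\ref{defn:ELMM}: $\tau_n<\zeta$ and $\tau_n\uparrow\zeta$ hold $Q$-a.s., and each $S^{\tau_n}:=(S_{t\wedge\tau_n})_{\tir}$ is an $(\bF_+,Q)$-martingale. The three things to show are that $P$ is locally absolutely continuous with respect to $Q$, that each $S^{\tau_n}$ is therefore a $P$-semimartingale, and that $\tau_n\uparrow\infty$ $P$-a.s.; then $S$, being locally a $P$-semimartingale, is a $P$-semimartingale.

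First I would record that $P\ll Q$ on $\Fc_t$ for every $\tir$: if $A\in\Fc_t$ with $Q(A)=0$, then $A\cap\{t<\zeta\}$ lies in $\Fc_t\cap\{t<\zeta\}$ (as $\{t<\zeta\}=\{B_t=\triangle\}^c\in\Fc_t$) and has null $Q$-measure, so $P(A\cap\{t<\zeta\})=0$ by $P\domz Q$, while $P(A\cap\{t\ge\zeta\})\le P\{\zeta\le t\}=0$ because $\zeta=\infty$ $P$-a.s.\ (recall $\zeta=\infty$ $\cP$-q.s.); hence $P(A)=0$. Being a $Q$-martingale, each $S^{\tau_n}$ is a $Q$-semimartingale, so by the classical fact that the semimartingale property is preserved under a locally absolutely continuous change of probability measure it is a $P$-semimartingale; throughout, semimartingale statements are read with respect to the $P$-augmentation of $\bF_+$, which is harmless here.

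Next I would show $\tau_n\uparrow\infty$ $P$-a.s. Fix $k\in\N$ and set $A:=\{\sup_n\tau_n\le k\}=\bigcap_n\{\tau_n\le k\}\in\Fc_{k+}$. For each $m$, $A\cap\{k+\tfrac1m<\zeta\}\in\Fc_{k+\tfrac1m}\cap\{k+\tfrac1m<\zeta\}$, and this set is $Q$-null: on it one would have $\zeta=\lim_n\tau_n=\sup_n\tau_n\le k<k+\tfrac1m<\zeta$ (outside the $Q$-null set where $\tau_n\uparrow\zeta$ fails). Hence $P(A\cap\{k+\tfrac1m<\zeta\})=0$ by $P\domz Q$, and letting $m\to\infty$ (so that $\{k+\tfrac1m<\zeta\}\uparrow\{k<\zeta\}$) gives $P(A\cap\{k<\zeta\})=0$; since $\zeta=\infty$ $P$-a.s., $P(A)=0$, so $P\{\sup_n\tau_n<\infty\}=0$. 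As $S$ is $P$-a.s.\ continuous, $S_t=S^{\tau_n}_t$ for $t\le\tau_n$, each $S^{\tau_n}$ is a $P$-semimartingale, and $\tau_n\uparrow\infty$ $P$-a.s., the standard localization theorem for semimartingales yields that $S$ is a $P$-semimartingale; $P\in\cP$ being arbitrary, the corollary follows.

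The one delicate point is the mismatch caused by $\zeta$ (finite under $Q$, infinite under $P$): one must make sure the $Q$-localizing sequence $(\tau_n)$ is still $P$-a.s.\ unbounded, and this is precisely where prior-to-$\zeta$ absolute continuity enters, together with minor right-continuity bookkeeping for $\bF_+$-stopping times. A self-contained route that avoids Theorem~\ref{thm: FTAP} altogether is to prove directly that $\NAP$ forces, for each $P\in\cP$ and $T\in\R_+$, the family $\{(H\sint S)_T:H\text{ simple predictable},\ \|H\|_\infty\le1\}$ to be bounded in $L^0(P)$, and then to invoke the Bichteler--Dellacherie characterization of semimartingales; here the $\cP$-q.s.\ continuity of $S$ is what lets one stop a strategy at the first exit time of its gains process from an interval and still obtain a bona fide simple strategy with nonnegative wealth, and this is presumably the form in which the statement occurs inside the proof of Theorem~\ref{thm: FTAP}.
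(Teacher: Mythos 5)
Your argument is correct, but it proves the corollary by a genuinely different route than the paper. The paper does not deduce the semimartingale property from the existence of $Q\in\Qprob^P$; it establishes it \emph{inside} Step~3 of the proof of Theorem~\ref{thm: FTAP}, directly from $\NA(P)$: by \cite[Theorem 1.3]{Kardaras.13} the set of nonnegative simple wealth processes consists of $P$-semimartingales (an \'Emery-closure result in the spirit of Bichteler--Dellacherie), and a localization/integration argument using the $P$-local boundedness of $S$ then gives that $S$ itself is a $P$-semimartingale --- all of this \emph{before} any martingale measure exists, since the semimartingale property is needed to pass to general integrands and produce the deflator $Y$ that generates $Q$ via F\"ollmer's construction. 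Your route instead takes the theorem as a black box: you pick $Q\in\Qprob^P$, show $P\ll Q$ on each $\cF_{t}$ (hence on $\cF_{t+}$) from $P\domz Q$ and $\zeta=\infty$ $P$-a.s., transfer the semimartingale property of each $S^{\tau_n}$ from $Q$ to $P$ by Girsanov-type stability under locally absolutely continuous changes of measure, prove $\tau_n\uparrow\infty$ $P$-a.s.\ (your argument here is a hands-on rederivation of Remark~\ref{rem:localization_under_both}), and conclude by localization. This is a legitimate, non-circular proof of the corollary as stated --- the paper itself calls it ``a direct consequence of the theorem'' --- and it is arguably more self-contained at the level of the corollary; what it cannot do is replace the internal argument in the theorem's proof, where the semimartingale property is a prerequisite rather than a consequence. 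Your closing sketch (a $\mathbf{L}^0(P)$-boundedness/Bichteler--Dellacherie argument, using the $\cP$-q.s.\ continuity of $S$ to stop strategies at exit times) correctly anticipates the mechanism the paper actually uses via \cite{Kardaras.13}, though the paper routes it through the closure of the simple wealth-process set in the semimartingale topology rather than through the classical Bichteler--Dellacherie theorem itself.
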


To be precise, we should indicate a filtration in the above statement. In fact, the $P$-semimartingale property holds equivalently in any of the filtrations $\F$, $\F_{+}$ or $\F_{+}^{P}$  (the $P$-augmentation of $\F_{+}$), or more generally in any intermediate filtration $\F\subset \G \subset \F_{+}^{P}$; see, e.g., \cite[Proposition~2.2]{NeufeldNutz.13a}. We shall use this fact in Section~\ref{sec: superhedging duality}.

\begin{proof}[Proof of Theorem~\ref{thm: FTAP}]
\emph{Step 1.} We first prove the easy implication; that is, we assume that $\Qprob^P \neq \emptyset$ for all $P\in\cP$.  
Fix $T \in \Real_+$ and $f \in \Lzp(\Fc_T)$ with $v^{{\rm simp}}(T, f) = 0$. Moreover, let $P\in\cP$ be arbitrary but fixed; we need to show that $f=0$ $P$-a.s.

Indeed, let $\cX^{{\rm simp}}$ be the class of all processes of the form $X^{x, H}$ for $x \in \Real_+$ and $H \in \cH^{\rm simp}(x)$. By assumption, there exists some  $Q\in \Qprob^{P}$. Let $(\tau_n)_{\nin}$ be the localizing sequence appearing in Definition~\ref{defn:ELMM}. Since the stopped process $S_{\cdot \wedge \tau_n}$ is a $Q$-martingale, it follows that $X_{\cdot \wedge \tau_n}$ is a local $Q$-martingale for all $X \in \cX^{{\rm simp}}$ and $\nin$. A straightforward argument then shows that $X \1_{[\![0, \zeta[\![}$ is a $Q$-supermartingale for all $X \in\cX^{{\rm simp}}$.

Let $X^n\in \cX^{{\rm simp}}$ be such that $X_0^n = 1/n$ and $X^n_T \geq f$ $\Prob$-q.s., then the above supermartingale property yields that 
\[
  E^Q [f \1_{T < \zeta}] \leq E^Q[X^n_T \1_{T < \zeta}] \leq E^Q[X^n_0 ] = 1/n,\quad n\geq1.
\]
Therefore, $E^Q [f \1_{T < \zeta}] = 0$ which implies that $Q \{f > 0, T < \zeta\} = 0$. Since $Q \simz P$ and $\zeta=\infty$ $P$-a.s., it follows that $P \{f > 0\} = 0$. This completes the proof of the ``if'' implication in Theorem~\ref{thm: FTAP}.

\emph{Step 2.}  The converse implication will be established through a third equivalent condition. To this end, consider $\NA(P):=\NA(\{P\})$ for a fixed $P\in\cP$; that is, the condition that 
  \[
\forall T \in \Real_+ \text{ and } f \in  \Lzp(\Fc_T), \quad v^{{\rm simp},P}(T, f) = 0 \;\Longrightarrow\; f = 0 \; P \text{-a.s.},
\] 
where
\[
v^{{\rm simp},P}(T, f) = \inf \big\{ x \in \Real_+ : \, \exists H \in \cH^{{\rm simp},P}(x) \text{ with } X^{x, H}_T \geq f \, P \text{-a.s.}\big\}
\]
and $\Hcal^{{\rm simp},P}(x)$ is the class of all simple predictable processes $H$ such that $X^{x, H}$ is nonnegative $P$-a.s.
We claim that 
\begin{equation} \label{eq: na_1_easy}
\text{$\NAP$ holds~~~~if and only if~~~~$\NA(P)$ holds for all $P \in \Prob$.}
\end{equation}
Indeed, the observation that $\cH^{\rm simp}(x) \subseteq \Hcal^{{\rm simp},P}(x)$ shows that  the validity of $\NA(P)$ for all $P \in \Prob$ implies $\NAP$. To see the converse, suppose that there exists $P \in \Prob$ such that $\NA(P)$ fails. Then, there are $T \in \Real_+$ and $g \in  \Lzp(\Fc_T)$ such that $v^{{\rm simp},P}(T, g) = 0$ and $P\{ g > 0\}>0$. That is, for any $\nin$ there exists $H^n \in \Hcal^{{\rm simp},P} (1/n)$ such that $X^{1/n, H^n}_T \geq g$ $P$-a.s. Define
\[
  \tau^n = \inf \big \{ \tir :\, X^{1/n, H^n}_t < 0 \big \}\in \Stop_+,\quad G^{n} = H^n \1_{\dbraoc{0, \tau^n}}.
\]
Then $\tau^n \in \Stop_+$ as the paths of $S$ are right-continuous and thus $G^{n}$ is a simple predictable strategy. Since $\tau^{n}=\infty$ $P$-a.s., we have $G^{n} = H^n$ $P$-a.s.; in particular, $G^{n}$ still satisfies $X^{1/n, G^n}_T \geq g$ $P$-a.s. In addition, the definition of $\tau^{n}$ guarantees that $X^{1/n, G^n}$ is nonnegative $\cP$-q.s.---the continuity of $S$ is crucial in this step. Consider 
\[
  f := \inf_{\nin} X^{1/n, G^n}_T \in \Lzp(\Fc_T)
\]
and note that $v^{{\rm simp}}(T, f) = 0$ holds by definition. Moreover, we have $f \geq g$ $P$-a.s.\ and thus $P\{f  > 0\} > 0$, contradicting $\NAP$. Therefore, \eqref{eq: na_1_easy} has been established.

\emph{Step 3.} In view of \eqref{eq: na_1_easy}, it remains to show that $\NA(P)$ implies $\cQ^{P}\neq\emptyset$, for arbitrary but fixed $P\in\cP$. Thus, we are essentially in the realm of classical stochastic analysis and finance; in particular, we may use the tools in the Appendix as well as \cite{Kardaras.10,Kardaras.13}. 

Define $\cX^{{\rm simp},P}$ as the class of all processes of the form $X^{x, H}$ for $x \in \Real_+$ and $H \in \cH^{{\rm simp},P}(x)$. The set $\{X \in \cX^{{\rm simp},P}:\, X_0 = 1 \}$ has the essential properties of \cite[Definition 1.1]{Kardaras.13} needed to conclude that $\cX^{{\rm simp},P}$ consists of $P$-semimartingales, see \cite[Theorem 1.3]{Kardaras.13}, and that (the immediate extension of) condition  $\NA(P)$ is also valid for the closure $\cX^{P}$ of $\cX^{{\rm simp},P}$ in the $P$-semimartingale topology; see \cite[Remark 1.10]{Kardaras.13}. In particular, a standard localization and integration argument (using local boundedness of $S$ under~$P$) shows that $S$ is itself a $P$-semimartingale. 

The set $\cX^{P}$ coincides with the class of all $P$-a.s.\ nonnegative stochastic integrals of $S$ under $P$, using general predictable and $S$-integrable integrands. This is seen by using density (in the semimartingale topology) of simple stochastic integrals with respect to general stochastic integrals, as well as a stopping argument which again uses that $S$ has continuous paths $P$-a.s.
As a result, using condition $\NA(P)$ for $\cX^{P}$, we infer the existence of a strictly positive $(\bF_+, P)$-local martingale $Y$ with $Y_0 = 1$ such that $Y S$ is an $(\bF_+, P)$-local martingale; cf.\ \cite[Theorem~4]{Kardaras.10}. We can now use Theorem~\ref{thm: foelmeasure_1} in the Appendix to construct a probability $Q \simz P$ such that $Y$ is the prior-to-$\zeta$ density of $Q$ with respect to $P$. Using the facts that $Y S$ is an $(\bF_+, P)$-local martingale, $\zeta$ is foretellable under $Q$ (for the latter, see Definition~\ref{defn: foretellable} and Theorem~\ref{thm: foelmeasure_1} in the Appendix) and Remark~\ref{rem:localization_under_both}, we can construct the required $\Stop_+$-valued sequence $(\tau_n)_{\nin}$ such that $S_{\cdot \wedge \tau_n}$ is an $(\bF_+, Q)$-martingale for all $\nin$. The last fact translates to $Q \in \Qprob^P$ and concludes the proof.
\end{proof}


\section{Dynamic Programming Properties of Prior-to-$\zeta$ Supermartingale Measures}\label{sec: dyna prog and local mart}

For our proof of the superhedging theorem in Section~\ref{sec: superhedging duality}, it will be crucial to know that the set of (super-)martingale measures satisfies certain dynamic programming properties. In this section, we impose assumptions on the set~$\cP$ which is the primary object of our model, and show how these properties are inherited by the corresponding set of supermartingale measures.
%
%

\subsection{Additional Assumptions and Notation}\label{subsec: add condi and notations}

From now on, we assume that the Polish space $E$ is a topological vector space and that the paths $\omega\in\Omega$ start at the point $x_{*}=0\in E$.

For $x,y\in \bar E$, we use the convention $x+y=\triangle$ if $x=\triangle$ or $y=\triangle$.
Let $t\geq0$. Given $\omega,\tilde\omega\in\Omega$, we set 
 $$
 (\omega\otimes_{t}\tilde \omega)_{s}=\omega_{s}\1_{[0,t)}(s)+ (\omega_{t}+\tilde \omega_{s-t})\1_{[t,\infty)}(s).
 $$
 Given also a process $Z$,  we define
 $$
 Z_{s}^{t,\omega}(\tilde \omega):=Z_{t+s}(\omega\otimes_{t}\tilde \omega), \quad s\ge 0; 
 $$
 note that a shift in the time variable is part of our definition. We view a random variable $\xi$ as a process which is constant in time, so that 
 $$
 \xi^{t,\omega}(\tilde \omega):=\xi(\omega\otimes_{t}\tilde \omega). 
 $$
  We denote by $\fP(\Omega)$ the collection of all probability measures on $\Omega$, equipped with the topology of weak convergence.  Given a probability $R \in \fP(\Omega)$, we  define $R^{t,\omega}$ by 
 $$
 R^{t,\omega}(A)=R_t^\omega\{\omega \otimes_{t} \tilde \omega: \tilde \omega \in A\}, \quad A\in \cF, 
 $$
 where $R_t^\omega$ is a regular conditional distribution of $R$ given $\cF_{t}$  satisfying
 $$
 R_t^\omega \{\omega'\in \Omega: \omega'=\omega\mbox{ on } [0,t]\} =1, \quad \omega\in \Omega. 
 $$
 The existence of $R_t^\omega$ is guaranteed by the fact that $\cF_{t}$ is countably generated; cf.\ Lemma~\ref{le:OmegaPolish} and \cite[Theorem~1.1.8 and p.\,34]{StroockVaradhan.79}. It then follows that
 \begin{equation}\label{eq:conditioning}
  E^{R^{t,\omega}}[\xi^{t,\omega}]=E^{R^{\omega}_{t}}[\xi]=E^{R}[\xi|\cF_{t}](\omega)\quad\mbox{for $R$-a.e.\ $\omega\in\Omega$}.
 \end{equation}

We shall assume that our set $\cP$ admits a family of $(t,\omega)$-conditional models. More precisely, we start with a family
$\{\cP_{t}(\omega):\,  t\in\R_{+},\,\omega \in \Omega\}$ of subsets of $\fP(\Omega)$  which is  adapted in the sense that $\cP_{t}(\omega)=\cP_{t}(\tilde\omega)$ if $\omega|_{[0,t]}=\tilde\omega|_{[0,t]}$. In particular, $\cP_{0}=\cP_{0}(\omega)$ is independent of $\omega$. We impose the  following structural conditions---compare with \cite{NeufeldNutz.12,NutzVanHandel.12} in the case $\zeta\equiv \infty$.

\begin{definition}\label{def : analytic and stable family} An adapted  family $\{\cR_{t}(\omega):\,  t\in\R_{+},\,\omega \in \Omega\}$ of subsets of $\fP(\Omega)$  is \emph{analytic and stable prior to $\zeta$} if the following hold
for all  $ t\ge s\ge0$, $\bar \omega \in \Omega$ and $  R\in \cR_{s}(\bar \omega)$.
  \begin{enumerate}

 \item[(A1)]   $\{(R',\omega): \omega \in \Omega, R'\in \cR_{t}(\omega)\}\subset \fP(\Omega)\times \Omega$ is analytic\footnote{The definition of an analytic set is recalled in Section~\ref{sec:measureTheory} of the Appendix.}.

 \item[(A2)] $R^{t-s,\omega} \in \cR_{t}(\bar \omega \otimes_{s} \omega)$ for $R$-a.e.~$\omega \in \{\zeta^{s,\bar \omega}>t\}$. 
 
 \item[(A3)] If $\nu:\Omega\mapsto \fP(\Omega)$ is an $\cF_{t-s}$-measurable kernel and $\nu(\omega)\in \cR_{t}(\bar \omega\otimes_{s} \omega)$ for $R$-a.e.~$\omega \in \{\zeta^{s,\bar \omega}>t\}$, then the measure defined by 
 $$
\bar R(A):=\iint (\1_{A})^{t-s,\omega}(\omega')\,\nu^{R}(d\omega';\omega)\,R(d\omega),\quad A\in \cF,
 $$
 $$
 \mbox{where}\quad  \nu^{R}(\omega):=\nu(\omega)\1_{\{\zeta^{s,\bar \omega}>t\}}(\omega) + R^{t-s,\omega}\1_{\{\zeta^{s,\bar \omega}\leq t\}}(\omega),
 $$
 belongs to $\cR_{s}(\bar \omega)$.
 \end{enumerate}
\end{definition}

Condition (A1) is of technical nature; it will be used for measurable selection arguments. Conditions~(A2) and~(A3) are natural consistency conditions, stating that the family is stable under ``conditioning'' and ``pasting.'' 
  
\begin{assumption}\label{ass : Pc(t,omega)} 
  We have $\cP=\cP_{0}$ for a family $\{\cP_{t}(\omega):\, t\in\R_{+},\,\omega \in \Omega\}$ which is analytic and stable prior to $\zeta$. Moreover, $S^{t,\omega}$ is $\cP_{t}(\omega)$-q.s.\ con\-tinuous  prior to $\zeta^{t,\omega}-t$, for all $t\in\R_{+}$ and $\omega \in \Omega$.
\end{assumption}

A canonical example of such a set $\cP$ is the collection of all laws $P$ of It\^o semimartingales $\int_0^\cdot \alpha_u \,du + \int_0^\cdot \sigma_u \,dW_u$, each one situated on its own probability space with a Brownian motion $W$, drift rate $\alpha$ valued in a given measurable set $A\subset \R^{d}$, and volatility $\sigma$ such that $\sigma\sigma^{\top}$ is valued in a given measurable set $\Sigma$ of positive definite $d\times d$ matrices. In this case, we can take $\cP_{t}(\omega)=\cP$ for all $(t,\omega)$ because the sets $A$ and $\Sigma$ are constant; cf.~\cite{NeufeldNutz.13b}. The continuity condition is clearly satisfied for the canonical choice $S=B$ and then $\NA(\cP)$ holds, for instance, when $A$ and $\Sigma$ are compact.
%
%

\subsection{Prior-to-$\zeta$ Supermartingale Measures  }\label{sec: analytic and stable}
 
For technical reasons, it will be convenient to work with supermartingale (rather than local martingale)  measures in what follows. The purpose of this section is to define a specific family of supermartingale measures satisfying the conditions of Definition~\ref{def : analytic and stable family}; it will be used to construct the optimal strategy in the superhedging theorem (Theorem~\ref{th:duality}). We first need to define a conditional notion of prior-to-$\zeta$ absolute continuity. 

\begin{definition}
  Let $(t,\omega)\in \R_{+}\times\Omega$ and $P,Q\in \fP(\Omega)$. We write $Q\domztom P$ (with some abuse of notation) if
 \[
   Q\ll P \quad\mbox{on}\quad \cF_{s}\cap \{s<\zeta^{t,\omega}-t\},\quad s\in\R_{+}.
 \]
\end{definition}

We also need to consider wealth processes conditioned by $(t,\omega)\in\R_{+}\times \Omega$. More precisely, let  
\begin{equation}\label{eq:defXsimple}
  \cX_{t}^{\rm simp}(\omega) := \big\{1 +   (H\sint S^{t,\omega})^{\tau^{n}_{H,S^{t,\omega}}}:\,H\in\cH^{\rm simp} , \, n\in\N\big\},
\end{equation}
where $\cH^{\rm simp}$ is the set of all simple predictable processes
and 
$$
  \tau^{n}_{H,S^{t,\omega}}:=\inf\big\{ s\ge 0  : (H\sint S^{t,\omega})_{s}\notin (-1,n) \big\}. 
$$
Here the stopping at $-1$ corresponds to the nonnegativity of the wealth process, whereas the stopping at $n$ is merely for technical convenience. The point in this specific definition of $\cX_{t}^{\rm simp}(\omega)$ is to have a tractable dependence on $\omega$; in this respect, we note that the set $\cH^{\rm simp}$  is independent of $\omega$.
 
\begin{definition}\label{def: super martin abs measures}
  Let $(t,\omega)\in \R_{+}\times\Omega$ and $P\in \fP(\Omega)$. We  introduce the sets
 \[
   \Pa{\zeta^{t,\omega}}{P}=\{Q\in \fP(\Omega):\, Q \domztom P\},
 \]
	\[
	  \cY_{t}(\omega)=\big\{ Q\in\fP(\Omega):\, X\1_{[\![0,\zeta^{t,\omega}-t[\![} \mbox{ is a $Q$-supermartingale $\forall\; X\in\cX_{t}^{\rm simp}(\omega)$}\big\},
	\]
	\[
	\cQ_{t}(\omega,P)=\Pa{\zeta^{t,\omega}}{P}\cap \cY_{t}(\omega),
	\]
	\[
	  \cQ_{t}(\omega)=\bigcup_{P\in \cP_{t}(\omega)} \cQ_{t}(\omega,P).
	\] 
  The elements of $\cQ_{t}(\omega)$ are called  \emph{prior-to-$\zeta$ absolutely continuous supermartingale  measures} given $(t,\omega)$. 
\end{definition}

We observe that the family $\{ \cQ_{t}(\omega):\, t\in\R_{+},\,\omega \in \Omega\}$ is adapted.   Furthermore, we recall from Theorem~\ref{thm: FTAP} that $\cQ_{0}\neq\emptyset$ under $\NAP$.
In the rest of this subsection, we show that the family $\{ \cQ_{t}(\omega)\}$ inherits from $\{ \cP_{t}(\omega)\}$ the properties of Definition~\ref{def : analytic and stable family}.
    
\begin{proposition}\label{pr:QsatisfiesCondA}
  The family $\{ \cQ_{t}(\omega)\}$ satisfies {\rm (A1)--\rm (A3)}.
\end{proposition}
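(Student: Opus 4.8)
The plan is to verify the three conditions of Definition~\ref{def : analytic and stable family} for the family $\{\cQ_t(\omega)\}$, relying on the fact that $\{\cP_t(\omega)\}$ already satisfies (A1)--(A3) and that the supermartingale condition defining $\cY_t(\omega)$ has a deliberately tractable dependence on $(t,\omega)$. The key structural observations are: (i) membership in $\cY_t(\omega)$ is a countable collection of conditional-expectation inequalities, since $\cX_t^{\rm simp}(\omega)$ is parametrized by $H\in\cH^{\rm simp}$ (independent of $\omega$) and $n\in\N$, and each $X\in\cX_t^{\rm simp}(\omega)$ depends on $\omega$ only through $S^{t,\omega}$; and (ii) the prior-to-$\zeta$ absolute continuity relation $Q\domztom P$ is again a countable family of measure-theoretic conditions indexed by rational $s$, together with measurability of $\zeta^{t,\omega}-t$.

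For \textbf{(A1)}, I would first show that the set $\{(Q,\omega):Q\in\cY_t(\omega)\}$ is Borel (in fact closed in a suitable sense on each stratum), by writing the $Q$-supermartingale property as a countable intersection over $H\in\cH^{\rm simp}$, $n$, and rational times $u\le v$ of conditions of the form $E^Q[X^H_{v\wedge\tau}\1_{\{v<\zeta^{t,\omega}-t\}}\,|\,\cF_u]\le X^H_{u\wedge\tau}\1_{\{u<\zeta^{t,\omega}-t\}}$; each such condition is Borel in $(Q,\omega)$ by standard measurability of conditional expectations in the weak-convergence topology (cf.\ the arguments in \cite{NeufeldNutz.12,NutzVanHandel.12}). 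Next, the set $\{(Q,P,\omega):Q\in\Pa{\zeta^{t,\omega}}{P}\}$ is Borel by expressing $Q\ll P$ on $\cF_s\cap\{s<\zeta^{t,\omega}-t\}$ through a countable generating algebra and using that $\zeta$ is a stopping time, hence $\zeta^{t,\omega}-t$ is jointly measurable. Finally, $\cQ_t(\omega)=\bigcup_{P\in\cP_t(\omega)}\cQ_t(\omega,P)$ is the projection onto $(Q,\omega)$ of the intersection of the (analytic) graph of $\cP_t$ with the two Borel sets above; a projection of an analytic set is analytic, giving (A1).

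For \textbf{(A2)} and \textbf{(A3)}, the strategy is to transfer the structure from $\cP$ to $\cQ$ using the shift identities for $S$ and the conditioning formula \eqref{eq:conditioning}. For (A2): if $Q\in\cQ_s(\bar\omega,P)$ with $P\in\cP_s(\bar\omega)$, then $P^{t-s,\omega}\in\cP_t(\bar\omega\otimes_s\omega)$ for $P$-a.e.\ (hence, since $Q\domz P$ prior to $\zeta$, also $Q$-a.e.) $\omega\in\{\zeta^{s,\bar\omega}>t\}$; one checks that $Q^{t-s,\omega}\domztomega{\bar\omega\otimes_s\omega}{t-s}{}$... more precisely $Q^{t-s,\omega}\ll P^{t-s,\omega}$ prior to $\zeta^{t,\bar\omega\otimes_s\omega}-t$, using the tower property for the prior-to-$\zeta$ densities, and that the conditioned measure inherits the supermartingale property because $\cX_t^{\rm simp}(\bar\omega\otimes_s\omega)$ is the time-shifted image of the relevant part of $\cX_s^{\rm simp}(\bar\omega)$ and a regular conditional distribution of a supermartingale given $\cF_{t-s}$ is again a supermartingale on the restricted horizon. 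For (A3): given a kernel $\nu$ with $\nu(\omega)\in\cQ_t(\bar\omega\otimes_s\omega)$, select (by (A1) and measurable selection) a kernel $\omega\mapsto P'(\omega)\in\cP_t(\bar\omega\otimes_s\omega)$ with $\nu(\omega)\in\cQ_t(\omega,P'(\omega))$; apply (A3) for $\cP$ to $P'$ to obtain $\bar P\in\cP_s(\bar\omega)$, and verify that the correspondingly pasted measure $\bar Q$ satisfies $\bar Q\domz\bar P$ prior to $\zeta$ (the pasting preserves absolute continuity because on $\{\zeta^{s,\bar\omega}\le t\}$ we paste with $Q^{t-s,\omega}$ itself, matching what $\Pa{\zeta^{t,\omega}}{P}$ requires) and that $\bar Q\in\cY_s(\bar\omega)$ (the supermartingale inequality for each $X\in\cX_s^{\rm simp}(\bar\omega)$ splits at time $t$ into the prior-to-$t$ part, controlled by $Q\in\cY_s(\bar\omega)$ after stopping at $\zeta^{s,\bar\omega}\wedge t$, and the posterior part, controlled by $\nu(\omega)\in\cY_t$, glued by the optional sampling/tower argument).

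The main obstacle I anticipate is the bookkeeping in (A3): one must carefully track how a wealth process $X=1+(H\sint S^{s,\bar\omega})^{\tau^n}\in\cX_s^{\rm simp}(\bar\omega)$, when conditioned at time $t$, decomposes and how its post-$t$ increment corresponds to an element of $\cX_t^{\rm simp}(\bar\omega\otimes_s\omega)$ up to the renormalization by the wealth level at time $t$ (which is where the normalization ``$1+\cdots$'' versus a general positive multiple matters), together with correctly handling the event $\{\zeta^{s,\bar\omega}\le t\}$ on which no pasting of the $\cQ_t$-measure occurs and the definition of $\nu^R$ falls back on $R^{t-s,\omega}$. Ensuring the supermartingale inequality is preserved across the paste-time $t$---rather than merely on each side---will require an optional-sampling argument at the stopping time $\zeta^{s,\bar\omega}\wedge t$, and this is the step where the prior-to-$\zeta$ restriction of the horizon (the indicator $\1_{[\![0,\zeta^{t,\omega}-t[\![}$) must be invoked with care.
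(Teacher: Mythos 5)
Your overall route is the paper's own: establish Borel-measurability of the prior-to-$\zeta$ absolute-continuity graph via a countable rational-time characterization and a jointly measurable Radon--Nikodym density, intersect with the analytic $\graph(\cP_t)$ and with a countable family of expectation inequalities encoding the supermartingale property, project to get (A1); use prior-to-$\zeta$ density processes and the tower property for (A2); use measurable selection of a kernel $\mu(\omega)\in\cP_t(\omega)$ with $\nu(\omega)\in\cQ_t(\omega,\mu(\omega))$, pasting, Fubini, and the scaling of the wealth at the paste time for (A3). However, there is one genuine gap: you assert that membership in $\cY_t(\omega)$ is ``a countable collection of conditional-expectation inequalities, since $\cX_t^{\rm simp}(\omega)$ is parametrized by $H\in\cH^{\rm simp}$ and $n\in\N$.'' The class $\cH^{\rm simp}$ of all simple predictable strategies is uncountable (arbitrary $\cF_{\tau+}$-measurable coefficients, arbitrary $\cT_+$-valued times), so this is not a countable parametrization, and an uncountable intersection of Borel sets need not be Borel or even analytic. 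The paper devotes a separate lemma (Lemma~\ref{le:superSimple}) to exactly this point: it constructs a countable family $\tilde\cH\subset\cH^{\rm simp}$ (deterministic rational jump times, rational coefficients, sets from a countable algebra generating $\cF_s$) and a countable family $\tilde\cT$ of stopping times, and shows that testing $E^{Q}[X_{\sigma}\1_{\sigma<\zeta^{t,\omega}-t}]\geq E^{Q}[X_{\tau}\1_{\tau<\zeta^{t,\omega}-t}]$ over $\tilde\cH\times\tilde\cT$ is equivalent to the full supermartingale property over $\cX_t^{\rm simp}(\omega)$. That equivalence is not formal: it rests on an approximation argument (stopping to make the wealth bounded and bounded away from zero, uniform approximation of $H\sint S^{t,\omega}$ by integrands with deterministic jump times, dominated convergence) which uses precisely the $Q$-a.s.\ continuity of $S^{t,\omega}$ prior to $\zeta^{t,\omega}-t$. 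In particular the countable conditions characterize $\cY_t(\omega)$ only on the locus where this continuity holds, which is why the paper identifies the relevant analytic set only \emph{after} intersecting with $\graph(\cP_t)$ and the absolute-continuity graph; your claim that the graph of $\cY_t$ is Borel outright is an overstatement, though harmless once the intersection is taken.

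The same missing lemma is silently used in your (A2) and (A3) steps. In (A2), ``a regular conditional distribution of a supermartingale is again a supermartingale'' yields, for each fixed $X$ and $\sigma\leq\tau$, a $Q$-null set off which the conditional inequality holds; to conclude $Q^{t,\omega}\in\cY_t(\omega)$ for $Q$-a.e.\ $\omega\in\{\zeta>t\}$ you need a single null set working simultaneously for the uncountably many $X\in\cX_t^{\rm simp}(\omega)$ and all stopping times, and this uniformity is exactly what the reduction to $(\tilde\cH,\tilde\cT)$ together with the implication (iii)$\Rightarrow$(i) of Lemma~\ref{le:superSimple} provides. Likewise, in (A3) the paper verifies the supermartingale property of the pasted measure only for the countable class $\tilde\cX_0$ (via Fubini and the scaling observation you correctly anticipate) and then invokes the lemma to upgrade to all of $\cX_0^{\rm simp}$. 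So the architecture of your proposal is sound and coincides with the paper's, but the countable-reduction lemma is a substantive ingredient that must be proved, not assumed.
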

  
The proof is split into the subsequent lemmas.
For ease of reference, we first state the following standard result.
 
\begin{lemma}\label{lem: map omega Q to E is Borel} 
  Let $A$ be a Borel space and let $(a,\omega)\in A\times \Omega \mapsto \xi(a,\omega)\in \R_{+}$ be Borel-measurable. 
 Then,
 $
 (a,R)\in   A\times \fP(\Omega)\mapsto  E^{R}[\xi(a,\cdot)]
 $
 is Borel-measurable.
 \end{lemma}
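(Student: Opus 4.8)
The plan is the standard monotone-class bootstrap, starting from bounded continuous test functions and climbing up to arbitrary nonnegative Borel $\xi$. First I would reduce to the bounded case: since $\xi = \lim_{n\to\infty}(\xi\wedge n)$ pointwise with the convergence monotone, and since $E^{R}[\cdot]$ commutes with increasing limits by monotone convergence while a pointwise limit of Borel maps is Borel, it suffices to prove the claim when $0\le\xi\le 1$. The one external ingredient I would invoke is the classical fact that, $\Omega$ being Polish, the evaluation map $R\mapsto R(G)$ is Borel on $\fP(\Omega)$ for every $G\in\cB(\Omega)$; this itself follows from the continuity of $R\mapsto\int g\,dR$ for bounded continuous $g$ (which is the very definition of the weak topology), an approximation $g_k\uparrow\1_U$ for open $U$, and one Dynkin step for general Borel $G$.

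Granting that, I would run a $\pi$--$\lambda$ argument on the product. Let $\cD$ be the collection of all $C\in\cB(A\times\Omega)$ such that
$$
(a,R)\in A\times\fP(\Omega)\longmapsto \int_{\Omega}\1_{C}(a,\omega)\,R(d\omega)
$$
is Borel-measurable. For a measurable rectangle $C=F\times G$ with $F\in\cB(A)$ and $G\in\cB(\Omega)$, this map equals $\1_{F}(a)\,R(G)$, a product of two Borel functions, so $F\times G\in\cD$. Moreover $\cD$ is a $\lambda$-system: it contains $A\times\Omega$; it is stable under proper differences because $\int\1_{C_{1}\setminus C_{2}}(a,\cdot)\,dR = \int\1_{C_{1}}(a,\cdot)\,dR - \int\1_{C_{2}}(a,\cdot)\,dR$ whenever $C_{2}\subseteq C_{1}$, a difference of Borel functions; and it is stable under increasing unions by monotone convergence. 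Since the measurable rectangles form a $\pi$-system generating $\cB(A\times\Omega)=\cB(A)\otimes\cB(\Omega)$ — here second countability of both $A$ and $\Omega$ is used — Dynkin's theorem gives $\cD=\cB(A\times\Omega)$. Hence $(a,R)\mapsto E^{R}[\1_{C}(a,\cdot)]$ is Borel for every Borel $C$, and by linearity for every nonnegative $\cB(A\times\Omega)$-measurable simple function. Approximating the bounded $\xi$ from below by such simple functions and using monotone convergence once more shows $(a,R)\mapsto E^{R}[\xi(a,\cdot)]$ is Borel, which together with the first reduction finishes the proof.

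I do not expect any genuine obstacle here: the statement is a routine measure-theoretic fact. The only place that deserves care is the identification of the Borel $\sigma$-field of $\fP(\Omega)$ (for the weak topology) with the $\sigma$-field generated by the evaluation maps $R\mapsto R(G)$, together with the fact that $\cB(A\times\Omega)$ coincides with the product $\sigma$-field $\cB(A)\otimes\cB(\Omega)$; both rely on $\Omega$ being Polish (second countable suffices), as established in Lemma~\ref{le:OmegaPolish}. An alternative route, equally routine, would be to start the monotone-class argument directly at the level of functions, using that $(a,R)\mapsto f(a)\int g\,dR$ is Borel for $f$ bounded Borel on $A$ and $g$ bounded continuous on $\Omega$, and that such products generate all bounded Borel functions on $A\times\Omega$; I would present whichever version reads more cleanly.
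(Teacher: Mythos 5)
Your argument is correct and is essentially the proof the paper has in mind: the paper does not spell it out but simply refers to Step~1 in the proof of \cite[Theorem~2.3]{NutzVanHandel.12}, which is exactly this monotone-class bootstrap (Borel measurability of $R\mapsto R(G)$ on $\fP(\Omega)$, a $\pi$--$\lambda$ argument over rectangles in $\cB(A)\otimes\cB(\Omega)=\cB(A\times\Omega)$, then simple functions and monotone convergence). No gaps; the only ingredients you rely on (Polishness of $\Omega$ and countable generation of its Borel $\sigma$-field) are supplied by Lemma~\ref{le:OmegaPolish}.
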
 
 
\begin{proof}
  See, e.g., Step 1 in the proof of \cite[Theorem~2.3]{NutzVanHandel.12}. 
\end{proof}

\begin{lemma}\label{le:superSimple} 
  There exist a countable set $\tilde \cH\subset\cH^{\rm simp}$ and a countable set $\tilde\cT\subset \cT$ of bounded stopping times with the following property:
  
  Given $(t,\omega)\in\R_{+}\times \Omega$ and $Q\in\fP(\Omega)$ such that $S^{t,\omega}$ is $Q$-a.s.\ continuous  prior to $\zeta^{t,\omega}-t$, we have equivalence between
  \begin{enumerate} 
  \item $X\1_{[\![0,\zeta^{t,\omega}-t[\![}$ is a $Q$-supermartingale for all $X\in\cX_{t}^{\rm simp}(\omega)$, 
  \item $X\1_{[\![0,\zeta^{t,\omega}-t[\![}$ is a $Q$-supermartingale for all $X\in\tilde\cX_{t}(\omega)$,
  \item $E^{Q}[X_{\sigma}\1_{\sigma < \zeta^{t,\omega}-t}] \geq E^{Q}[X_{\tau}\1_{\tau < \zeta^{t,\omega}-t}]$ for $X\in\tilde\cX_{t}(\omega)$ and $\sigma\leq \tau$ in $\tilde\cT$,
  \end{enumerate}
  where $\tilde\cX_{t}(\omega)$ is defined like~\eqref{eq:defXsimple} but using only integrands $H\in\tilde \cH$. 
  Moreover, if $S^{t,\omega}\1_{[\![0,\zeta^{t,\omega}-t[\![}$ is a semimartingale under $Q$, the above are equivalent to 
\begin{enumerate}
  \item[(iv)] $X\1_{[\![0,\zeta^{t,\omega}-t[\![}$ is a $Q$-supermartingale for all $X\in\cX_{t}(\omega)$, 
  \end{enumerate}
  where $\cX_{t}(\omega)$ is defined like~\eqref{eq:defXsimple} but using arbitrary predictable integrands.
\end{lemma}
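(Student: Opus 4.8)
The implications $(i)\Rightarrow(ii)\Rightarrow(iii)$ are soft; essentially all the content lies in $(iii)\Rightarrow(i)$ and, under the extra hypothesis, in the equivalence with $(iv)$. First I would fix, once and for all and independently of $(t,\omega,Q)$, a countable algebra $\cA$ generating $\cF$ such that $\cA\cap\cF_{q}$ generates $\cF_{q}$ for every $q\in\Q_{+}$ (possible since each $\cF_{q}$ is countably generated, by Lemma~\ref{le:OmegaPolish}); then take $\tilde\cH$ to be the countable set of those $H\in\cH^{\rm simp}$ whose trading times lie in $\Q_{+}$ and whose amounts are finitely valued in $\Q^{d}$ and $(\cA\cap\cF_{q})$-measurable; and take $\tilde\cT$ to consist of the constant times $q\in\Q_{+}$ together with the bounded $\bF$-stopping times $\sigma_{q,q',A}:=q\1_{A}+q'\1_{A^{c}}$, where $q\le q'$ in $\Q_{+}$ and $A\in\cA\cap\cF_{q}$. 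With these choices, $(i)\Rightarrow(ii)$ is immediate from $\tilde\cH\subset\cH^{\rm simp}$. For $(ii)\Rightarrow(iii)$: the assumption that $S^{t,\omega}$ is $Q$-a.s.\ continuous prior to $\zeta^{t,\omega}-t$ forces each $X=1+(H\sint S^{t,\omega})^{\tau^{n}_{H,S^{t,\omega}}}\in\cX_{t}^{\rm simp}(\omega)$ to satisfy $0\le X\le n+1$ prior to $\zeta^{t,\omega}-t$ (a continuous integral can leave $(-1,n)$ only through its boundary), so $X\1_{[\![0,\zeta^{t,\omega}-t[\![}$ is a nonnegative $Q$-supermartingale under $(ii)$, and optional sampling at the bounded stopping times of $\tilde\cT$ — using $(X\1_{[\![0,\zeta^{t,\omega}-t[\![})_{\rho}=X_{\rho}\1_{\rho<\zeta^{t,\omega}-t}$ — yields $(iii)$.

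For $(iii)\Rightarrow(i)$ the first stage is a \emph{regularity} argument: since $X\1_{[\![0,\zeta^{t,\omega}-t[\![}$ is nonnegative, bounded by $n+1$ and right-continuous, and $\bF_{+}$ is right-continuous, its $Q$-supermartingale property is equivalent to the sampled inequalities $E^{Q}[X_{q'}\1_{q'<\zeta^{t,\omega}-t}\1_{A}]\le E^{Q}[X_{q}\1_{q<\zeta^{t,\omega}-t}\1_{A}]$ for all $q\le q'$ in $\Q_{+}$ and all $A\in\cF_{q}$, passing from rational to arbitrary times by right-continuity of the paths and the filtration together with dominated convergence for conditional expectations; a monotone-class argument then reduces the test sets to $A\in\cA\cap\cF_{q}$. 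For $X\in\tilde\cX_{t}(\omega)$ these countably many inequalities are exactly those of $(iii)$ applied to the pairs $\sigma_{q,q',A}\le q'$ (and $q\le q'$ for $A=\Omega$), so $(iii)$ forces $X\1_{[\![0,\zeta^{t,\omega}-t[\![}$ to be a $Q$-supermartingale for every $X\in\tilde\cX_{t}(\omega)$ — that is, $(ii)$ holds.

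The second stage is a \emph{density} argument upgrading $(ii)$ to $(i)$. Given $H\in\cH^{\rm simp}$ and $n\in\N$, I would approximate $H$ by a sequence $(H^{(k)})\subset\tilde\cH$ — discretizing its $\bF_{+}$-stopping trading times to $\Q_{+}$ and its amounts to the fixed countable data, which is legitimate since the $\bF$- and $\bF_{+}$-predictable $\sigma$-fields coincide and simple processes are pointwise dense among predictable ones — chosen so that $H^{(k)}\sint S^{t,\omega}\to H\sint S^{t,\omega}$ holds $Q$-a.s.\ locally uniformly prior to $\zeta^{t,\omega}-t$; by continuity of $S^{t,\omega}$ there, the truncated wealths $1+(H^{(k)}\sint S^{t,\omega})^{\tau^{n}_{H^{(k)},S^{t,\omega}}}$ then converge $Q$-a.s.\ (along a subsequence, and along a suitable sequence of levels $n\uparrow\infty$) to $1+(H\sint S^{t,\omega})^{\tau^{n}_{H,S^{t,\omega}}}$, and, all of them being bounded by $n+1$, the sampled supermartingale inequalities pass to the limit; this gives the $Q$-supermartingale property for every $X\in\cX_{t}^{\rm simp}(\omega)$, i.e.\ $(i)$. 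Finally, for $(iv)$ under the extra hypothesis that $S^{t,\omega}\1_{[\![0,\zeta^{t,\omega}-t[\![}$ is a $Q$-semimartingale: $\cX_{t}^{\rm simp}(\omega)\subset\cX_{t}(\omega)$ gives $(iv)\Rightarrow(i)$; conversely, every $S^{t,\omega}$-integrable predictable integrand is a limit of simple integrands in the semimartingale topology — the density already invoked in the proof of Theorem~\ref{thm: FTAP} via \cite[Remark~1.10]{Kardaras.13} — and, $S^{t,\omega}$ being continuous prior to $\zeta^{t,\omega}-t$, the corresponding uniformly bounded truncated wealths converge in probability along a subsequence, so the $Q$-supermartingale property carries over from $\cX_{t}^{\rm simp}(\omega)$ to $\cX_{t}(\omega)$, which is $(iv)$.

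The main obstacle, both in the density stage and in the $(iv)$ part, is that the exit time $\tau^{n}_{H,S^{t,\omega}}$ depends discontinuously on $H$: if $H\sint S^{t,\omega}$ meets a barrier $\{-1\}$ or $\{n\}$ tangentially, then $\tau^{n}_{H^{(k)},S^{t,\omega}}$ need not converge to $\tau^{n}_{H,S^{t,\omega}}$. This is precisely why the level-$n$ truncation is built into $\cX_{t}^{\rm simp}(\omega)$ and $\cX_{t}(\omega)$ — it renders all the relevant processes uniformly bounded, so only a.s.\ (or in-probability) convergence is needed, with no uniform-integrability concern — and it is handled by fixing $n$ and perturbing $H$ only where $H\sint S^{t,\omega}$ stays strictly inside $(-1,n)$ on the time windows that matter, or by noting that it suffices to establish the supermartingale property along a sequence of levels $n\uparrow\infty$ while, for each path, such tangencies occur only on a negligible set of levels. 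The remaining bookkeeping — the discretization of the $\bF_{+}$-stopping times and of the integrands to the fixed countable data — is routine.
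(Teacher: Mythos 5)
Your architecture is the same as the paper's: a countable family of simple strategies with rational deterministic times, rational values and test sets from fixed countable generating algebras, a countable family of finitely-valued rational stopping times, the passage from (iii) to the supermartingale property via two-valued sampling, a monotone-class step and right-continuity, and then an approximation step to go from $\tilde\cH$ to all of $\cH^{\rm simp}$ (and, via density of simple integrands in the semimartingale topology, to (iv)). The gap sits exactly where you flag it yourself: the discontinuity of the exit time $\tau^{n}_{H,S^{t,\omega}}$ under approximation of $H$, and neither of your proposed fixes is an actual argument. ``Perturbing $H$ only where $H\sint S^{t,\omega}$ stays strictly inside $(-1,n)$ on the time windows that matter'' is unspecified (the perturbation region depends on the path, the perturbed strategy must still lie in the fixed countable class $\tilde\cH$, and you must control what happens after a barrier hit). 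The ``negligible set of levels'' device only addresses the upper barrier: the lower barrier is $-1$ for \emph{every} $n$, so a tangential hit of $-1$ (which can carry positive $Q$-probability for general $Q$ and $S$) is not avoided by moving along levels $n\uparrow\infty$. Moreover the levels in \eqref{eq:defXsimple} are integers, so you would in addition need the (true, but unproved in your text) fact that the supermartingale property at a higher level yields it at level $n$ by stopping and multiplying by $\1_{[\![0,\zeta^{t,\omega}-t[\![}$; for the lower barrier no such one-line reduction exists, since relaxing $-1+\eps$ to $-1$ \emph{extends} the stopped process rather than stopping it, and would require a separate $\eps\downarrow0$ limit argument that you do not carry out. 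The same unresolved tangency issue reappears verbatim in your treatment of (iv).

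The paper removes the difficulty \emph{before} approximating: by a stopping argument and monotone convergence one reduces to the case where $\bar X=X\1_{[\![0,\zeta^{t,\omega}-t[\![}$ is uniformly bounded, and by dominated convergence and a further stopping argument to the case where $\bar X$ is also uniformly bounded away from zero prior to $\zeta^{t,\omega}-t$. Then, choosing simple integrands $H^{k}$ with deterministic jump times such that $1+H^{k}\sint S^{t,\omega}\to X$ uniformly on $[\![0,\zeta^{t,\omega}-t[\![$ in $Q$-probability, the truncation at a sufficiently large level $n$ never binds for the approximants, so the truncated wealths (after a further approximation within $\tilde\cH$) converge and the sampled inequalities pass to the limit by dominated convergence; the same reduction, combined with standard results on stochastic integrals, gives (iv). So the right statement of the crux is: either perform this ``bounded and bounded away from zero'' reduction, or execute a careful level-shifting argument that also moves the lower barrier to $-1+\eps$ and lets $\eps\downarrow0$. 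Your proposal identifies the obstacle but supplies neither, so as written the key implication (iii)$\Rightarrow$(i), and likewise (iv), is not proved.
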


\pagebreak[2] 

\begin{proof}
  For each $s\geq 0$, let $\tilde\cF_{s}$ be a countable algebra generating $\cF_{s}$; cf.\ Lemma~\ref{le:OmegaPolish}. Let $\tilde\cT$ be the set of all stopping times
  $$
    \tau=\sum_{j=1}^{n} t_{j}\1_{A_{j}},
  $$
  where $n\in \N$, $t_{j}\in\Q_{+}$ and $A_{j}\in \tilde\cF_{t_{j}}$. Moreover, let $\tilde \cH\subset\cH^{\rm simp}$ be the set of all processes
$$
 H= \sum_{j=0}^{n} \alpha_{j}\1_{]t_{j}, t_{j+1}]},
$$
where $n\in \N$, $0=t_{0}\leq t_{1}\leq \cdots \leq t_{n}\in\Q_{+}$ and each random variable $\alpha_{j}$ is of the form
$$
 \alpha_{j}= \sum_{i=0}^{n} a_{j}^{i}\1_{A_{j}^{i}}
$$
for some $a_{j}^{i}\in\Q^{d}$ and $A_{j}^{i}\in\tilde\cF_{t_{j}}$.
  
  It is clear that (i)$\Rightarrow$(ii)$\Rightarrow$(iii). To see that (iii) implies (i), fix $Q\in\fP(\Omega)$ and $X\in\cX_{t}^{\rm simp}(\omega)$. We first observe that it suffices to show that
  \begin{enumerate}
  \item[(i')] $E^{Q}[X_{\sigma}\1_{\sigma < \zeta^{t,\omega}-t}] \geq E^{Q}[X_{\tau}\1_{\tau < \zeta^{t,\omega}-t}]$ for all $X\in\cX_{t}^{\rm simp}(\omega)$ and all $\sigma\leq \tau$ in $\tilde\cT$.
  \end{enumerate}  
  Indeed, since $\tilde\cT$ contains all stopping times of the form $\tau=u\1_{A}+v\1_{A^{c}}$ and $\sigma=u$, where $u\leq v\in  \Q_{+}$  and $A\in\tilde\cF_{u}$, it readily follows that (i') implies the supermartingale property of $X\1_{[\![0,\zeta^{t,\omega}-t[\![}$ at rational times, and then the supermartingale property on $\R_{+}$ follows by right-continuity.
  
  To show that (iii) implies (i'), fix $\sigma\leq\tau$ and let $T\in\R_{+}$ be such that $\tau\leq T$. The claim will follow by passing to suitable limits in the inequality
  \begin{equation}\label{eq:ineqLimits}
    E^{Q}[X_{\sigma}\1_{\sigma < \zeta^{t,\omega}-t}] \geq E^{Q}[X_{\tau}\1_{\tau < \zeta^{t,\omega}-t}];
  \end{equation}
  we confine ourselves to a sketch of the proof. Let $X\in\cX_{t}^{\rm simp}(\omega)$ be given and recall that $S^{t,\omega}$ is ($Q$-a.s.) continuous prior to $\zeta^{t,\omega}-t$. 
  
	Using a stopping argument and monotone convergence, we may reduce to the case where $\bar{X}:=X\1_{[\![0,\zeta^{t,\omega}-t[\![}$ is uniformly bounded. Then, using dominated convergence and another stopping argument, we may reduce to the case where $\bar{X}$ is also uniformly bounded away from zero  prior to $\zeta^{t,\omega}-t$. Using standard arguments we can find a sequence $(H^{k})$ of simple predictable integrands with deterministic jump times such that $X^{k}:=1+H^{k}\sint S^{t,\omega} \to X$ uniformly on $[\![0,\zeta^{t,\omega}-t[\![$ in $Q$-probability. Using that $X$ is bounded and bounded away from zero, it follows that 
	$$
	  \bar{X}^{k}:=1+(H^{k}\sint S^{t,\omega})^{\tau^{n}_{H^{k},S^{t,\omega}}}\1_{[\![0,\zeta^{t,\omega}-t[\![} \to \bar{X}
	$$
	uniformly on $[0,T]$ in $Q$-probability, for a sufficiently large $n\in\N$. After an additional approximation, we may obtain the same property with $H^{k}\in\tilde\cH$, and we may show using dominated convergence that the validity of~\eqref{eq:ineqLimits} for each $\bar{X}^{k}$ implies the validity for $\bar{X}$.
	
	If $S^{t,\omega}$ is a semimartingale under $Q$, one shows that (iii) implies (iv) by using similar arguments as well as standard results about stochastic integrals, in particular \cite[Theorems II.21 and IV.2]{Protter.05}.
\end{proof}

 \begin{lemma}\label{lem: A1 holds}
   The family $\{ \cQ_{t}(\omega)\}$ satisfies {\rm (A1)}. 
 \end{lemma}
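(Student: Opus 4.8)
The goal is to show that the family $\{\cQ_t(\omega)\}$ satisfies (A1), i.e., that the set $\{(R',\omega): \omega\in\Omega,\, R'\in\cQ_t(\omega)\}$ is analytic. The plan is to write $\cQ_t(\omega) = \bigcup_{P\in\cP_t(\omega)}\big(\Pa{\zeta^{t,\omega}}{P}\cap\cY_t(\omega)\big)$ and handle the three ingredients separately: the graph of $\omega\mapsto\cP_t(\omega)$, which is analytic by (A1) for $\{\cP_t(\omega)\}$ (Assumption~\ref{ass : Pc(t,omega)}); the prior-to-$\zeta$ absolute-continuity constraint; and the supermartingale constraint defining $\cY_t(\omega)$. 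Since a countable intersection of analytic sets is analytic, and the projection of an analytic set is analytic, it suffices to realize each constraint as (a countable intersection of) Borel or analytic conditions in the variables $(R',\omega,P)$, and then project out $P$.

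\textbf{Step 1: The supermartingale condition.} First I would use Lemma~\ref{le:superSimple}: for $Q$ such that $S^{t,\omega}$ is $Q$-a.s.\ continuous prior to $\zeta^{t,\omega}-t$, membership $Q\in\cY_t(\omega)$ is equivalent to the countable family of inequalities $E^{Q}[X_\sigma\1_{\sigma<\zeta^{t,\omega}-t}]\ge E^{Q}[X_\tau\1_{\tau<\zeta^{t,\omega}-t}]$ over $X\in\tilde\cX_t(\omega)$ (built from the countable $\tilde\cH$) and $\sigma\le\tau$ in the countable $\tilde\cT$. For each fixed such $(X,\sigma,\tau)$, the map $(Q,\omega)\mapsto E^{Q}[X_\sigma\1_{\sigma<\zeta^{t,\omega}-t}] - E^{Q}[X_\tau\1_{\tau<\zeta^{t,\omega}-t}]$ is Borel-measurable by Lemma~\ref{lem: map omega Q to E is Borel}, once one checks that $(\omega,\tilde\omega)\mapsto X_\sigma(\tilde\omega)\1_{\{\sigma(\tilde\omega)<\zeta^{t,\omega}(\tilde\omega)-t\}}$ is jointly Borel — this uses the explicit shift formula for $S^{t,\omega}$ and the fact that $\zeta$ is a stopping time, so that $(t,\omega,\tilde\omega)\mapsto \zeta^{t,\omega}(\tilde\omega)$ has the requisite measurability (the path-space lemmas in the Appendix, Lemma~\ref{le:OmegaPolish}). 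Hence $\{(Q,\omega): Q\in\cY_t(\omega)\}$ is Borel, at least on the set where the $Q$-a.s.-continuity-prior-to-$\zeta$ hypothesis holds — and that continuity set is itself controlled because on $\cP_t(\omega)$-models the continuity is assumed, and absolute continuity preserves it.

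\textbf{Step 2: The prior-to-$\zeta$ absolute continuity condition.} Next I would show $\{(R',P,\omega): R'\domztom P\}$ is Borel (or analytic). The condition $R'\ll P$ on $\cF_s\cap\{s<\zeta^{t,\omega}-t\}$ for all $s$ can be reduced to a countable condition by taking $s$ rational and using a countable generating algebra $\tilde\cF_s$ of $\cF_s$: absolute continuity on this $\sigma$-field restricted to $\{s<\zeta^{t,\omega}-t\}$ is equivalent to the implication $P(A\cap\{s<\zeta^{t,\omega}-t\})=0 \Rightarrow R'(A\cap\{s<\zeta^{t,\omega}-t\})=0$ for all $A$ in the countable algebra. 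Each such implication is a Borel condition on $(R',P,\omega)$ because $(R,\omega)\mapsto R(A\cap\{s<\zeta^{t,\omega}-t\})$ is Borel (again Lemma~\ref{lem: map omega Q to E is Borel} applied to an indicator), and the implication "$a=0\Rightarrow b=0$" on two Borel functions defines a Borel set. A countable intersection over $s\in\Q_+$ and $A\in\tilde\cF_s$ keeps it Borel.

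\textbf{Step 3: Assemble and project.} Combining, the set $\Gamma := \{(R',P,\omega): P\in\cP_t(\omega),\ R'\domztom P,\ R'\in\cY_t(\omega)\}$ is the intersection of the analytic set $\{(P,\omega):P\in\cP_t(\omega)\}\times\fP(\Omega)$ (suitably reordered coordinates) with two Borel sets, hence analytic. Then $\{(R',\omega):R'\in\cQ_t(\omega)\}$ is the image of $\Gamma$ under the projection $(R',P,\omega)\mapsto(R',\omega)$, and the continuous (in fact Borel-measurable) image of an analytic set is analytic. This is exactly (A1) for $\{\cQ_t(\omega)\}$. The main obstacle I anticipate is Step~1: verifying the joint Borel-measurability of the maps $(\omega,\tilde\omega)\mapsto X_\sigma(\tilde\omega)\1_{\{\sigma(\tilde\omega)<\zeta^{t,\omega}(\tilde\omega)-t\}}$ and, relatedly, ensuring the continuity-prior-to-$\zeta$ hypothesis of Lemma~\ref{le:superSimple} is in force on the relevant set — both require careful bookkeeping with the shift operators $\otimes_t$, the behavior of $\zeta$ under conditioning, and the path-space structure from the Appendix, rather than any deep new idea.
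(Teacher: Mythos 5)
Your overall architecture is the same as the paper's: the graph of $\omega\mapsto\cP_t(\omega)$ is analytic by Assumption~\ref{ass : Pc(t,omega)}, the supermartingale constraint is reduced via Lemma~\ref{le:superSimple} and Lemma~\ref{lem: map omega Q to E is Borel} to countably many Borel inequalities (valid on the set where continuity prior to $\zeta^{t,\omega}-t$ holds, which is guaranteed on the relevant intersection), and one concludes by intersecting and projecting out $P$. The problem is in your Step~2. You claim that $R'\ll P$ on $\cF_s\cap\{s<\zeta^{t,\omega}-t\}$ is equivalent to the countable family of implications ``$P(A\cap\{s<\zeta^{t,\omega}-t\})=0 \Rightarrow R'(A\cap\{s<\zeta^{t,\omega}-t\})=0$'' with $A$ ranging over a countable generating algebra $\tilde\cF_s$. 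This equivalence is false: null-set implications on a generating algebra do not propagate to the generated $\sigma$-field, because the class of sets satisfying such an implication is closed under increasing but not under decreasing limits, so no monotone class argument applies. Concretely, if $P$ is Lebesgue measure on $[0,1]$ and $R'=\delta_{1/2}$, the only $P$-null sets in the algebra of finite unions of rational-endpoint intervals are the empty ones, so all your countable conditions hold vacuously although $R'\not\ll P$; nothing prevents the same phenomenon on the trace $\sigma$-fields appearing here. (Your reduction from all times, or stopping times, to rational $s$ is fine---that is exactly Lemma~\ref{lem : PXa countable representation}---but the further reduction inside a fixed $\cF_s$ to null-implications over a generating algebra is the step that fails.)

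The paper circumvents this with a different device: for each rational $q$ it constructs, via \cite[Theorem~V.58]{DellacherieMeyer.78} and the fact that $\cF_q$ is countably generated, a jointly Borel function $D_q$ such that $D_q(\cdot,Q,P)$ is a version of the Radon--Nikodym derivative of the absolutely continuous part of $Q$ with respect to $P$ on $\cF_q$, and then characterizes $Q\in\Pa{\zeta^{t,\omega}}{P,q}$ by the single Borel condition $E^{P}[D_{q}(Q,P)\1_{q<\zeta^{t,\omega}-t}]=Q\{q<\zeta^{t,\omega}-t\}$. Alternatively, you could repair your step by replacing the null-implications with a uniform $\eps$--$\delta$ formulation over the countable algebra: for every rational $\eps>0$ there is a rational $\delta>0$ such that $P(A\cap\{s<\zeta^{t,\omega}-t\})\le\delta$ implies $R'(A\cap\{s<\zeta^{t,\omega}-t\})\le\eps$ for all $A\in\tilde\cF_s$; this does characterize absolute continuity on $\cF_s\cap\{s<\zeta^{t,\omega}-t\}$ (by approximating sets in $\cF_s$ by algebra sets in $(P+R')$-symmetric difference) and is still a countable intersection/union of Borel conditions in $(R',P,\omega)$. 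With either repair, the remainder of your argument goes through as in the paper.
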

 
 \begin{proof} 
   Fix $t\geq0$. It suffices to show that the set
 \[
   \Gamma:=\{(\omega,P,Q) :\omega\in \Omega,\; P\in \cP_{t}(\omega),\; Q\in \cQ_{t}(\omega,P)\} \subset \Omega\times\fP(\Omega)\times \fP(\Omega)
 \]
 is analytic. Indeed, once this is established, the graph of $\cQ_{t}(\cdot)$ is analytic as a projection of $\Gamma$; that is, {\rm (A1)} is satisfied.
 
 As a first step, we show that
  \begin{equation}\label{eq:analyticity2}
  \graph(\Pa{\zeta^{t,\cdot}}{\cdot}):=\{(\omega,P,Q ):\, \omega \in \Omega, \;P\in \fP(\Omega),\;Q\in \Pa{\zeta^{t,\omega}}{P}\}\quad\mbox{is Borel}
  \end{equation} 
  and in particular analytic. Indeed, it follows from Lemma~\ref{lem : PXa countable representation} that  
 $$
 \Pa{\zeta^{t,\omega}}{P}=\bigcap_{q\in \Q_{+}} \Pa{\zeta^{t,\omega}}{P,q},
 $$
 where 
 $$
   \Pa{\zeta^{t,\omega}}{P,q} := \big\{Q\in\fP(\Omega):\, Q\ll P \mbox{ on }\cF_{q} \cap \{q<\zeta^{t,\omega}-t\}\big\}.
 $$
 Hence, it suffices to show that 
$$
 \{(\omega, P,Q)\in \Omega\times\fP(\Omega)\times \fP(\Omega):\;\, Q\in \Pa{\zeta^{t,\omega}}{P,q}\}
$$
is Borel  for fixed $q$. Since $\cF_{q}$ is countably generated, cf.\ Lemma~\ref{le:OmegaPolish}, a standard argument (see \cite[Theorem~V.58, p.\,52]{DellacherieMeyer.78} and the subsequent remarks) shows that we can construct a Borel function $D_{q}: \Omega \times  \fP(\Omega)\times \fP(\Omega)\to \R$ such that $D_{q}(\cdot ,Q,P)$ is a version of the Radon--Nikodym derivative of the absolutely continuous part of $Q$ with respect to $P$ on $ \cF_{q}$.  Then, $Q\in \Pa{\zeta^{t,\omega}}{P,q}$ if and only if $E^{P}[D_{q}( Q,P)\1_{q<\zeta^{t,\omega}-t}]=Q\{q<\zeta^{t,\omega}-t\}$.  Using the fact that 
$$
(\omega,P,Q)\mapsto E^{P}[D_{q}(Q,P)\1_{q<\zeta^{t,\omega}-t}]-Q\{q<\zeta^{t,\omega}-t\}
$$ 
is Borel by Lemmas~\ref{lem: map omega Q to E is Borel} and~\ref{le:OmegaPolish}, we conclude that~\eqref{eq:analyticity2} holds.
  
Let $\sigma,\tau\in\cT$ and let $X\in\cX_{t}^{\rm simp}(\omega)$; recall that $X=X^{H}$ is of the form~\eqref{eq:defXsimple}. Then the map
 $$
 (\omega,Q)\in \Omega\times \fP(\Omega)\mapsto \psi^{H,\sigma,\tau}(\omega,Q):= E^{Q}[X_{\tau}\1_{\tau < \zeta^{t,\omega}-t}] - E^{Q}[X_{\sigma}\1_{\sigma < \zeta^{t,\omega}-t}] 
 $$
 is Borel as a consequence of Lemma~\ref{lem: map omega Q to E is Borel}. If $(\omega,Q)$ are such that 
 $S^{t,\omega}$ is $Q$-a.s.\ continuous  prior to $\zeta^{t,\omega}-t$, Lemma~\ref{le:superSimple} shows that
  $$
  Q\in   \cY_{t}(\omega) \quad \mbox{if and only if}\quad \psi^{H,\sigma,\tau}(\omega,Q)\leq0 \;\; \forall   H\in\tilde \cH, \, \sigma\leq\tau\in\tilde\cT.  $$ 
 Using the obvious embeddings of  $\graph(\cP_{t})$ and $\graph(\cY_{t})$ into $\Omega\times\fP(\Omega)\times \fP(\Omega)$, it follows that
\begin{align*}
  \Gamma & = \graph(\cP_{t})  \cap \graph(\Pa{\zeta^{t,\cdot}}{\cdot}) \cap \graph(\cY_{t}) \\
   & = \graph(\cP_{t})  \cap \graph(\Pa{\zeta^{t,\cdot}}{\cdot}) \cap \bigcap_{H\in\tilde \cH, \, \sigma\leq\tau\in\tilde\cT} \{\psi^{H,\sigma,\tau}\leq 0\}.
\end{align*}
Here we have used that if $(\omega,P,Q)$ belong to the first intersection, then $S^{t,\omega}$ is $P$-a.s.\ and hence $Q$-a.s.\ continuous  prior to $\zeta^{t,\omega}-t$; cf.\ Assumption~\ref{ass : Pc(t,omega)}. The above representation shows that $\Gamma$ is analytic as a countable intersection of analytic sets.
\end{proof}

\begin{lemma}
  The family $\{ \cQ_{t}(\omega)\}$ satisfies {\rm (A2)}.
\end{lemma}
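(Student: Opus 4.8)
The plan is to verify the defining conditions of $Q\in\cQ_{t}(\bar\omega\otimes_{s}\omega)$ for $R$-a.e.\ $\omega\in\{\zeta^{s,\bar\omega}>t\}$, given $Q\in\cQ_{s}(\bar\omega,P)$ for some $P\in\cP_{s}(\bar\omega)$; the natural candidate is the conditioned measure $Q^{t-s,\omega}$, and one wants to show $Q^{t-s,\omega}\in\cQ_{t}(\bar\omega\otimes_{s}\omega,P^{t-s,\omega})$. First I would record that, by (A2) applied to the family $\{\cP_{t}(\omega)\}$, we have $P^{t-s,\omega}\in\cP_{t}(\bar\omega\otimes_{s}\omega)$ for $R$-a.e.\ (hence $Q$-a.e., using $Q\domz[\zeta^{s,\bar\omega}] P$ on the relevant trace $\sigma$-field) $\omega\in\{\zeta^{s,\bar\omega}>t\}$, so it suffices to work with a fixed such $\omega$ and show the two membership requirements: (a) $Q^{t-s,\omega}\domz[\zeta^{t,\bar\omega\otimes_{s}\omega}-t]P^{t-s,\omega}$, and (b) $Q^{t-s,\omega}\in\cY_{t}(\bar\omega\otimes_{s}\omega)$.

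For (a), the key point is the identity $\zeta^{t,\bar\omega\otimes_{s}\omega}-t=(\zeta^{s,\bar\omega}-s)^{t-s,\omega}-(t-s)$ on the set $\{\zeta^{s,\bar\omega}>t\}$, i.e.\ conditioning at time $t$ after a shift by $s$ is the same as conditioning at time $t-s$ of the shifted path; this is a direct bookkeeping computation with $\otimes$ and the definition of $\zeta$. Granting this, prior-to-$\zeta$ absolute continuity of $Q^{t-s,\omega}$ over $P^{t-s,\omega}$ on $\cF_{r}\cap\{r<\zeta^{t,\bar\omega\otimes_{s}\omega}-t\}$ for each rational $r$ follows by disintegrating the relation $Q\domz[\zeta^{s,\bar\omega}]P$ on $\cF_{t-s+r}\cap\{t-s+r<\zeta^{s,\bar\omega}-s\}$ through the regular conditional distributions $Q^{\omega}_{t}$, $P^{\omega}_{t}$, exactly as in the proof that $\graph(\fP_{\zeta^{t,\cdot}}(\cdot))$ is Borel (Lemma~\ref{lem: A1 holds}); the only subtlety is that the conditional Radon--Nikodym densities of $Q$ and $P$ must be compatible on the good event, which is where one uses that $R$ is itself built from $Q$ (or rather, since here $R=Q$ is the measure being conditioned, this is automatic). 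Lemma~\ref{lem : PXa countable representation} then upgrades the countable family of relations to $Q^{t-s,\omega}\domz[\zeta^{t,\bar\omega\otimes_{s}\omega}-t]P^{t-s,\omega}$.

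For (b), I would use the supermartingale characterization in terms of the fixed countable test family from Lemma~\ref{le:superSimple}: since $S^{t,\bar\omega\otimes_{s}\omega}$ is $P^{t-s,\omega}$-q.s.\ (hence $Q^{t-s,\omega}$-a.s., by part (a)) continuous prior to $\zeta^{t,\bar\omega\otimes_{s}\omega}-t$ by Assumption~\ref{ass : Pc(t,omega)}, it is equivalent to check (iii) of that lemma, namely $E^{Q^{t-s,\omega}}[X_{\sigma}\1_{\sigma<\zeta^{t,\bar\omega\otimes_{s}\omega}-t}]\geq E^{Q^{t-s,\omega}}[X_{\tau}\1_{\tau<\zeta^{t,\bar\omega\otimes_{s}\omega}-t}]$ for the countably many $X\in\tilde\cX_{t}(\bar\omega\otimes_{s}\omega)$ and $\sigma\leq\tau$ in $\tilde\cT$. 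Each such test process $X$, built from an integrand $H\in\tilde\cH$ and a technical stopping level $n$, satisfies $X^{t-s,\omega}\in\tilde\cX_{s}(\bar\omega)$ in the obvious sense after re-shifting (here one uses that $\tilde\cH$ and $\tilde\cT$ are time-translation friendly and that $S^{t,\bar\omega\otimes_{s}\omega}=(S^{s,\bar\omega})^{t-s,\omega}$); so the desired inequality at the level of $Q^{t-s,\omega}$ is obtained from the supermartingale property of $X'\1_{[\![0,\zeta^{s,\bar\omega}-s[\![}$ under $Q$, which holds because $Q\in\cY_{s}(\bar\omega)$, via the conditioning identity~\eqref{eq:conditioning} and optional stopping at the bounded times $t-s+\sigma$ and $t-s+\tau$. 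Finally, I would argue that the single $Q$-null set outside which all these countably many inequalities transfer can be chosen uniformly, using again that the relevant maps are Borel (Lemmas~\ref{lem: map omega Q to E is Borel} and~\ref{le:OmegaPolish}) so that the conditional expectation identity~\eqref{eq:conditioning} holds simultaneously for a countable separating family.

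The main obstacle I anticipate is the bookkeeping around the two-layer shift — getting the identity $\zeta^{t,\bar\omega\otimes_{s}\omega}-t=(\zeta^{s,\bar\omega})^{t-s,\omega}-t$ and the analogous identity $S^{t,\bar\omega\otimes_{s}\omega}=(S^{s,\bar\omega})^{t-s,\omega}$ correct on the event $\{\zeta^{s,\bar\omega}>t\}$ (and noting both sides degenerate consistently on the complement, which is why the $\nu^{R}$ patching in (A3) is needed but (A2) only asks for the good event) — and then tracking that the single exceptional $Q$-null set can be taken independent of the countably many $(X,\sigma,\tau)$; everything else is a routine application of regular conditional distributions, Lemma~\ref{le:superSimple}, and Lemma~\ref{lem : PXa countable representation}.
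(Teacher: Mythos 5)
Your plan is correct and follows essentially the same route as the paper's proof: reduce to showing $Q^{t-s,\omega}\in\Pa{\zeta^{t,\omega}}{P^{t-s,\omega}}\cap\cY_t(\cdot)$ for $Q$-a.e.\ $\omega$ on the good event, obtain the absolute continuity at rational times by conditioning the prior-to-$\zeta$ density and upgrading via Lemma~\ref{lem : PXa countable representation}, and verify the supermartingale property through the countable test family of Lemma~\ref{le:superSimple} together with the conditioning identity~\eqref{eq:conditioning}. The only cosmetic difference is that the paper makes the conditional density explicit (via $\tilde Y=\1_{[0,t)}+(Y/Y_t)\1_{[t,\infty)}$) rather than appealing to a generic disintegration, but this is the same argument.
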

 
 \begin{proof}
  For simplicity of notation, we state the proof for $s=0$; the extension to the general case is immediate. Fix $Q\in \cQ_{0}$; then $Q\in \cQ_{0}(P)$ for some $P\in \cP$. We shall show that
\[
  Q^{t,\omega}\in \Pa{\zeta^{t,\omega}}{P^{t,\omega}}\cap \cY_{t}(\omega)  \quad\mbox{for $Q$-a.e.\ $\omega\in\{\zeta>t\}$};
\]
 this will imply the lemma because $P^{t,\omega}\in\cP_{t}(\omega)$ holds for $P$-a.e.\ $\omega\in\Omega$, cf.\ Assumption~\ref{ass : Pc(t,omega)}, and thus for $Q$-a.e.\ $\omega\in\{\zeta>t\}$ as $Q\in \cQ_{0}(P)$.
 
 Let $Y$ be the prior-to-$\zeta$ density process of $Q$ with respect to $P$ (see Remark~\ref{rem : density process constructed from Q} for details on this notion) and set
 $$
 \tilde Y=\1_{[0,t)}+(Y/Y_{t})\1_{[t,\infty)},
 $$
 where we use the convention $0/0=0$. We first establish that given $s\geq0$, we have $Q^{t,\omega}\ll P^{t,\omega}$ on $\cF_{s}\cap \{\zeta^{t,\omega}-t>s\}$ and in fact
 $$
   dQ^{t,\omega}=\tilde Y_{s}^{t,\omega} dP^{t,\omega}\quad\mbox{on}\quad \cF_{s}\cap \{\zeta^{t,\omega}-t>s\}
 $$
 for $Q$-a.e.\ $\omega\in\{\zeta>t\}$. Indeed, let $g\geq0$ be an $\cF_{s}$-measurable random variable; then
 there exists an $\cF_{s+t}$-measurable random variable $\bar g$ such that $\bar g^{t,\omega}=g$. Recalling~\eqref{eq:conditioning}, we have for $Q$-a.e.\ $\omega\in\{\zeta>t\}$ that
 \begin{align*}
 E^{Q^{t,\omega}}[g\1_{\zeta^{t,\omega}-t>s}]
 &=E^{Q}[\bar g\1_{\zeta>s+t}|\cF_{t}](\omega)\\
 &=E^{P}[(Y_{s+t}/Y_{t})\bar g\1_{\zeta>s+t}|\cF_{t}](\omega)\\
 &=E^{P}[\tilde Y_{s+t}\bar g\1_{\zeta>s+t}|\cF_{t}](\omega)\\
 &=E^{P^{t,\omega}}[\tilde Y^{t,\omega}_{s}g\1_{\zeta^{t,\omega}-t>s}].  
 \end{align*}
 We have shown in particular that $Q^{t,\omega}\ll P^{t,\omega}$ on $\cF_{s}\cap \{\zeta^{t,\omega}-t>s\}$ for all $s\in\Q_{+}$ holds for $Q$-a.e.\ $\omega\in\{\zeta>t\}$, which by Lemma~\ref{lem : PXa countable representation} implies that
 \[
   Q^{t,\omega}\in \Pa{\zeta^{t,\omega}}{P^{t,\omega}}   \quad \mbox{for $Q$-a.e.\ $\omega\in \{\zeta>t\}$.}
 \]
 It remains to prove that 
 \begin{equation}\label{eq:A2proofAim1}
    Q^{t,\omega}\in \cY_{t}(\omega)\quad \mbox{for $Q$-a.e.\ $\omega\in \{\zeta>t\}$.}
 \end{equation}
 Let $X\in \cX_{t}^{\rm simp}(\omega)$, then we observe that $X=\bar{X}^{t,\omega}$ for some $\bar{X}\in \cX^{\rm simp}_{0}$.  Moreover, let $\sigma\in\cT$   be bounded,  then $\sigma=\bar\sigma^{t,\omega}-t$ for some bounded $\bar\sigma\in\cT$ satisfying $\bar\sigma\geq t$ (both $\bar{X}$ and $\bar\sigma$ do not depend on $\omega$). We have $X_{\sigma}=(\bar{X}^{t,\omega})_{\bar\sigma^{t,\omega}-t}=(\bar{X}_{\bar\sigma})^{t,\omega}$ (where $\bar{X}_{\bar\sigma}$ is considered as a random variable) and thus
$$
  E^{Q^{t,\omega}}[X_{\sigma}\1_{\zeta^{t,\omega}-t>\sigma}]=E^{Q^{t,\omega}}[(\bar{X}_{\bar\sigma})^{t,\omega}\1_{\zeta^{t,\omega}>\bar\sigma^{t,\omega}}] = E^{Q}[\bar{X}_{\bar\sigma} \1_{\zeta>\bar\sigma}|\cF_{  t }](\omega)
$$ 
for $Q$-a.e.\ $\omega\in \{\zeta>t\}$.
If $\tau\geq\sigma \in\cT$ is bounded and $\bar\tau\geq \bar\sigma $ has the obvious meaning, we deduce from the supermartingale property of $Q\in\cY_{0}$ that
\begin{align*}
  E^{Q^{t,\omega}}[X_{\sigma}\1_{\zeta^{t,\omega}-t>\sigma}] 
  &= E^{Q}[\bar{X}_{\bar\sigma} \1_{\zeta>\bar\sigma}|\cF_{t}](\omega) \\
  & \geq E^{Q}[\bar{X}_{\bar\tau} \1_{\zeta>\bar\tau}|\cF_{t}](\omega) \\
  &= E^{Q^{t,\omega}}[X_{\tau}\1_{\zeta^{t,\omega}-t>\tau}]
\end{align*}
for $Q$-a.e.\ $\omega\in \{\zeta>t\}$. Now Lemma~\ref{le:superSimple} implies \eqref{eq:A2proofAim1} and the proof is complete.
%
%
 \end{proof}  
   
\begin{lemma}  
  The family $\{ \cQ_{t}(\omega)\}$ satisfies {\rm (A3)}.
\end{lemma}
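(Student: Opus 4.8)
The plan is to verify the pasting property (A3) for $\{\cQ_t(\omega)\}$ by decomposing the pasted measure $\bar R$ into its behaviour before time $t-s$ and after, checking separately that $\bar R$ is prior-to-$\zeta$ absolutely continuous with respect to some element of $\cP_s(\bar\omega)$ and that $\bar R \in \cY_s(\bar\omega)$. For notational simplicity I would again fix $s=0$, so that we are given $R\in\cQ_0 = \cQ_0(P)$ for some $P\in\cP$, an $\cF_t$-measurable kernel $\nu$ with $\nu(\omega)\in\cQ_t(\omega)$ for $R$-a.e.\ $\omega\in\{\zeta>t\}$, and we must show the pasted measure $\bar R$ lies in $\cQ_0$.

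First I would handle absolute continuity. Since $R\in\cQ_0(P)$, write $Y$ for the prior-to-$\zeta$ density process of $R$ with respect to $P$. For $R$-a.e.\ $\omega\in\{\zeta>t\}$ we have $\nu(\omega)\in\cQ_t(\omega,P^\omega)$ for some $P^\omega\in\cP_t(\omega)$; by a measurable selection argument using (A1) for $\{\cP_t(\omega)\}$ and the Borel graph~\eqref{eq:analyticity2}, one can choose the selector $\omega\mapsto P^\omega$ to be $\cF_t$-measurable (up to an $R$-nullset), so that $\omega\mapsto P^\omega$ is itself an $\cF_t$-measurable kernel. Pasting $P$ with this kernel via (A3) for $\{\cP_t(\omega)\}$ produces a measure $\bar P\in\cP=\cP_0$. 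The claim is then $\bar R\domz \bar P$. This is checked on $\cF_q\cap\{q<\zeta\}$ for each $q\in\Q_+$: splitting according to $q<t$ versus $q\geq t$, on $\{q<t\}$ the restriction of $\bar R$ agrees with that of $R$ (and of $\bar P$ with $P$) so $R\domz P$ gives the claim, while on $\{q\geq t\}$ one conditions on $\cF_t$ and uses, on $\{\zeta>t\}$, that $\nu(\omega)\domztom P^\omega$ fibrewise and, on $\{\zeta\leq t\}$, the fact that the fibre measures there equal the conditionals $R^{t,\omega}$ of $R$ (resp.\ $\bar P$'s fibre is $P^{t,\omega}$) so nothing new happens. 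The density of $\bar R$ relative to $\bar P$ prior to $\zeta$ is then the concatenation of $Y$ up to time $t$ with the fibrewise prior-to-$\zeta$ densities.

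Second I would verify $\bar R\in\cY_0(\bar\omega)=\cY_0$, i.e.\ $X\1_{[\![0,\zeta[\![}$ is a $\bar R$-supermartingale for every $X\in\cX_0^{\rm simp}$. By Lemma~\ref{le:superSimple} it suffices to check the two-time-point inequality $E^{\bar R}[X_\sigma\1_{\sigma<\zeta}]\geq E^{\bar R}[X_\tau\1_{\tau<\zeta}]$ for $X\in\tilde\cX_0$ and $\sigma\leq\tau$ in $\tilde\cT$. Condition on $\cF_t$. On $\{\zeta\leq t\}$ the fibre of $\bar R$ is $R^{t,\omega}$, so that part contributes exactly $E^R[X_\sigma\1_{\sigma<\zeta}\1_{\zeta\leq t}]\geq E^R[X_\tau\1_{\tau<\zeta}\1_{\zeta\leq t}]$ directly from $R\in\cY_0$. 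On $\{\zeta>t\}$, I would split the stopping times across time $t$: the part where $\tau\leq t$ is again controlled by the $\cF_t$-conditional supermartingale inequality inherited from $R\in\cY_0$, while for the part where the relevant times exceed $t$ one shifts to the fibre, writing $X = \bar X$ restarted and using that $X^{t,\omega}\in\cX_t^{\rm simp}(\omega)$ together with $\nu(\omega)\in\cY_t(\omega)$; applying the conditional supermartingale property fibrewise and then integrating against $R$ (using the tower property and that $R\in\cY_0$ to glue the increment up to $t$) yields the desired inequality. Combining the two contributions and using right-continuity gives the full supermartingale property, hence $\bar R\in\cY_0$, and therefore $\bar R\in\cQ_0(\bar P)\subset\cQ_0$, which is (A3).

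I expect the main obstacle to be the bookkeeping in the second step: carefully splitting an arbitrary pair $\sigma\leq\tau\in\tilde\cT$ into a ``before $t$'' part and an ``after $t$'' part on the event $\{\zeta>t\}$, and arranging that the after-$t$ part, once shifted by $(t,\omega)$, is genuinely of the form $X^{t,\omega}_{\sigma'}$ with $X^{t,\omega}\in\cX_t^{\rm simp}(\omega)$ and $\sigma'\in\cT$ bounded, so that $\nu(\omega)\in\cY_t(\omega)$ can be invoked. The interplay between the stopping in the definition~\eqref{eq:defXsimple} of $\cX_t^{\rm simp}(\omega)$ (the stopping at $-1$ and $n$) and the restart at time $t$ needs to be handled with the same care as in the proof of (A2), and one also needs the measurability of $\omega\mapsto P^\omega$ for the pasting of $\{\cP_t(\omega)\}$ to go through; both are technical but routine given Lemma~\ref{le:superSimple}, Lemma~\ref{lem: map omega Q to E is Borel}, and the selection machinery already in use.
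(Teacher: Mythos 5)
Your overall route is the paper's: select a $\cP_t$-valued kernel compatible with $\nu$, paste it with $P$ to obtain $\bar P\in\cP$, check $\bar R\domz\bar P$ by splitting at time $t$ and conditioning on $\cF_t$, and check the supermartingale property fibrewise. There is, however, a concrete flaw in the selection step. To invoke (A3) for the family $\{\cP_t(\omega)\}$ with base measure $P$, the kernel $\omega\mapsto P^\omega$ must satisfy $P^\omega\in\cP_t(\omega)$ for \emph{$P$-a.e.} $\omega\in\{\zeta>t\}$, whereas your construction only secures this $R$-a.e.\ on $\{\zeta>t\}$: the hypothesis $\nu(\omega)\in\cQ_t(\omega)$ holds only $R$-a.e., and you perform the $\cF_t$-measurable modification ``up to an $R$-nullset.'' Since $R\domz P$ points the wrong way for this purpose (an $R$-nullset need not be $P$-null), the hypothesis of the pasting property for $\cP$ is not verified, and the step ``Pasting $P$ with this kernel via (A3)'' is unjustified as written. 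The paper's proof repairs exactly this: the Jankov--von Neumann selection on the analytic set $\{(\omega,P',Q'):P'\in\cP_t(\omega),\,Q'=\nu(\omega),\,Q'\in\cQ_t(\omega,P')\}$ yields an $\cF^{*}_{t}$-measurable kernel $\mu'$, which is set equal to $P^{t,\omega}$ on the exceptional set (so that, by stability of $\cP$ under conditioning, $\mu'(\omega)\in\cP_t(\omega)$ holds $P$-a.e.\ on $\{\zeta>t\}$, not merely $R$-a.e.), and the passage to an $\cF_t$-measurable kernel $\mu$ is made up to a \emph{$P$-nullset}. Because a $P$-nullset in $\cF_t$ meets $\{\zeta>t\}$ in an $R$-nullset (by $R\domz P$), the property $\nu(\omega)\in\cQ_t(\omega,\mu(\omega))$ survives $R$-a.e.\ on $\{\zeta>t\}$, which is what the subsequent absolute-continuity and supermartingale arguments require. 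This $P$-versus-$R$ nullset bookkeeping is the genuinely delicate point of the lemma and is missing from your argument.

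The remainder of the proposal aligns with the paper, with some imprecisions. In the verification of $\bar R\in\cY_0$, the contribution of $\{\zeta\le t\}$ is not ``directly from $R\in\cY_0$'': a supermartingale inequality cannot be restricted to an event; the clean route (the paper phrases it as Fubini) is to verify the conditional inequality given $\cF_t$, using $\bar R=R$ on $\cF_t$ together with $R\in\cY_0$ before $t$, the fibre property of $\nu(\omega)$ after $t$ on $\{\zeta>t\}$ (on $\{\zeta\le t\}$ both sides vanish because the indicators $\1_{\cdot<\zeta}$ are zero there), and the tower property to glue across $t$. Also, for $X\in\cX_0^{\rm simp}$ the shifted process $X^{t,\omega}$ is not an element of $\cX_t^{\rm simp}(\omega)$ but of the scaled class $X_t(\omega)\cX_t^{\rm simp}(\omega)$, which is harmless for the supermartingale property. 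Finally, in the absolute-continuity check at rational $q\geq t$ one has $\{q<\zeta\}\subset\{\zeta>t\}$, so the $\{\zeta\le t\}$ fibres contribute nothing at all; your ``nothing new happens'' is in fact vacuity.
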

 
 \begin{proof} 
  Again, we state the argument for the case $s=0$. Let $Q\in \cQ_{0}$; then $Q\in \cQ_{0}(P)$ for some $P\in \cP  =\cP_{0}$.  Moreover, let $t\ge 0$ and let $\nu$ be an $\cF_{t}$-measurable kernel such that  $\nu(\omega)\in \cQ_{t}(  \omega)$  for $Q$-a.e.\ $\omega \in \{\zeta>t\}$. Using Assumption~\ref{ass : Pc(t,omega)} and the measurability results established in the proof of Lemma~\ref{lem: A1 holds}, it follows that the set 
 $$
  \{(\omega,P',Q') :\omega\in \Omega,\; P'\in \cP_{t}(  \omega),\; Q' =\nu(\omega),\;Q'\in  \cQ_{t}(  \omega,P')\}  
 $$
 is analytic.   Let  $\cF^{*}_{t}$ be the universal completion of $\cF_t$.  Applying the measurable selection theorem, cf.\ \cite[Proposition~7.49]{BertsekasShreve.78},  we can find an $\cF^{*}_{t}$-measurable   kernel $\mu'$ such that  $\mu'(\omega)\in \cP_{t}(\omega)$ and $\nu(\omega)\in \cQ_{t}(  \omega,\mu'(\omega))$ for all $\omega\in\{\zeta>t\}$ outside the $\cF^{*}_{t}$-measurable $Q$-nullset 
  $$
    N':=\{\nu\notin\cQ_{t}\}\cap \{\zeta>t\}
  $$
 and, e.g., $\mu'(\omega)=P^{t,\omega}$ for $\omega\in N'$. We can then find an $\cF_{t}$-measurable kernel $\mu$ and a $P$-nullset $N$ such that $\mu(\omega)=\mu'(\omega)$ for all $\omega\notin N$; cf.\  \cite[Lemma~7.27]{BertsekasShreve.78}.  Using Assumption~\ref{ass : Pc(t,omega)} and $Q\ll_{\zeta}P$, we have
 \begin{align}
   \mu(\omega)&\in \cP_{t}(\omega) &\mbox{for $P$-a.e.\ $\omega\in \{\zeta>t\}$;}\nonumber\\
   \nu(\omega)&\in \cQ_{t}(  \omega,  \mu(\omega))  &\mbox{for $Q$-a.e.\ $\omega\in \{\zeta>t\}$.}\label{eq: nu in Q of mu}
 \end{align}  
 By Assumption  \ref{ass : Pc(t,omega)}, the measure 
 $$
\bar P(A):=\iint (\1_{A})^{t,\omega}(\omega')\,\mu^{P}(d\omega';\omega)\,P(d\omega),\quad A\in \cF
 $$
 is an element of $\cP$; cf.\ Definition~\ref{def : analytic and stable family} for the notation. Set
 $$
\bar Q(A):=\iint  (\1_{A})^{t,\omega}(\omega')\,\nu^{Q}(d\omega';\omega)\,Q(d\omega),\quad A\in \cF.
 $$
 Next, we show that $\bar Q \ll_{\zeta} \bar P$; i.e., that
  \[
    \bar Q \ll \bar P \quad \mbox{on}\quad \cF_{s}\cap\{s<\zeta\}, \quad s\geq0.
  \]
  This is clear for $s\leq t$ since $\bar Q = Q \ll_{\zeta} P = \bar P$ on $\cF_{t}$. Let $s>t$ and let $A\in\cF_{s}$ be such that $\bar P(A\cap \{s<\zeta\})=0$. Then 
  $$
    \mu(\omega)\{(A\cap \{s<\zeta\})^{t,\omega}\}=\bar P^{t,\omega}\{(A\cap \{s<\zeta\})^{t,\omega}\}=0
  $$
  and thus 
  $$
    \bar Q^{t,\omega}\{(A\cap \{s<\zeta\})^{t,\omega}\}=\nu(\omega)\{(A\cap \{s<\zeta\})^{t,\omega}\}=0
  $$
  for $Q$-a.e.\ $\omega\in \{\zeta>t\}$,   by \eqref{eq: nu in Q of mu}.  It follows that 
  $$
    \bar Q(A\cap \{s<\zeta\})=E^{Q}\big[E^{\bar Q}[\1_{A\cap \{s<\zeta\}}|\cF_{t}]\big]=0
  $$
  as desired.
  
  To see that $\bar Q\in\cY_{0}$, let $X\in\tilde\cX_{0}$ (recall the notation from Lemma~\ref{le:superSimple});  then $X\1_{[\![0,\zeta[\![}$ is a $Q$-supermartingale. Moreover, noting that $X^{t,\omega}$ is an element of the scaled space $X_{t}(\omega)\cX_{t}^{\rm simp}(\omega)$, we have that $X^{t,\omega}\1_{[\![0,\zeta^{t,\omega}-t[\![}$ is a $\nu(\omega)$-supermartingale for all $\omega$ such that $\nu(\omega)\in\cQ_{t}(\omega)$. Using Fubini's theorem, it then follows that $X\1_{[\![0,\zeta[\![}$ is a $\bar Q$-supermartingale as desired.
  
  We have shown that  $\bar Q\in \Pa{\zeta }{\bar P}\cap \cY_{0}\subset \cQ_{0}$ and the proof is complete.
 \end{proof}


\section{Superhedging Duality}\label{sec: superhedging duality}
 
In this section, we provide a superhedging duality and the existence of an optimal strategy. To this end, we require an enlargement of the set of admissible strategies, allowing for continuous trading.
 We first introduce the filtration $\G= (\cG_t)_{t\geq 0}$, where
\begin{equation*}
  \cG_t:= \cF^{*}_{t}\vee \mathcal{N}^{\cP};
\end{equation*}
here $\cF^{*}_{t}$ is the universal completion of $\cF_t$ and $\mathcal{N}^{\cP}$ is the collection of sets which are $(\cF,P)$-null for all $P\in\cP$. Moreover, Assumption~\ref{ass : Pc(t,omega)} is in force throughout this section.

Let $\NAP$ hold, then Corollary~\ref{co:FTAP} implies the $(\G,P)$-semimartingale property of $S$ for each $P\in \cP$. Therefore, we may introduce the class  ${\cal L}(\cP)$  of all predictable processes on $(\Omega, \bG)$ that are $S$-integrable under every $P \in \Prob$.
Given $H\in {\cal L}(\cP)$ and $P\in\cP$,  we can construct the usual stochastic integral $H\sint S$ under $P$ (the dependence on $P$ is suppressed in the notation---but see also~\cite{Nutz.11int}).   
For $x \in \Real_+$, we denote by $\Hcal (x)$ the collection of all $H \in {\cal L}(\cP)$ such that  $x + H \sint S$ remains $P$-a.s.\ nonnegative for all $P \in \Prob$. 

To be consistent with the classical literature, the following superhedging theorem is stated with the set
$$
\cQ:=\bigcup_{P\in \cP} \cQ^{P}
$$ 
of prior-to-$\zeta$ local martingale measures; cf.\ Definition~\ref{defn:ELMM}. The subsequent  Lemma~\ref{le:sameSupremum} provides an equivalent version with the set $\cQ_{0}$ of supermartingale measures. 

\begin{theorem}\label{th:duality}
  Let $\NAP$ hold, let   $T\in \R_{+}$ and  let
  $f:\Omega\to [0,\infty]$ be an upper semianalytic
  \footnote{The definition of an upper semianalytic function is recalled in Section~\ref{sec:measureTheory} of the Appendix. In particular, any Borel function is upper semianalytic.}, $\cG_T$-measurable function with  $\sup_{Q\in\cQ} E^Q[f\1_{\zeta>T}]<\infty$.
  Then
  \begin{multline*}
    \sup_{Q\in\cQ} E^Q[f\1_{\zeta>T}] \\
    =\min\big\{x:\,\exists\, H\in \cH(x)\mbox{ with } x+ (H\sint S)_{T} \geq f\;P\as\mbox{ for all }P\in\cP\big\}.
  \end{multline*}
\end{theorem}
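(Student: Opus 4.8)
The plan is to prove the two inequalities separately, the ``$\geq$'' direction being the easy one and the ``$\leq$'' direction carrying the real content. For ``$\geq$'': if $x$ and $H\in\cH(x)$ satisfy $x+(H\sint S)_T\geq f$ $\cP$-q.s., then for any $Q\in\cQ$ with localizing sequence $(\tau_n)$ as in Definition~\ref{defn:ELMM}, the stopped wealth $(x+(H\sint S))_{\cdot\wedge\tau_n}$ is a nonnegative $Q$-local martingale, hence a $Q$-supermartingale, and multiplying by $\1_{\dbraoc{0,\zeta}}$ and letting $n\to\infty$ (using $\tau_n\uparrow\zeta$ $Q$-a.s.) gives that $(x+H\sint S)\1_{\dbraoc{0,\zeta}}$ is a $Q$-supermartingale; evaluating at $T$ yields $E^Q[f\1_{\zeta>T}]\leq E^Q[(x+(H\sint S)_T)\1_{\zeta>T}]\leq x$. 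Taking the supremum over $Q\in\cQ$ and then the infimum over admissible $x$ gives the inequality; this also shows the ``$\inf$'' on the right-hand side is $\geq$ the left-hand side, so that once the reverse inequality is proven with an actual superhedging strategy the ``$\min$'' is justified.

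For the hard direction ``$\leq$'', I would follow the dynamic-programming route of \cite{Nutz.14}. The first step is to pass from the local-martingale-measure set $\cQ$ to the supermartingale-measure set $\cQ_0$ of Definition~\ref{def: super martin abs measures}; the anticipated Lemma~\ref{le:sameSupremum} (whose proof should use that every $Q\in\cQ_0$ is dominated prior to $\zeta$ by a measure supporting a local martingale deflator, via Theorem~\ref{thm: foelmeasure_1}, and conversely $\cQ^P\subset\cQ_0$ up to the obvious localization) gives $\sup_{Q\in\cQ}E^Q[f\1_{\zeta>T}]=\sup_{Q\in\cQ_0}E^Q[f\1_{\zeta>T}]=:v(f)$. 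Next I would define the conditional value function
\[
  \cE_t(\omega):=\sup_{Q\in\cQ_t(\omega)} E^Q\big[f^{t,\omega}\1_{\zeta^{t,\omega}-t>T-t}\big],\qquad (t,\omega)\in[0,T]\times\Omega,
\]
and establish, using Proposition~\ref{pr:QsatisfiesCondA} (i.e.\ that $\{\cQ_t(\omega)\}$ is analytic and stable prior to $\zeta$), that $(t,\omega)\mapsto\cE_t(\omega)$ is upper semianalytic and $\cG_t$-measurable, and that it satisfies the dynamic programming / tower property
\[
  \cE_s(\omega)=\sup_{Q\in\cQ_s(\omega)} E^Q\big[\cE_\tau\,\1_{\zeta^{s,\omega}-s>\tau}\big]
\]
for stopping times $s\leq\tau\leq T$. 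This is exactly the place where conditions (A1)--(A3) for the family $\{\cQ_t(\omega)\}$ are consumed: (A1) for the measurable-selection argument giving upper semianalyticity, (A2) for the ``$\leq$'' half of the DPP (conditioning a near-optimal measure), and (A3) for the ``$\geq$'' half (pasting near-optimal conditional measures).

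The final step is to convert the supermartingale property of the value process into a trading strategy. Having the DPP, $(\cE_t\1_{t<\zeta})_{t\in[0,T]}$ is a $\cQ_0$-supermartingale for every choice, and in particular it is a $Q$-supermartingale for each fixed $Q\in\cQ_0$; I would fix $P\in\cP$, choose any $Q\in\cQ_0(P)$ with $Q\simz P$, apply the optional decomposition / Doob--Meyer-type theorem simultaneously over the dominated family $\{Q'\in\cQ_0:\,Q'\simz P\text{ or }Q'\domz P\}$ (using that prior to $\zeta$ these are dominated by $P$ and that $S$ is a $P$-semimartingale by Corollary~\ref{co:FTAP}) to obtain a predictable $S$-integrable $H^P$ with $\cE_0+(H^P\sint S)\geq \cE_\cdot\1_{\cdot<\zeta}$ up to $\zeta$, hence $\cE_0+(H^P\sint S)_T\geq \cE_T\1_{\zeta>T}=f\1_{\zeta>T}$, and since $\zeta=\infty$ $P$-a.s.\ this reads $\cE_0+(H^P\sint S)_T\geq f$ $P$-a.s.; then one glues these $P$-dependent integrands into a single $H\in\cH(\cE_0)$ using the aggregation result behind the $\cG$-predictable integral (as in \cite{Nutz.11int}) and the fact that $\cE_0=v(f)$ is a constant. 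The main obstacle is precisely this last passage: establishing the optional decomposition in the non-dominated, prior-to-$\zeta$ setting where the martingale measures lose mass at $\zeta$, so that the decomposition must be carried out on the stochastic interval $\dbraco{0,\zeta}$ and then aggregated across $P\in\cP$ — this is where the continuity of $S$ and the structure of $\Omega$ (paths continuous before jumping to $\triangle$) are essential, and where the bulk of the technical work will lie.
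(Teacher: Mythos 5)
Your overall architecture --- the easy inequality via the supermartingale property of nonnegative wealth under each $Q\in\cQ$, the reduction to $\cQ_0$ via Lemma~\ref{le:sameSupremum}, and the conditional value function with dynamic programming resting on Proposition~\ref{pr:QsatisfiesCondA} --- matches the paper's. The genuine gap is in the last step, which you yourself flag as ``the main obstacle'' but then propose to resolve by a method that would fail: you run an optional decomposition separately under each model, obtaining $P$-dependent integrands $H^P$, and then ``glue'' them using the aggregation result of \cite{Nutz.11int}. That result aggregates the stochastic integrals $H\sint S$ of \emph{one fixed} integrand $H$ across a nondominated family of measures; it provides no mechanism for merging a family of measure-dependent integrands $\{H^P\}_{P\in\cP}$, and for mutually singular models there is in general no way to paste such integrands into a single predictable process (the constancy of $\cE_0$ does not help). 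The paper avoids this issue altogether: it performs the optional decomposition measure-by-measure under each $Q\in\cQ$ (Lemma~\ref{le:classicalOptDecomp}) and then constructs a single universal $\G$-predictable candidate $H=c^{SZ}(c^{S})^{\oplus}$ from the \emph{aggregated second characteristics} of the pair $(S,Z)$, via \cite[Proposition~6.6]{NeufeldNutz.13a}; the identity $c^{SZ}=H^{Q}c^{S}$ $Q\times dA$-a.e.\ together with It\^o's isometry then yields $H\sint S=H^{Q}\sint S$ under each $Q$ prior to $\zeta$, and this is precisely what makes one strategy work simultaneously under all $P\in\cP$ (transferring from $Q$ to $P$ via $P\simz Q$ and $\zeta=\infty$ $P$-a.s.). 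Without this, or some substitute for it, your plan does not produce an element of $\cH(x)$.

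Two further steps you skip are prerequisites for any decomposition argument. First, $\cE_t(g)$ has no path regularity, so one must build a c\`adl\`ag modification $Z$ (the paper's Lemma~\ref{le:Zworks}); this requires showing the exceptional set is $\cP$-polar (using $P\simz Q$), that $Z_T=g$ $P$-a.s., and --- nontrivially --- that $Z_0\leq\sup_{Q\in\cQ}E^Q[g]$, which uses stability of $\cQ$ under $\cF_{0+}$-measurable equivalent changes of measure. Second, the classical optional decomposition cannot be invoked directly because $S$ is only a local martingale prior to $\zeta$ under $Q$: the paper localizes along an announcing sequence for $\zeta$, passes to an equivalent $Q'$ under which the stopped $S$ is a local martingale, and checks that $Z$ stopped remains a $Q'$-supermartingale; you correctly identify this as delicate but leave it unresolved. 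The easy direction is essentially right, modulo noting (as the paper does, via predictability of $\zeta$ under the $Q$-augmentation) that $H\1_{\dbraco{0,\zeta}}$ is an admissible integrand under $Q$ and that the integral agrees with the $P$-integral prior to $\zeta$.
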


In order to prove this theorem, we first show that $\cQ$ can equivalently be replaced by $\cQ_0$ in its statement.

\begin{lemma}\label{le:sameSupremum}   
  Let $\NAP$ hold, let  $T\in \R_{+}$ and let
  $f:\Omega\to [0,\infty]$ be a $\cG_{  T }$-measurable function. Then
  \[
    \sup_{Q\in\cQ} E^Q[f\1_{\zeta>T}] = \sup_{Q\in\cQ_{0}} E^Q[f\1_{\zeta>T}].
  \]
\end{lemma}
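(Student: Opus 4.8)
The plan is to prove the two inequalities separately, where the hard direction is $\sup_{Q\in\cQ_0} \geq \sup_{Q\in\cQ}$, since the reverse should follow almost immediately once we know $\cQ \subseteq \cQ_0$ in the relevant sense.

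\textbf{Inclusion of the local martingale measures.} First I would show that every $Q\in\cQ$ induces the supermartingale property on the whole class $\cX^{\rm simp}$, so that $Q\in\cQ_0$ (modulo identifying $\cQ^P\subset\cQ_0(P)$). Indeed, fix $P\in\cP$ and $Q\in\cQ^P$ with localizing sequence $(\tau_n)$ as in Definition~\ref{defn:ELMM}. For $X = X^{x,H}\in\cX^{\rm simp}$, the stopped process $X_{\cdot\wedge\tau_n}$ is a $Q$-local martingale (being $x$ plus a simple integral of the $Q$-martingale $S_{\cdot\wedge\tau_n}$), and since it is bounded below by $0$ on $\dbraco{0,\zeta}$ it is in fact a $Q$-supermartingale on $\dbraco{0,\zeta}$; letting $n\to\infty$ along $\tau_n\uparrow\zeta$ $Q$-a.s.\ and using Fatou, $X\1_{\dbraco{0,\zeta}}$ is a $Q$-supermartingale. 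This is exactly the argument already carried out in Step~1 of the proof of Theorem~\ref{thm: FTAP}. Hence $Q\in\cY_0$, and since $Q\simz P$ gives $Q\in\fP_{\zeta}(P)\supset\Pa{\zeta}{P}$, we conclude $Q\in\cQ_0(P)\subset\cQ_0$. Therefore $\cQ\subseteq\cQ_0$, and taking suprema, $\sup_{Q\in\cQ} E^Q[f\1_{\zeta>T}] \leq \sup_{Q\in\cQ_0} E^Q[f\1_{\zeta>T}]$.

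\textbf{The reverse inequality.} The harder direction is to show $\sup_{Q\in\cQ_0} E^Q[f\1_{\zeta>T}] \leq \sup_{Q\in\cQ} E^Q[f\1_{\zeta>T}]$. Fix $Q\in\cQ_0$, say $Q\in\cQ_0(P)$ for some $P\in\cP$; I need to produce (or approximate by) an element of $\cQ$ without decreasing $E^{\cdot}[f\1_{\zeta>T}]$. The natural route: since $Q\in\cY_0$, Lemma~\ref{le:superSimple} tells us that $S^{0,\cdot}\1_{\dbraco{0,\zeta}} = S\1_{\dbraco{0,\zeta}}$ is a $Q$-supermartingale (apply it with the identity simple strategy, or rather note $S$ is locally a $Q$-supermartingale prior to $\zeta$); combined with $Q\simz P$ and the fact that $S$ is continuous prior to $\zeta$, one can hope to perform a multiplicative (Doob--Meyer / Girsanov-type) change of measure that turns the supermartingale $S\1_{\dbraco{0,\zeta}}$ into a local martingale prior to $\zeta$, at the cost of shrinking $\zeta$ on a $Q$-null-mass-at-infinity part; the resulting measure $\tilde Q$ would lie in $\cQ^P\subset\cQ$. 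The key point is that because the claim carries the indicator $\1_{\zeta>T}$ and $\zeta$ only decreases on a set where the correction matters, one argues $E^{\tilde Q}[f\1_{\zeta>T}] \geq E^Q[f\1_{\zeta>T}]$, or at least can be made arbitrarily close. Alternatively, and perhaps more cleanly, one can invoke the optional decomposition / superhedging duality machinery: since $f\1_{\zeta>T}$ is integrated against supermartingale measures, and a supermartingale measure's ``martingale part'' already sees the same value of a nonnegative claim up to time $T$, the supremum over $\cQ_0$ and over $\cQ$ coincide. I would look to replicate the argument of \cite{Nutz.14} adapted to the prior-to-$\zeta$ setting, using Theorem~\ref{thm: foelmeasure_1} of the Appendix (F\"ollmer's exit measure) to pass from a supermartingale measure to a prior-to-$\zeta$ equivalent local martingale measure with the same prior-to-$\zeta$ density behaviour.

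\textbf{Main obstacle.} The delicate step is the reverse inequality: constructing, from an arbitrary $Q\in\cQ_0$, a measure in $\cQ$ that does not lose value on $f\1_{\zeta>T}$. The subtlety is that a supermartingale measure may have a genuine ``downward drift'' in $S$ prior to $\zeta$, so one cannot simply reuse $Q$; one must either strip the drift via a density transformation (which requires knowing the Doob--Meyer decomposition of $S\1_{\dbraco{0,\zeta}}$ under $Q$ is absolutely continuous, i.e., $S$ is a $Q$-special semimartingale prior to $\zeta$ — true by continuity and local boundedness) and then check the transformed measure still dominates/is dominated by $P$ prior to $\zeta$ and still has $\zeta$ as a foretellable time, or approximate. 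The bookkeeping with the lifetime $\zeta$ and the unnormalized-measure notion of equivalence is where the real care is needed; everything else is a routine adaptation of the localization arguments already used repeatedly in Section~\ref{sec: dyna prog and local mart}.
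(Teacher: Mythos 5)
Your easy direction ($\cQ\subseteq\cQ_{0}$, hence $\sup_{\cQ}\le\sup_{\cQ_{0}}$) is fine and is exactly what the paper takes for granted. The gap is in the reverse inequality, which you correctly identify as the main obstacle but do not actually close. Your primary route---take $Q_{0}\in\cQ_{0}$ and ``strip the drift'' of $S$ under $Q_{0}$ by a Girsanov-type density transformation---does not deliver the required comparison: such a change of measure has density (with respect to $Q_{0}$) given by some stochastic exponential with no sign or monotonicity control, so there is no reason that $E^{\tilde Q}[f\1_{\zeta>T}]\ge E^{Q_{0}}[f\1_{\zeta>T}]$, and your justification (``$\zeta$ only decreases on a set where the correction matters \ldots or at least can be made arbitrarily close'') is not an argument. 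Your fallback, invoking ``optional decomposition / superhedging duality machinery,'' would be circular: this lemma is an ingredient of the proof of Theorem~\ref{th:duality}, not a consequence of it.

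The paper's mechanism, which your sketch never pins down, is to perform the multiplicative decomposition not on $S$ under $Q_{0}$ but on the prior-to-$\zeta$ density process $Y^{0}$ of $Q_{0}$ with respect to $P$ (Remark~\ref{rem : density process constructed from Q}). Since $Q_{0}\in\cY_{0}$, $Y^{0}$ acts as a supermartingale deflator under $P$, and the argument of \cite[Proposition~3.2]{LarsenZitkovic.07} gives $Y^{0}=YD$ with $D$ predictable, nonincreasing, $D_{0}=1$, and $Y$ a strictly positive $(\bF_{+},P)$-local martingale such that $YS$ is an $(\bF_{+},P)$-local martingale. Theorem~\ref{thm: foelmeasure_1} then produces $Q\simz P$ with prior-to-$\zeta$ density $Y$, and since $D\le 1$ one has $Y_{T}\ge Y^{0}_{T}$ pointwise, whence
\[
E^{Q}[f\1_{\zeta>T}]=E^{P}[Y_{T}f]\;\ge\;E^{P}[Y^{0}_{T}f]=E^{Q_{0}}[f\1_{\zeta>T}]
\]
with no approximation and no bookkeeping about where $\zeta$ shrinks (foretellability of $\zeta$ under $Q$ and membership $Q\in\cQ^{P}$ come directly from Theorem~\ref{thm: foelmeasure_1} and the local martingale property of $YS$). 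So your proposal names the right tools (multiplicative decomposition, F\"ollmer's measure) but applies the decomposition to the wrong object and consequently never obtains the pointwise domination of densities that makes the inequality immediate; as written, the reverse inequality remains unproved.
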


\begin{proof} Since $\cQ \subseteq \cQ_{0}$, we only have one non-trivial inequality to prove. Fix $Q_0 \in \cQ_{0}$, and let $P \in \Prob$ be such that $Q_0\ll_{\zeta} P$. By Remark \ref{rem : density process constructed from Q} in the Appendix, one can construct a c\`adl\`ag adapted process $Y^0\geq0$ which is the prior-to-$\zeta$ density of $Q_{0}$ with respect to $P$. Then, the same arguments as in \cite[Proposition 3.2]{LarsenZitkovic.07} show that one may write $Y^0 = Y D$, where $D$ is an $\bF_+$-predictable nonincreasing process with $D_{0}=1$ and $Y$ is a $P$-a.s.\ strictly positive c\`adl\`ag $(\bF_+, P)$-local martingale such that $Y S$ is also an $(\bF_+, P)$-local martingale. Applying Theorem~\ref{thm: foelmeasure_1} from the Appendix, we construct $Q \simz P$ whose  prior-to-$\zeta$ density with respect to $P$ is $Y$.
Clearly, 
$$
 E^{Q}[f\1_{\zeta>T}]=E^{P}[Y_{T}f] \ge E^{P}[Y^0_{T}f] = E^{Q_{0}}[f\1_{\zeta>T}]
$$ 
since $f\ge 0$. It remains to show that $Q \in  \cQ^{P}$, which follows in a straightforward way from the fact that $Y S$ is an $(\bF_+, P)$-local martingale and that $\zeta$ is foretellable under~$Q$; see Definition~\ref{defn: foretellable} and Theorem~\ref{thm: foelmeasure_1} in the Appendix.
\end{proof}


The remainder of this section is devoted to the proof of Theorem~\ref{th:duality}. 
In the course of this proof, $T>0$ is fixed and $f$   satisfies the assumptions stated in the theorem. We will use Lemma~\ref{le:sameSupremum} without further mention. To simplify the notation, we may assume that 
$$
  S=S\1_{[\![0,\zeta[\![}
$$
and moreover we set 
$$
  g:=f\1_{\zeta>T};
$$
note that $g$ is upper semianalytic like $f$.

We begin by proving the easy inequality of the theorem. Let $x\in\R$ and suppose there exists $H\in\cH(x)$ such that $x+ H\sint S_{T}\geq g$ $P$-a.s.\ for all $P\in\cP$. Fix $Q\in\cQ$; then there exists $P\in\cP$ such that $Q\sim_{\zeta}P$. Remark~\ref{rem:foretell_and_pred} from the Appendix shows that $\zeta$ is a predictable stopping time in the $Q$-augmentation $\G^{Q}_{+}$ of $\G_{+}$.  It follows that $H':=H \1_{[\![0,\zeta[\![}$ is predictable in $\G^{Q}_{+}$, and thus $x+H'\sint S$ is a nonnegative local martingale under $Q$; in particular, a $Q$-supermartingale. Using that $g=0$ on $\{\zeta\leq T\}$, we see that $x+ H'\sint S_{T}\geq g$ $Q$-a.s., and now taking expectations yields $x\geq E^Q[g]$. Since $Q\in\cQ$ was arbitrary, the inequality ``$\geq$'' of the theorem follows.

To complete the proof of the theorem, we shall construct in the remainder of this section a strategy $H$ satisfying
\begin{equation}\label{eq:aimH}
  \sup_{Q\in\cQ} E^{Q}[g] + H\sint S_{T} \geq g \quad P \mbox{-a.s.} \quad \mbox{for all} \quad P \in \cP.
\end{equation} 
Given $t\geq0$ and an upper semianalytic function $h\geq 0$ on $\Omega$, we define 
\[
  \cE_t(h)(\omega):=\sup_{Q\in\cQ_{t}(\omega)} E^Q[h^{t,\omega}],\quad\omega\in\Omega.
\] 
Moreover, we denote $\F^{*}=(\cF^{*}_{t})_{t\in \R_{+}}$. 

\begin{lemma}\label{le:Esupermart}
  The process $\{\cE_{t}(g)\}_{t\in[0,T]}$ is a $(Q,\F^{*})$-supermartingale for all $Q\in\cQ_{0}$, and in particular for all $Q\in\cQ$.
\end{lemma}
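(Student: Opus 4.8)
**Proof plan for Lemma 4.12 ($\{\cE_t(g)\}$ is a $(Q,\F^*)$-supermartingale).**

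The plan is to establish, for every $Q\in\cQ_0$ and every $0\le s\le t\le T$, the two statements (a) $\cE_t(g)$ is $\cF_t^*$-measurable and $Q$-integrable, and (b) $E^Q[\cE_t(g)\mid\cF_s^*]\le \cE_s(g)$ $Q$-a.s. Measurability in (a) is a measurable-selection issue: the function $(t,\omega,h)\mapsto \cE_t(h)(\omega)$ is an upper-semianalytic-value of a supremum of integrals $E^{Q'}[h^{t,\omega}]$ over $Q'$ ranging in the analytic-graph family $\cQ_t(\omega)$ (Proposition 4.9 / Lemma 4.11 (A1)), so by the standard results on upper semianalytic functions and projections of analytic sets (as in \cite{BertsekasShreve.78}), $\cE_t(g)$ is upper semianalytic, hence $\cF_t^*$-measurable; integrability follows since $0\le\cE_t(g)\le \cE_0(g)+\text{(bounded contributions)}$ will come out of the supermartingale inequality together with the finiteness hypothesis $\sup_{Q\in\cQ}E^Q[f\1_{\zeta>T}]<\infty$ in Theorem~\ref{th:duality}, which by Lemma~\ref{le:sameSupremum} equals $\sup_{Q\in\cQ_0}E^Q[g]=\cE_0(g)<\infty$.

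For the supermartingale inequality (b), first one localizes to $\{\zeta>s\}$: on $\{\zeta\le s\}$ we have $g=g\1_{\zeta>T}=0$ and $\cE_t(g)=\cE_s(g)=0$ (because $g^{t,\omega}\equiv 0$ when $\omega$ is already in the cemetery, and every $Q'\in\cQ_t(\omega)$ is then trivial), so nothing to prove there. On $\{\zeta>s\}$, fix a regular conditional distribution $Q_s^\omega$; by (A2) for the family $\{\cQ_t(\omega)\}$ (Lemma~4.13) we have $Q^{t-s,\omega'}\in\cQ_t(\omega\otimes_s\omega')$ for $Q_s^\omega$-a.e.\ $\omega'\in\{\zeta^{s,\omega}-s>t-s\}$, hence by definition of $\cE_t$ and \eqref{eq:conditioning},
\[
  E^{Q^{s,\omega}}\!\big[(\cE_t(g))^{s,\omega}\,\1_{\zeta^{s,\omega}-s>t-s}\big]
  \;=\;E^{Q_s^\omega}\!\big[\cE_{t}(g)\,\1_{\zeta>t}\big]
  \;\ge\;E^{Q_s^\omega}\!\big[E^{Q_s^\omega}[g\mid\cF_t]\big]
  \;=\;E^{Q_s^\omega}[g\1_{\zeta>t}],
\]
using $g=g\1_{\zeta>T}=g\1_{\zeta>t}$. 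Wait — I have the inequality the wrong way for a supermartingale; the correct chain is: for $Q$-a.e.\ $\omega$, $\cE_s(g)(\omega)=\sup_{Q'\in\cQ_s(\omega)}E^{Q'}[g^{s,\omega}]\ge E^{Q^{s,\omega}}[g^{s,\omega}]$, and one must show $E^{Q^{s,\omega}}[g^{s,\omega}]\ge E^{Q}[\cE_t(g)\mid\cF_s](\omega)$. The key observation is that $g^{s,\omega}=0$ off $\{\zeta^{s,\omega}-s>t-s\}$ when composed appropriately (since $T\ge t$), so $E^{Q^{s,\omega}}[g^{s,\omega}] = E^{Q^{s,\omega}}\big[(g^{s,\omega})\1_{\zeta^{s,\omega}-s>t-s}\big]$, and conditioning this inner expectation at level $t-s$ under $Q^{s,\omega}$ together with (A2) (which says the conditioned measures $Q^{t-s,\omega'}$ land in the right $\cQ$-set, so their $g$-expectation is $\le \cE_t(g)$ evaluated at $\omega\otimes_s\omega'$) gives $E^{Q^{s,\omega}}[g^{s,\omega}]\le E^{Q^{s,\omega}}\big[(\cE_t(g))^{s,\omega}\big]=E^Q[\cE_t(g)\mid\cF_s](\omega)$. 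That is the wrong direction again — so the actual content is that $\cE_s$ dominates the \emph{conditioned $g$-value}, not the conditioned $\cE_t$-value, and one must instead bound $\cE_t(g)(\omega\otimes_s\omega')$ from above by a fixed-$Q'$ expectation using a pasting construction.

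So the heart of the proof is the pasting step, via condition (A3). Given $\eps>0$, use measurable selection to pick an $\cF_{t-s}^*$-measurable kernel $\omega'\mapsto\nu(\omega')\in\cQ_t(\omega\otimes_s\omega')$ that is $\eps$-optimal for $\cE_t(g)(\omega\otimes_s\omega')$; paste $\nu$ onto $Q^{s,\omega}$ at time $t-s$ as in (A3) to obtain a measure $\bar Q\in\cQ_s(\omega)$ (after correcting off $\{\zeta^{s,\omega}-s>t-s\}$ by $Q$'s own conditional, as in the Definition~\ref{def : analytic and stable family} formula); then
\[
  \cE_s(g)(\omega)\;\ge\;E^{\bar Q}[g^{s,\omega}]\;=\;E^{Q^{s,\omega}}\!\Big[\1_{\zeta^{s,\omega}-s>t-s}\,E^{\nu(\cdot)}\big[(g^{s,\omega})^{t-s,\cdot}\big]\Big]\;\ge\;E^{Q^{s,\omega}}\big[(\cE_t(g))^{s,\omega}\big]-\eps\;=\;E^Q[\cE_t(g)\mid\cF_s](\omega)-\eps,
\]
using \eqref{eq:conditioning} twice, the tower property, and the fact that $(g^{s,\omega})^{t-s,\omega'}=g^{t,\omega\otimes_s\omega'}$ vanishes off $\{\zeta^{s,\omega}-s>t-s\}$. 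Letting $\eps\downarrow 0$ yields $E^Q[\cE_t(g)\mid\cF_s^*]\le\cE_s(g)$ $Q$-a.s.\ on $\{\zeta>s\}$, completing (b). I expect the main obstacle to be the bookkeeping in the pasting step: verifying that the pasted kernel $\bar Q$ genuinely lies in $\cQ_s(\omega)$ (which is exactly where (A3), now available for $\{\cQ_t(\omega)\}$ by Lemma~4.14, does the work), and handling the $\eps$-optimal measurable selection on the universally measurable level so that all the conditional-expectation manipulations via \eqref{eq:conditioning} are justified $Q$-a.s.; the shift/restriction identities for $g^{t,\omega}$ and $\zeta^{t,\omega}$ are routine but must be tracked carefully because $g$ is only upper semianalytic.
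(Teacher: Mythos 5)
Your plan is correct and is essentially the argument the paper uses: the paper's proof simply invokes an adaptation of \cite[Theorem~2.3]{NutzVanHandel.12}, whose content is exactly your combination of (A1) for upper semianalyticity and $\cF^{*}_{t}$-measurability of $\cE_t(g)$, (A2) for conditioning, and (A3) together with a Jankov--von Neumann selection of $\eps$-optimal kernels for the pasting inequality $E^{Q}[\cE_t(g)\mid\cF^{*}_{s}]\le\cE_s(g)$, with integrability obtained from the case $s=0$ and the hypothesis $\cE_0(g)<\infty$. The only bookkeeping caveat is that (A3) requires an $\cF_{t-s}$-measurable kernel, so your universally measurable selector must be modified on a null set (as the paper does elsewhere via \cite[Lemma~7.27]{BertsekasShreve.78}), and the $\eps$-optimality should be capped where $\cE_t(g)=\infty$; both are routine and your plan already flags the selection issue.
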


\begin{proof}
  Let $s\leq t$. In view of Proposition~\ref{pr:QsatisfiesCondA} and Lemma~\ref{le:OmegaPolish}, we may adapt the proof of \cite[Theorem~2.3]{NutzVanHandel.12} to establish that $\cE_t(g)$ is $\cF_t^*$-measurable and upper semianalytic, that 
   \[     \cE_s(g\1_{\zeta>t})(\omega) = \cE_s(\cE_t(g)\1_{\zeta>t})(\omega)\quad\mbox{for all}\quad \omega\in\Omega,
   \]
   and that
   \[
     \cE_s(g\1_{\zeta>t}) = \mathop{\esssup^Q}_{Q'\in \cQ^{Q}_{s}} E^{Q'}[\cE_t(g)\1_{\zeta>t}|\cF_s]\quad Q\as\quad\mbox{for all}\quad Q\in\cQ,
   \] 
  where $\cQ^{Q}_{s}=\{Q'\in \cQ:\, Q'=Q \mbox{ on } \cF_s\}$. Since $\{\zeta>T\} \subseteq \{\zeta>t\}$ for $t\leq T$, we have $g\1_{\zeta>t}=f\1_{\zeta>T}\1_{\zeta>t}=f\1_{\zeta>T}=g$. Hence, the above simplifies to
     \begin{equation}\label{eq:DPP}
     \cE_s(g)  = \cE_s(\cE_t(g)),\quad s\leq t\leq T
   \end{equation}
   and
  \begin{equation}\label{eq:esssupDPP}
     \cE_s(g) = \mathop{\esssup^Q}_{Q'\in \cQ^{Q}_{s}} E^{Q'}[\cE_t(g)|\cF_s]\quad Q\as\quad\mbox{for all}\quad Q\in\cQ,\quad s\leq t\leq T.
  \end{equation}
  Our assumption that $\cE_{0}(g)<\infty$ and~\eqref{eq:DPP} applied with $s=0$ yield that 
  $\sup_{Q\in\cQ} E^{Q}[\cE_{t}(g)]<\infty$ for all $t$; in particular, $\cE_{t}(g)$ is integrable under all~$Q\in\cQ$.  Moreover, \eqref{eq:esssupDPP} yields that
  \[
     \cE_s(g) \geq E^{Q}[\cE_t(g)|\cF_s]=E^{Q}[\cE_t(g)|\cF^{*}_s]\quad Q\as\quad\mbox{for all}\quad \!Q\in\cQ,\quad s\leq t\leq T,
  \]
  which is the desired supermartingale property.
\end{proof}

\begin{lemma}\label{le:Zworks}
  Define
  \[
    Z'_{t}:= \limsup_{r\downarrow t,\, r\in \Q} \cE_{r}(g)\quad\mbox{for}\quad t<T \quad\mbox{and}\quad Z'_{T}:= \cE_{T}(g),
  \]
  let $N$ be the set of all $\omega\in\Omega$ such that $Z'(\omega)$ is not c\`adl\`ag, and
  \[
    Z:=Z'\1_{N^{c}}.
  \]
  Then $(Z_{t})_{t\in[0,T]}$ is a c\`adl\`ag, $\G_{+}$-adapted process which is a $Q$-supermartingale for all $Q\in\cQ$. Moreover,
  \begin{equation}\label{eq:aimZ}
    Z_{0} \leq \sup_{Q\in\cQ} E^{Q}[g] \quad \mbox{and} \quad Z_{T}=g \quad P\mbox{-a.s.} \quad \mbox{for all} \quad P \in \cP.
  \end{equation}
\end{lemma}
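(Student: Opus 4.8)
The plan is to verify, in order, each of the four claimed properties of the process $Z$: that it is $\G_{+}$-adapted, c\`adl\`ag, a $Q$-supermartingale for all $Q\in\cQ$, and that the boundary inequalities~\eqref{eq:aimZ} hold. First I would observe that $\cE_{r}(g)$ is $\cF^{*}_{r}$-measurable for each $r$ by (the proof of) Lemma~\ref{le:Esupermart}, so $Z'_{t}=\limsup_{r\downarrow t,\,r\in\Q}\cE_{r}(g)$ is $\cF^{*}_{t+}$-measurable, hence $\G_{+}$-measurable; thus $Z'$ and therefore $Z$ are $\G_{+}$-adapted. The set $N$ on which $Z'$ fails to be c\`adl\`ag is $\G_{+}$-measurable by a standard pathwise argument, so $Z=Z'\1_{N^{c}}$ is $\G_{+}$-adapted; by construction every path of $Z$ is c\`adl\`ag (paths in $N$ are replaced by the zero path, which is c\`adl\`ag).

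The heart of the matter is the $Q$-supermartingale property together with $Z_{T}=g$ $P$-a.s.\ for all $P\in\cP$. Here I would invoke Lemma~\ref{le:Esupermart}: $\{\cE_{t}(g)\}_{t\in[0,T]}$ is a $(Q,\F^{*})$-supermartingale for every $Q\in\cQ$. The standard regularization theorem for supermartingales (applicable because $t\mapsto E^{Q}[\cE_{t}(g)]$ is finite and right-continuous along $\Q$, the latter following from~\eqref{eq:DPP}/\eqref{eq:esssupDPP}) gives that the right limit $Z'_{t}=\limsup_{r\downarrow t,\,r\in\Q}\cE_{r}(g)$ exists and is finite $Q$-a.s., defines a c\`adl\`ag $(Q,\F^{*}_{+})$-supermartingale, and satisfies $Z'_{t}\geq \cE_{t}(g)$ $Q$-a.s.\ with equality when $t\mapsto E^{Q}[\cE_{t}(g)]$ is right-continuous at $t$. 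In particular the $Q$-null set where $Z'$ is not c\`adl\`ag is contained in $N$, and since $N\in\mathcal{N}^{\cP}$ is $Q$-null for every $Q\in\cQ$ (because $\cQ$-measures are prior-to-$\zeta$ equivalent to some $P\in\cP$ and $\mathcal{N}^{\cP}$-sets are handled by Remark~\ref{rem : density process constructed from Q}), modifying on $N$ does not affect the $Q$-supermartingale property; hence $Z$ is a c\`adl\`ag $Q$-supermartingale for all $Q\in\cQ$. For $Z_{T}=g$ $P$-a.s.: by definition $Z'_{T}=\cE_{T}(g)$, and $\cE_{T}(g)(\omega)=\sup_{Q\in\cQ_{T}(\omega)}E^{Q}[g^{T,\omega}]=g(\omega)$ since $g^{T,\omega}$ is the constant $g(\omega)$ (as $g$ is $\cF^{*}_{T}$-measurable) and every $Q\in\cQ_{T}(\omega)$ is a probability; since $P\{N\}=0$ for $P\in\cP$, we get $Z_{T}=Z'_{T}=g$ $P$-a.s. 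The inequality $Z_{0}\leq\sup_{Q\in\cQ}E^{Q}[g]$ follows from $Z_{0}\leq$ (regularized value at $0$) and the $Q$-supermartingale chain $\cE_{0}(g)\geq E^{Q}[\cE_{T}(g)]=E^{Q}[g]$ combined with~\eqref{eq:DPP}, giving $Z_{0}\leq\cE_{0}(g)=\sup_{Q\in\cQ_{0}}E^{Q}[g]=\sup_{Q\in\cQ}E^{Q}[g]$ by Lemma~\ref{le:sameSupremum}.

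The step I expect to be the main obstacle is making the regularization argument fully rigorous in the present \emph{nondominated} setting: the supermartingale $\{\cE_{t}(g)\}$ is a supermartingale simultaneously under every $Q\in\cQ$, and one must produce a \emph{single} pathwise-defined modification $Z$ that works for all of them at once. This is exactly why $Z'$ is defined via a pathwise $\limsup$ along the rationals rather than by the usual $Q$-dependent regularization: the pathwise $\limsup$ is a genuine random variable not depending on $Q$, and one then argues that for each fixed $Q$ it $Q$-a.s.\ coincides with the classical c\`adl\`ag modification. The delicate point is checking that $\limsup_{r\downarrow t}\cE_{r}(g)$ is $Q$-a.s.\ finite and that the exceptional null set where it misbehaves is captured by $N\in\mathcal{N}^{\cP}$, so that the single modification $Z=Z'\1_{N^{c}}$ is simultaneously a good c\`adl\`ag version under all $Q\in\cQ$; once this is in place the supermartingale inequality $Z_{s}\geq E^{Q}[Z_{t}\mid\cG_{s+}]$ for $s\leq t$ in $[0,T]$ follows by Fatou along a rational sequence $r_{n}\downarrow t$ from the $\F^{*}$-supermartingale property of $\{\cE_{r}(g)\}$ and the tower property. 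Everything else is routine bookkeeping with the dynamic programming identities~\eqref{eq:DPP} and~\eqref{eq:esssupDPP} already established.
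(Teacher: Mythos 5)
Your overall route is the same as the paper's (pathwise regularization of the supermartingale $\{\cE_{t}(g)\}$ along the rationals plus the supermartingale modification theorem), but two steps that you treat as routine are exactly where the real work lies, and as written they are gaps. The more serious one is the bound $Z_{0}\leq\sup_{Q\in\cQ}E^{Q}[g]$. For each fixed $Q$, the regularization theorem gives only $E^{Q}[Z'_{0}\mid\cF^{*}_{0}]\leq\cE_{0}(g)$ (your intermediate claim ``$Z'_{t}\geq\cE_{t}(g)$ $Q$-a.s.'' has the inequality in the wrong direction), i.e.\ $E^{Q}[Z_{0}]\leq\cE_{0}(g)$ since $\cF_{0}$ is trivial. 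This does \emph{not} yield the pathwise statement $Z_{0}\leq\cE_{0}(g)$ $Q$-a.s.: $Z_{0}$ is only $\cG_{0+}$-measurable and may exceed the constant $\cE_{0}(g)$ on an $\cF_{0+}$-event of positive $Q$-probability while still having smaller mean. The paper closes this by showing that $\sup_{Q'\in\cQ}E^{Q'}[Z_{0}]$ dominates the $Q$-essential supremum of $Z_{0}$ for every $Q\in\cQ$, which rests on the stability of $\cQ$ under $\cF_{0+}$-measurable \emph{equivalent} changes of measure (constructed through Theorem~\ref{thm: foelmeasure_1}, as in the proof of inequality (3.3) of \cite{Nutz.14}). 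That measure-change argument is the missing idea; without it the first half of \eqref{eq:aimZ} is not proved.

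The second gap is the claim $N\in\cN^{\cP}$, which you assert in a parenthetical but then use twice (for $\G_{+}$-adaptedness of $Z$ via $N\in\cG_{0}$, and for $Z_{T}=Z'_{T}$ $P$-a.s.). The modification theorem only gives that $N$ is $Q$-null for every $Q\in\cQ$; the content is transferring this to every $P\in\cP$. Since $Q\simz P$ controls null sets only on $\cF_{t}\cap\{t<\zeta\}$, the paper decomposes $N=(N\cap\{\zeta\leq T\})\cup(N\cap\{\zeta>T\})$, disposes of the first piece using $\zeta=\infty$ $\cP$-q.s., and for the second piece first wraps $N$ (which is only universally measurable) in an $\cF_{T}$-measurable $Q$-nullset $N^{Q}$ before invoking prior-to-$\zeta$ equivalence of some $Q\in\cQ$ with the given $P$ (whose existence comes from Theorem~\ref{thm: FTAP} under $\NAP$). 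Your citation of Remark~\ref{rem : density process constructed from Q} points at roughly the right toolbox, but the decomposition over $\{\zeta\leq T\}$ versus $\{\zeta>T\}$ and the wrapping step are not optional refinements; they are what makes the transfer legitimate in this prior-to-$\zeta$ setting. The remaining parts of your argument (adaptedness, c\`adl\`ag paths, the supermartingale property of $Z$ under each $Q$, and $Z_{T}=\cE_{T}(g)=g$ $P$-a.s.\ up to the identification $\cG_{T}=\cF_{T}$ modulo $P$-nullsets) are in line with the paper.
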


\begin{proof}
  Recall Lemma~\ref{le:Esupermart}. The modification theorem for supermartingales \cite[Theorem~VI.2]{DellacherieMeyer.82} yields that $N$ is $\cQ$-polar, the limit superior in its definition is actually a limit outside a $\cQ$-polar set, and moreover that $Z'$ is a $(\G_+,Q)$-supermartingale for all $Q\in\cQ$. 

	To see that $N\in\cN^{\cP}$, we fix an arbitrary $P\in\cP$ and show that $N$ is $P$-null. Indeed, we may decompose $N$ as
	$$
	  N = (N \cap \{\zeta \leq T\}) \cup (N \cap \{\zeta > T\}).
	$$
	The first set is $P$-null because $\{\zeta <\infty\}$ was assumed to be $\cP$-polar. We know that there exists $Q\in\cQ$ such that $P\sim_{\zeta} Q$. Since $N$ is $Q$-null relative to $\cF^{*}_{T}$, there exists an $\cF_{T}$-measurable $Q$-nullset $N^{Q}$ such that $N\subseteq N^{Q}$. Now $P\sim_{\zeta} Q$ implies that $  N^{Q}  \cap \{\zeta > T\}$ is $P$-null, and then so is $N \cap \{\zeta > T\}$. As a result, we have $N\in\cN^{\cP}$ and in particular $N\in\cG_{0}$. This implies that 
	$Z:=Z'\1_{N^{c}}$ is still a $(\G_+,Q)$-supermartingale for all $Q\in\cQ$, while in addition all paths of $Z$ are c\`adl\`ag. 
	Moreover, for any $P\in\cP$, it follows from $\cG_{T}=\cF_{T}$ $P$-a.s.\ and \eqref{eq:esssupDPP} that $Z_{T}=Z'_{T}=\cE_{T}(g)=g$ $P$-a.s. 
	
	It remains to show the first part of~\eqref{eq:aimZ}. Since $Z_0$ is $\cG_{0+}$-measurable,  $\cG_{0+}$ is equal to $\cF_{0+}$ up to $P$-nullsets for any $P\in\cP$, and any $P\in\cP$ is dominated on $\cF_{0+}$ by some $Q\in\cQ$, it suffices to show that
	$$
	  Z_0 \leq \sup_{Q'\in\cQ} E^{Q'}[g]\equiv \cE_0(g) \quad Q\mbox{-a.s.}
	$$
	for all $Q\in\cQ$. The proof of this fact is similar to the proof of \cite[Inequality~(3.3)]{Nutz.14}. Namely, it follows from Lemma~\ref{le:Esupermart} and the construction of $Z$ that 
	$$
	 \sup_{Q'\in\cQ}E^{Q'}[Z_0]\leq \sup_{Q'\in\cQ} E^{Q'}[g].
	$$
	Then, one shows that $\sup_{Q'\in\cQ}E^{Q'}[Z_0]$ dominates the $Q$-essential supremum of $Z_{0}$ for any $Q\in\cQ$ by verifying that $\cQ$ is stable under $\cF_{0+}$-measurable, equivalent changes of measure---see Theorem~\ref{thm: foelmeasure_1}.  We omit the details.
\end{proof}

\begin{lemma}\label{le:classicalOptDecomp}
  Let $Q\in\cQ$. Then there exists a $\G_{+}^{Q}$-predictable process $H^{Q}$ which is $S$-integrable under $Q$ such that 
  $$
    Z-  H^{Q}  \sint S\quad \mbox{is nonincreasing $Q$-a.s.\ on $[\![0,\zeta[\![\cap [\![0,T]\!]$.}
  $$
\end{lemma}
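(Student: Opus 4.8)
The plan is to apply the classical optional decomposition theorem for nonnegative supermartingales, but relative to the single measure $Q$ and its natural market $S$, taking care of the lifetime $\zeta$. Fix $Q\in\cQ$ and let $P\in\cP$ be such that $Q\simz P$. First I would recall that $\zeta$ is foretellable under $Q$ (Definition~\ref{defn: foretellable} and Theorem~\ref{thm: foelmeasure_1}), hence a predictable stopping time in the $Q$-augmentation $\G^{Q}_{+}$ of $\G_{+}$ (Remark~\ref{rem:foretell_and_pred}), and that there is a localizing sequence $(\tau_n)\subset\cT_+$ with $\tau_n<\zeta$, $\tau_n\uparrow\zeta$ $Q$-a.s., such that $S_{\cdot\wedge\tau_n}$ is a $(\bF_+,Q)$-martingale. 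Since we normalized $S=S\1_{\dbra{0,\zeta}}$, the process $S$ is, under $Q$, a local martingale up to $\zeta$ in the precise sense that $S^{\tau_n}$ is a true martingale for each $n$; in particular $S$ is a $Q$-semimartingale on $\dbraoc{0,\zeta}$, which is all we need since we only make a claim on the stochastic interval $\dbra{0,\zeta}\cap\dbra{0,T}$.

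Next I would observe that the process $(Z_t)_{t\in[0,T]}$ from Lemma~\ref{le:Zworks} is a nonnegative c\`adl\`ag $(\G_+,Q)$-supermartingale, hence so is the stopped process $Z^{\tau_n}$, and on $\dbra{0,\tau_n}$ the integrator $S^{\tau_n}$ is a genuine $Q$-martingale. On this stochastic interval I would invoke the optional decomposition theorem in the single-measure setting where the set of equivalent martingale measures is taken to be those equivalent to $Q$ under which $S^{\tau_n}$ stays a (local) martingale—this is the classical result (e.g.\ F\"ollmer--Kabanov, or in the form used in \cite{Nutz.14}). It yields a $\G_+^{Q}$-predictable, $S$-integrable (under $Q$) process $H^{Q,n}$ such that $Z^{\tau_n}-H^{Q,n}\sint S^{\tau_n}$ is nonincreasing $Q$-a.s.\ on $[0,T]$. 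The standard consistency argument—uniqueness of the integrand on $\dbra{0,\tau_n}\cap\dbra{0,\tau_{n+1}}$ up to $Q$-indistinguishability, because $S$ has continuous paths on $\dbra{0,\zeta}$ and thus the Doob--Meyer/integrand is uniquely determined there—lets me patch the $H^{Q,n}$ into a single $\G_+^{Q}$-predictable process $H^{Q}$ on $\bigcup_n\dbra{0,\tau_n}=\dbra{0,\zeta}$ (up to a $Q$-null set), $S$-integrable under $Q$, with $Z-H^{Q}\sint S$ nonincreasing $Q$-a.s.\ on $\dbra{0,\zeta}\cap\dbra{0,T}$. Outside $\dbra{0,\zeta}$ we may simply set $H^Q=0$, which is consistent with $S=S\1_{\dbra{0,\zeta}}$.

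The main obstacle I expect is the interplay between the lifetime $\zeta$ and the applicability of the optional decomposition theorem: the integrator $S$ is only a local martingale up to $\zeta$ and may genuinely ``die'' at $\zeta<\infty$ under $Q$, so one cannot directly quote a theorem stated for martingales on $[0,T]$. The localization via $(\tau_n)$ circumvents this, but one must check that the integrands glue correctly across $n$—this is where the continuity of $S$ on $\dbra{0,\zeta}$ (guaranteed $\cP$-q.s.\ and hence $Q$-a.s.\ prior to $\zeta$ by $Q\simz P$) is essential, since for continuous integrators the predictable integrand in the optional decomposition is uniquely pinned down by the absolutely continuous part of $Z$ with respect to $S$. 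A secondary technical point is the choice of filtration: the optional decomposition produces a predictable process in the $Q$-augmented filtration $\G_+^{Q}$, which is exactly what the statement asks for, so no further filtration reduction is needed here. Finally one should note that integrability of $Z$ (needed to start the optional decomposition) is automatic since $Z$ is a nonnegative supermartingale with $E^{Q}[Z_0]\le\cE_0(g)<\infty$.
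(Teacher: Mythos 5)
Your overall strategy---localize before $\zeta$, apply the F\"ollmer--Kabanov optional decomposition on each stopped interval, and glue---is the same as the paper's, but there is a genuine gap exactly where you invoke the decomposition theorem. That theorem requires the process $Z^{\tau_n}$ to be a supermartingale under \emph{every} probability $Q'$ equivalent to $Q$ (on $\cF_T$, say) under which $S^{\tau_n}$ is a local martingale; the supermartingale property under the single measure $Q$, which is all you establish, is not sufficient, and the conclusion can genuinely fail without the universal property (take $S$ constant: then $H\sint S\equiv 0$ and the lemma would force a one-measure supermartingale to be nonincreasing). The difficulty is that such a $Q'$ need not belong to $\cQ$ (it need not make the \emph{unstopped} wealth processes supermartingales), so Lemma~\ref{le:Zworks} does not apply to it directly. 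This verification is precisely the nontrivial content of the paper's proof: given $Q'$ with density process $Y'$ relative to $Q$, one stops the density at $\tau_n$, i.e.\ considers the measure $Q''$ with density $Y'_{\cdot\wedge\tau_n}$, checks that $Q''\in\cQ$, concludes from Lemma~\ref{le:Zworks} that $Z$ is a $Q''$-supermartingale, and then uses $Q''=Q'$ on $\cG_{\tau_n+}$ to deduce that $Z^{\tau_n}$ is a $Q'$-supermartingale. Without some argument of this kind, your appeal to the ``classical result'' is unjustified.

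Two smaller points. The gluing across $n$ should not be based on uniqueness of the integrand: the optional decomposition is in general not unique even for continuous integrators, and your claim that continuity ``pins down'' $H$ is not needed. It suffices to set $H^{Q}:=H^{Q,n}$ on $\dbraoc{\tau_{n-1},\tau_n}$ and note that $Z-H^{Q}\sint S$ then has nonincreasing increments on each such interval, hence is nonincreasing on $[\![0,\zeta[\![\cap[\![0,T]\!]$; this is the content of the paper's ``passage to the limit.'' Finally, the paper localizes with $\tau_n=\sigma_n\wedge T$ for an announcing sequence $(\sigma_n)$ of $\zeta$ under $Q$, which matches your localization in substance; that part of your argument, as well as the integrability of $Z$ and the choice of filtration $\G_+^{Q}$, is fine.
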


\begin{proof}
  Let $\sigma_{n}$ be an announcing sequence for $\zeta$ associated with $Q$ and set $\tau_{n}:=\sigma_{n}\wedge T$. Let $Q'$ be a probability on $\cF_{T}$ which is equivalent to $Q$ and such that $S^{\tau_{n}}$ is a $Q'$-local martingale; we show that $Z^{\tau_{n}}$ is a $Q'$-supermartingale. Indeed, let $Y'=(Y'_{t})_{t\in[0,T]}$ be the density process of $Q'$ with respect to $Q$ and the filtration $\G_{+}^{Q}$, a strictly positive $Q$-martingale with unit expectation. Define
  $$
    Y''_{t} := Y'_{t\wedge \tau_{n}},\quad t\geq 0;
  $$
  then $Y''$ is the density process of a probability $Q''$ with respect to $Q$ and it is elementary to verify that $Q''\in\cQ$. Thus, $Z$ is a $Q''$-supermartingale by Lemma~\ref{le:Zworks}. As $Q''=Q'$ on $\cG_{\tau_{n}+}$, it follows that $Z^{\tau_{n}}$ is a $Q'$-supermartingale as desired. As a result, we may apply the classical optional decomposition theorem (see \cite{FollmerKabanov.98}) to obtain an integrand $H^{Q,n}$ such that
  $$
    Z^{\tau_{n}}- H^{Q,n}\sint S^{\tau_{n}}\quad \mbox{is nonincreasing $Q$-a.s.}
  $$
  The result follows by a passage to the limit $n\to\infty$.
\end{proof}

\begin{proof}[End of the Proof of Theorem~\ref{th:duality}]
	We can now construct $H$ as in \eqref{eq:aimH} by arguments similar to the proof of \cite[Theorem~2.4]{Nutz.14}. To this end, we recall that $S=S\1_{[\![0,\zeta[\![}$. Moreover, as we will be working in the filtration $\G$ and $\cN^{\cP}\subset\cG_{0}$, we may assume without loss of generality that all paths of~$S$ are continuous prior to~$\zeta$. 
	
The $(d+1)$-dimensional process $(S,Z)$ is essentially a $\G_{+}$-semimartingale under all $Q\in\cQ$; that is, modulo the fact that $S$ may fail to have a left limit at $\zeta$. Following the construction of \cite[Proposition~6.6]{NeufeldNutz.13a}\footnote{That proposition does not use the separability assumptions on the filtration that are imposed for the main results of \cite{NeufeldNutz.13a}.}, there exists a $\G_{+}$-predictable (and hence $\G$-predictable) process $C^{(S,Z)}$ with values in $\S^{d+1}_{+}$ (the set of nonnegative definite symmetric matrices), having $\cQ$-q.s.\ continuous and nondecreasing paths prior to $\zeta$, and which coincides $Q$-a.s.\ with $\br{(S,Z)^{c}}^{Q}$ under each $Q\in\cQ$,  prior to $\zeta$. Here $\br{(S,Z)^{c}}^{Q}$ denotes the usual second characteristic of $(S,Z)$ under $Q$; i.e., the quadratic covariation process of the continuous local martingale part of $(S,Z)$.
	
	Let $C^{S}$ be the $d\times d$ submatrix corresponding to $S$ and let $C^{SZ}$ be the $d$-dimensional vector corresponding to the quadratic covariation of $S$ and $Z$. Let $A_{t}:=\tr C^{S}_{t}$ be the trace of $C^{S}$; then, prior to $\zeta$, $C^{S}\ll A$ $\cQ$-q.s.\ and $C^{SZ}\ll A$ $\cQ$-q.s.\ (i.e., absolute continuity holds outside a polar set). Thus, we have $dC^{S} = c^{S}dA$ $\cQ$-q.s.\ and $dC^{SZ} = c^{SZ}dA$ $\cQ$-q.s.\ for the derivatives defined by
	\[
		c^{S}_{t}:= \tilde{c}^{S}_{t} \1_{\{\tilde{c}^{S}_{t}\in \S^{d}_{+}\}}, 
		\quad \tilde{c}^{S}_{t} := \limsup_{n\to\infty} \frac{C^{S}_{t} - C^{S}_{(t-1/n)\vee 0}} {A_{t} - A_{(t-1/n)\vee 0}}
  \]
  and
	\[
		c^{SZ}_{t}:= \tilde{c}^{SZ}_{t} \1_{\{\tilde{c}^{SZ}_{t}\in \R^{d}\}}, 
		\quad \tilde{c}^{SZ}_{t} := \limsup_{n\to\infty} \frac{C^{SZ}_{t} - C^{SZ}_{(t-1/n)\vee 0}} {A_{t} - A_{(t-1/n)\vee 0}},
  \]  
  where all operations are componentwise and $0/0:=0$.
  Let $(c^{S})^{\oplus}$ be the Moore--Penrose pseudoinverse of $c^{S}$ and define the $\G$-predictable process
	\[
	  H:= \begin{cases}
	    c^{SZ} (c^{S})^{\oplus} &\mbox{on }[\![0,\zeta[\![\cap [\![0,T]\!], \\
	    0 & \mbox{otherwise};
	  \end{cases}
	\]	
	we show that $H$ satisfies~\eqref{eq:aimH}.
	
	Fix $Q\in\cQ$. By Lemma~\ref{le:classicalOptDecomp}, there exist an $S$-integrable process $H^{Q}$ and a nondecreasing process $K^{Q}$ such that 
	\begin{equation}\label{eq:usualOptDecomp}
	  Z=Z_{0}+ H^{Q}\sint S - K^{Q}\quad \mbox{$Q$-a.s.\ on $[\![0,\zeta[\![\cap [\![0,T]\!]$.}
	\end{equation}
  It follows that	
  \begin{equation*}
	  d\br{S,Z}=H^{Q} d\br{S} \quad Q\as,
	\end{equation*}
	or equivalently
	$$
	  c^{SZ}=H^{Q} c^{S} \quad Q\times dA\mbox{-a.e.}
	$$
	By It\^o's isometry, this implies that $H$ is $S$-integrable under $Q$ and
	$$
	  H\sint S = H^{Q}\sint S \quad \mbox{$Q$-a.s.\ on $[\![0,\zeta[\![\cap [\![0,T]\!]$.}
	$$
	Now~\eqref{eq:usualOptDecomp} implies that
	$$
    Z-Z_{0}- H\sint S \quad \mbox{is nonincreasing and nonpositive $Q$-a.s.\ on $[\![0,\zeta[\![\cap [\![0,T]\!]$.}
  $$
  Noting that
  $$
    Z_{t}\1_{\zeta \leq t} = \cE_{t+}(f\1_{\zeta > T})\1_{\zeta \leq t} = \cE_{t+}(f\1_{\zeta > T}\1_{\zeta \leq t})=\cE_{t+}(0)=0\quad Q\as,
  $$
  we see that $Z=0$ on $[\![0,T]\!] \setminus [\![0,\zeta[\![$. Since $H$ also vanishes on that set, we conclude that
	$$
    Z-Z_{0}- H\sint S\quad \mbox{is nonincreasing and nonpositive $Q$-a.s.\ on $[\![0,T]\!]$.}
  $$  
  In particular, $Z_{0}+H\sint S \geq0$ $Q$-a.s. As $Q\in\cQ$ was arbitrary, it easily follows that 
  $Z_{0}+H\sint S \geq0$ $P$-a.s.\ and that
	$$
    Z-Z_{0}- H\sint S\quad \mbox{is nonincreasing $P$-a.s.\ on $[\![0,T]\!]$}
  $$  
  for all $P\in\cP$. Thus, we have
  \[
    \sup_{Q\in\cQ} E^{Q}[g] + H\sint S_{T} \geq Z_{0} +H\sint S_{T} \geq Z_{T} = g \quad P\mbox{-a.s.} \quad \mbox{for all} \quad P \in \cP
  \]
  and $H\in\cH(x)$ for $x=\sup_{Q\in\cQ} E^{Q}[g]$. This completes the proof of~\eqref{eq:aimH} and thus of Theorem~\ref{th:duality}.
\end{proof}

\appendix
\section{Appendix}\label{sec : Appendix}

\subsection{Notions from Measure Theory}\label{sec:measureTheory}

Given a measurable space $(\Omega,\cA)$, let $\fP(\Omega)$ the set of all probability measures on $\cA$. The \emph{universal completion} of $\cA$ is the $\sigma$-field $\cap_{P\in\fP(\Omega)} \cA^P$, where $\cA^P$ denotes the $P$-completion of $\cA$. When $\Omega$ is a topological space with Borel $\sigma$-field $\cB(\Omega)$, we endow $\fP(\Omega)$ with the topology of weak convergence. Suppose that $\Omega$ is Polish, then $\fP(\Omega)$ is Polish as well.  A subset $A\subset\Omega$ is called \emph{analytic} if it is the image of a Borel subset of another Polish space under a Borel-measurable mapping. Analytic sets are stable under countable union and intersection, and under forward and inverse images of Borel functions. Any Borel set is analytic, and any analytic set is universally measurable. A function $f: \Omega\to [-\infty,\infty]$ is \emph{upper semianalytic} if $\{f\geq c\}$ is analytic for every $c\in\R$. In particular, any Borel function is upper semianalytic. We refer to \cite[Chapter~7]{BertsekasShreve.78} for these results and further background.

\subsection{F\"ollmer's Exit Measure}\label{sec: follmer measure}

  Important references on F\"ollmer's exit measure are \cite{Foellmer.72} and \cite{Meyer.72}; see also \cite{PerkowskiRuf.13} and the references therein for recent developments. The first result of this section provides an alternative, seemingly stronger characterization of the notion of prior-to-$\zeta$ absolute continuity---compare with Definition~\ref{def : prior to zeta abs cont and equiv}.  
   
\begin{lemma}\label{lem : PXa countable representation} 
  Let $\xi$ be a random time and $P,Q\in\fP(\Omega)$. Then
  \begin{equation}\label{eq: prior to zeta equiv}
   P(A\cap \{\tau<\xi\})=0\quad\Rightarrow \quad  Q(A\cap \{\tau<\xi\})=0\quad\forall\; \tau\in \cT_{+},\;A\in \cF_{\tau+}
 \end{equation}
 holds if and only if 
  \begin{equation}\label{eq: prior to zeta equiv coutable}
   P(A\cap \{q<\xi\})=0\quad\Rightarrow \quad   Q(A\cap \{q<\xi\})=0\quad\forall\; q\in \Q_{+},\;A\in \cF_{q}. 
  \end{equation}
\end{lemma}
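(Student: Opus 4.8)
The plan is to prove the two implications separately, the forward one (from \eqref{eq: prior to zeta equiv} to \eqref{eq: prior to zeta equiv coutable}) being immediate: each $q\in\Q_+$ is a (deterministic) $\bF_+$-stopping time and $\cF_q\subseteq\cF_{q+}$, so applying \eqref{eq: prior to zeta equiv} with $\tau\equiv q$ yields \eqref{eq: prior to zeta equiv coutable} at once.

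For the reverse implication, assume \eqref{eq: prior to zeta equiv coutable}, fix $\tau\in\cT_+$ and $A\in\cF_{\tau+}$ with $P(A\cap\{\tau<\xi\})=0$, and aim to show $Q(A\cap\{\tau<\xi\})=0$. The first ingredient is the countable decomposition
\[
  A\cap\{\tau<\xi\}=\bigcup_{q\in\Q_+}\big(A\cap\{\tau<q\}\big)\cap\{q<\xi\},
\]
valid because between any two extended reals $a<b$ there is a rational $q$ with $a<q<b$; it is essential here to use \emph{strict} inequalities on both sides. The second ingredient is the measurability fact that, for an $\bF_+$-stopping time $\tau$ and $A\in\cF_{\tau+}$, the set $A_q:=A\cap\{\tau<q\}$ belongs to $\cF_q$ (not merely to $\cF_{q+}$); this is the standard characterization $A\in\cF_{\tau+}\iff A\cap\{\tau<t\}\in\cF_t$ for all $t\ge0$, and it is precisely why the decomposition must be written with $\tau<q<\xi$ rather than $\tau\le q<\xi$: only then is each $A_q$ an $\cF_q$-set to which \eqref{eq: prior to zeta equiv coutable} applies.

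Granting these two points, the conclusion is a one-line union bound: each $A_q\cap\{q<\xi\}$ is contained in $A\cap\{\tau<\xi\}$ and is therefore $P$-null, so \eqref{eq: prior to zeta equiv coutable} gives $Q(A_q\cap\{q<\xi\})=0$ for every $q\in\Q_+$, and countable subadditivity applied to the displayed decomposition yields $Q(A\cap\{\tau<\xi\})=0$. I expect the only genuine subtlety to be the measurability point---choosing the decomposition with the correct (strict) inequalities so that the hypothesis \eqref{eq: prior to zeta equiv coutable} is literally applicable to the pieces; once that is arranged, nothing further is needed.
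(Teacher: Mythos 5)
Your proof is correct. The forward implication is handled exactly as in the paper (it is immediate), but your converse takes a genuinely more direct route: you decompose $A\cap\{\tau<\xi\}=\bigcup_{q\in\Q_+}\bigl(A\cap\{\tau<q\}\bigr)\cap\{q<\xi\}$ by interpolating a rational strictly between $\tau$ and $\xi$, observe via the standard characterization of $\cF_{\tau+}$ for $\tau\in\cT_+$ that each piece $A\cap\{\tau<q\}$ lies in $\cF_q$ (the strict inequality being exactly what keeps you in $\cF_q$ rather than $\cF_{q+}$), and conclude by countable subadditivity. The paper instead argues in two stages: it first reduces to stopping times taking finitely many values in $\Q_+\cup\{\infty\}$ by approximating $\tau$ from above with dyadic-valued stopping times $\tau_n\downarrow\tau$, using $\cF_{\tau+}\subset\cF_{\tau_n}$ and monotone convergence of $A\cap\{\tau_n<\xi\}$ up to $A\cap\{\tau<\xi\}$, and then splits such a finite-valued stopping time over its values to land in $\cF_{t_i}$ with $t_i\in\Q_+$. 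Both arguments hinge on the same measurability point---one must produce sets in $\cF_q$ for rational $q$ before invoking \eqref{eq: prior to zeta equiv coutable}---but your version dispenses with the approximation and limiting step entirely, at the cost of nothing; the paper's version is the more traditional stopping-time discretization and generalizes mechanically to settings where one prefers to work with the approximating stopping times themselves.
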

 
\begin{proof} 
 It is clear that~\eqref{eq: prior to zeta equiv} implies~\eqref{eq: prior to zeta equiv coutable}. For the converse, we first note that it suffices to check~\eqref{eq: prior to zeta equiv} for $\F$-stopping times taking finitely many values in $\Q_{+}\cup\{\infty\}$. Indeed, let $\tau \in \cT_{+}$ be given; then 
 $$
 \tau_{n}:=\inf\big\{(k+1)2^{-n}:\, 0\leq k\leq n2^{n},\, \tau\leq k2^{-n}\big\}
 $$
 (where $\inf\emptyset=\infty$) is a sequence of such stopping times and $\tau_{n}\downarrow \tau$. Now  $A\cap \{\tau_{n}<\xi\}$ increases to $A\cap \{\tau<\xi\}$ for $A\in \cF_{\tau+}\subset \cF_{\tau_{n}}$; therefore, if~\eqref{eq: prior to zeta equiv} is valid for each $\tau_{n}$, then $ P(A\cap \{\tau<\xi\})=0$ implies  $ P(A\cap \{\tau_{n}<\xi\})=0$ which in turn implies  $ Q(A\cap \{\tau_{n}<\xi\})=0$ and thus $Q(A\cap \{\tau<\xi\})=0$ by monotone convergence.
 
Any $\F$-stopping time $\tau$ with finitely many values in $\Q_{+}\cup\{\infty\}$ is of the form $\tau=\sum_{i=1}^{n} t_{i}\1_{A_{i}}$, where $n\in \N$, $t_{i}\in\Q_{+}\cup\{\infty\}$ and $A_{i}\in \cF_{t_{i}}$ are disjoint. Hence,  
 $$
 R(A\cap \{\tau<\xi\})=\sum_{i=1}^{n} R\big(A\cap \{\tau\le t_{i}\}\cap A_{i}\cap  \{t_{i}<\xi\}\big),\quad R\in \{P,Q\}
 $$
 and it follows that~\eqref{eq: prior to zeta equiv coutable} implies~\eqref{eq: prior to zeta equiv}.
\end{proof}

\begin{remark} \label{rem:localization_under_both}
Let $Q \simz P$. It is   a consequence of Lemma \ref{lem : PXa countable representation} that $Q$ and $P$ are equivalent on $\Fc_{\tau+} \cap \set{\tau < \zeta}$ for any $ {\tau} \in \Stop$. Suppose now that $(\tau_n)_{\nin}$ is a nondecreasing $\Stop$-valued sequence such that $\tau \dfn \limn \tau_n \geq \zeta$ holds in the $Q$-a.s. sense. Since $\set{\tau < \zeta} \in \Fc_{\tau+} \cap \set{\tau < \zeta}$ has zero $Q$-measure, we conclude that $P\{\tau < \zeta\} = 0$, i.e., that $\tau \geq \zeta$ also holds in the $P$-a.s.~sense. In particular, if $\zeta =  \infty$ $P$-a.s., it follows that $\tau =  \infty$ $P$-a.s.
\end{remark}

\begin{remark}\label{rem : density process constructed from Q}
Let $P$ and $Q$ be two probability measures  on $(\Omega, \Fc)$ with   $Q \domz P$ and $\zeta =  \infty$ $P$-a.s. By utilizing appropriate versions of the Radon--Nikodym theorem and a c\`adl\`ag modification procedure, one may establish the existence of a $P$-a.s.\   nonnegative  c\`adl\`ag adapted process $Y$ such that
\begin{equation} \label{eq: measure_change}
	Q (A_\tau \cap \set{\tau < \zeta}) = E^P \bra{Y_\tau \1_{A_\tau} \1_{\tau < \zeta}} \quad \mbox{for all}\quad  \tau \in \Stop_+  \text{ and } A_\tau \in \Fc_{\tau+}.
\end{equation}
The above process $Y$ will be called the \emph{prior-to-$\zeta$ density process of $Q$ with respect to $P$}.   It is strictly positive under $P$ when $Q\simz P$.   Note that \eqref{eq: measure_change} uniquely specifies $Q$, since  the class  of sets $ A_T \cap \{ T < \zeta  \} , T \in \Real_+ , A_T \in \Fc_T $    generates $\Fc_{\zeta-} = \Fc$ and   is also a $\pi$-system. Therefore, the specification  of the  prior-to-$\zeta$ density process of $Q$ with respect to $P$ is uniquely defined up to $P$-evanescent sets.

Suppose that $Q \simz P$ and $Y$ is the prior-to-$\zeta$ density process of $Q$ with respect to $P$. In particular, since $Q$ and $P$ are equivalent on $\Fc_{0+}$ and $\zeta > 0$, \eqref{eq: measure_change} gives $E^P[Y_0] = 1$. Furthermore, for $0 \leq s < t<\infty$ and $A_s \in \Fc_{s+}$, note that
\[
E^P [Y_t \1_{A_s}] = Q (A_s \cap \set{t < \zeta}) \leq Q (A_s \cap \set{s < \zeta}) = E^P [Y_s \1_{A_s}],
\]
which implies that $Y$ is an $(\bF_+, P)$-supermartingale. 
\end{remark}

Theorem~\ref{thm: foelmeasure_1} that follows, essentially due to F\"ollmer in \cite{Foellmer.72}, is a converse to the previous observation: starting with a probability $P$ and a candidate density process $Y$, a probability $Q$ is constructed that has $Y$ as a prior-to-$\zeta$ density with respect to $Q$. The statement requires the following notion.

\begin{definition} \label{defn: foretellable}
We say that $\zeta$ is \emph{foretellable} under a probability $Q$ if there exists a $\Stop_+$-valued sequence $(\tau_n)_{\nin}$ such that $Q\{\tau_n < \zeta\} = 1$ for all $n$ and $Q\{\limn \tau_n = \zeta\} = 1$.
\end{definition}

It is clear that the above sequence of stopping times can be chosen to be nondecreasing. Also, note that  foretellability of $\zeta$  does \emph{not} remain invariant under prior-to-$\zeta$ equivalent probability changes.

\begin{remark} \label{rem:foretell_and_pred}
By \cite[Theorem 4.16]{HeWangYan.92}, $\zeta$ is foretellable under $Q$ if and only if $\zeta$ is $Q$-a.s.~equal to a predictable stopping time on $(\Omega, \, \bF_+)$. 
\end{remark}


\begin{theorem} \label{thm: foelmeasure_1}
Let $Y$ be a strictly positive $(\bF_+, P)$-supermartingale with $E^P \bra{Y_0} = 1$. Then, there exists $Q \simz P$ such that $Y$ is the prior-to-$\zeta$ density process of $Q$ with respect to $P$. Furthermore, if $Y$ is actually an $(\bF_+, P)$-local martingale, $\zeta$ is foretellable under $Q$.
\end{theorem}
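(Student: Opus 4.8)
The plan is to realise $Q$ as F\"ollmer's exit measure attached to $(P,Y)$ and then to read off its properties; the foretellability assertion will need a short separate argument in the local-martingale case. First I would record the local specifications: since $Y\ge0$ and $E^{P}[Y_{t}]\le E^{P}[Y_{0}]=1$ for every $t$, the prescription $\mu_{t}(A):=E^{P}[Y_{t}\1_{A}]$, $A\in\cF_{t+}$, defines a sub-probability measure on $\cF_{t+}$, and the supermartingale property gives $\mu_{t}\le\mu_{s}$ on $\cF_{s+}$ whenever $s\le t$, so that the mass defect $1-\mu_{t}(\Omega)=E^{P}[Y_{0}-Y_{t}]$ is nonnegative and nondecreasing in $t$. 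I would then invoke F\"ollmer's construction (see \cite{Foellmer.72,Meyer.72}, and \cite{PerkowskiRuf.13} for a version adapted to path spaces) to produce a probability $Q$ on $(\Omega,\cF)$ which re-allocates the defect at each time $t$ onto the paths that have already jumped to $\triangle$ by time $t$, consistently in $t$; concretely, $Q$ is characterised as the unique probability on the $\pi$-system $\{A\cap\{t<\zeta\}:\,t\in\R_{+},\,A\in\cF_{t}\}$, which generates $\cF=\cF_{\zeta-}$ (cf.\ Remark~\ref{rem : density process constructed from Q}), by
\[
  Q(A\cap\{t<\zeta\})=E^{P}[Y_{t}\1_{A}],\qquad t\in\R_{+},\ A\in\cF_{t}.
\]
That this prescription extends to a genuine measure relies on $\Omega$ being Polish and on each $\cF_{t}$ being countably generated (Lemma~\ref{le:OmegaPolish}), through regular conditional distributions and a Kolmogorov-type argument, and the c\`adl\`ag path structure is used to control the killing at non-dyadic times.

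Next I would extract the asserted properties. Taking $t=0$ and $A=\Omega$ above, and using $\zeta>0$ identically, gives $Q(\Omega)=E^{P}[Y_{0}]=1$, so $Q$ is a probability. The display gives $Q\domz P$, and strict positivity of $Y$ under $P$ gives $P\domz Q$, whence $Q\simz P$. Since $Y$ is c\`adl\`ag and satisfies the deterministic-time identity above, it is $P$-indistinguishable from the prior-to-$\zeta$ density process furnished by Remark~\ref{rem : density process constructed from Q}; equivalently, the stopping-time identity $Q(A_{\tau}\cap\{\tau<\zeta\})=E^{P}[Y_{\tau}\1_{A_{\tau}}]$ for $\tau\in\Stop_{+}$, $A_{\tau}\in\cF_{\tau+}$, follows from the deterministic one by optional sampling of the supermartingale $Y$ together with Lemma~\ref{lem : PXa countable representation}. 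Hence $Y$ is the prior-to-$\zeta$ density process of $Q$ with respect to $P$.

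For the final assertion, suppose $Y$ is an $(\bF_{+},P)$-local martingale and let $(\sigma_{n})_{\nin}$ be a localising sequence; replacing $\sigma_{n}$ by $\sigma_{n}\wedge n$ I may assume $\sigma_{n}\le n$, that $\sigma_{n}\uparrow\infty$ $P$-a.s., and that $Y^{\sigma_{n}}$ is a true $(\bF_{+},P)$-martingale, so that $E^{P}[Y_{\sigma_{n}}]=E^{P}[Y_{0}]=1$. Applying the stopping-time density identity at $\sigma_{n}$ with $A_{\sigma_{n}}=\Omega$ yields $Q\{\sigma_{n}<\zeta\}=E^{P}[Y_{\sigma_{n}}]=1$. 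On the other hand, for each $q\in\Q_{+}$ the set $\{\sup_{m}\sigma_{m}\le q\}\in\cF_{q+}$ is $P$-null, because $\sup_{m}\sigma_{m}=\infty$ $P$-a.s., so the density identity gives $Q(\{\sup_{m}\sigma_{m}\le q\}\cap\{q<\zeta\})=0$ for every rational $q$, and therefore $\sup_{m}\sigma_{m}\ge\zeta$ $Q$-a.s. Combined with $\sigma_{n}<\zeta$ $Q$-a.s.\ for all $n$, which forces $\sup_{m}\sigma_{m}\le\zeta$ $Q$-a.s., this gives $\sigma_{n}\uparrow\zeta$ $Q$-a.s.; thus $\zeta$ is foretellable under $Q$ in the sense of Definition~\ref{defn: foretellable}.

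The one genuinely delicate step is the existence of $Q$: one must glue the mass-losing local pictures $\{Y_{t}\cdot P|_{\cF_{t}}\}_{t\ge0}$ into a single probability on the path-with-cemetery space, distributing the escaping mass onto trajectories that are sent to $\triangle$ at the right random times. This is exactly F\"ollmer's exit-measure construction; the supermartingale (rather than martingale) generality means the killing may occur both at jumps of $Y$ and continuously, and verifying $\sigma$-additivity of the resulting set function hinges on the Polish path-space structure recorded in Lemma~\ref{le:OmegaPolish}. Everything downstream of the construction is routine bookkeeping with \eqref{eq: measure_change}, optional sampling, and the definition of foretellability.
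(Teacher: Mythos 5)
Your overall architecture is the same as the paper's: realise $Q$ as the F\"ollmer measure attached to $(P,Y)$, verify $Q\simz P$ and the density identity \eqref{eq: measure_change}, and deduce foretellability from a localizing sequence. The downstream parts of your argument are correct and essentially identical to the paper's: strict positivity of $Y$ gives the two-sided relation $\simz$, optional sampling plus Lemma~\ref{lem : PXa countable representation} upgrades the deterministic-time identity to stopping times, and your rational-time argument showing $\sup_m\sigma_m\geq\zeta$ $Q$-a.s.\ is exactly what the paper encapsulates in Remark~\ref{rem:localization_under_both}.

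The gap is the step you defer. The existence of a probability $Q$ on all of $(\Omega,\cF)$ with $Q(A\cap\{t<\zeta\})=E^{P}[Y_{t}\1_{A}]$ is precisely where the paper has to work, and your sketch (``regular conditional distributions and a Kolmogorov-type argument'') does not supply it. The construction the paper quotes from \cite{PerkowskiRuf.13} does \emph{not} deliver a measure on $\cF$ whose lost mass sits on paths absorbed at the canonical lifetime $\zeta$: it yields a stopping time $\xi\in\Stop_+$ with $P\{\xi<\infty\}=0$ and a probability $Q^{0}$ defined only on the strict pre-$\xi$ $\sigma$-field $\cF_{\xi-}$, with the density identity prior to $\xi$. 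The actual content of the paper's proof is the transfer of $Q^{0}$ to $\cF=\cF_{\zeta-}$: after adjusting $\xi$ so that $0<\xi\leq\zeta$ and $P\{\xi=\zeta\}=1$, one introduces the killing map $\psi(\omega)=\omega\,\1_{[0,\xi(\omega))}+\triangle\,\1_{[\xi(\omega),\infty)}$, checks that $\psi$ is $\cF_{\xi-}/\cF$-measurable, that $\zeta\circ\psi=\xi$ and $\xi\leq\xi\circ\psi$ (the latter via the killing-operator identity from \cite{DellacherieMeyer.78}), and sets $Q:=Q^{0}\circ\psi^{-1}$, so that $\zeta=\xi$ holds $Q$-a.s.\ and the prior-to-$\xi$ identity becomes \eqref{eq: measure_change}. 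Without this push-forward step (or an explicit appeal to a version of F\"ollmer's theorem stated for this exact cemetery path space and raw filtration, together with a verification of the hypotheses it needs beyond Lemma~\ref{le:OmegaPolish}), your assertion that the prescription on the $\pi$-system $\{A\cap\{t<\zeta\}\}$ ``extends to a genuine measure'' begs the question: uniqueness on that $\pi$-system is immediate, existence is the whole issue. If you add the transfer argument, the rest of your proof stands as written.
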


\begin{proof}
Recall that for $\xi \in \Stop_+$, the $\sigma$-field $\Fc_{\xi-}$ is generated by the collection $\set{A_s \cap \set{s < \xi}:\,  s \geq 0,\, A_s \in \Fc_s}$. With this definition in place, we observe that $\Fc = \Fc_{\zeta-}$, because $B_t$ is $\Fc_{\zeta-}$-measurable for all $t \geq 0$. Indeed, Borel subsets of $E \cup \{ \triangle \}$ are of the form $A$ or $A \cup \{ \triangle \}$, where $A \in \cB(E)$, and for any such $A$, we have $\{ B_t \in A \} = \{ B_t \in A \}  \cap \{ t < \zeta \} \in \cF_{\zeta -}$  and $\set{B_t \in A \cup \{ \triangle \}} = (\{ B_t \in A \}  \cap \{ t < \zeta \}) \cup \{ \zeta \leq t \} \in \Fc_{\zeta -}$. 

By \cite[Section 4.2]{PerkowskiRuf.13}, one can construct $\xi \in \Stop_+$ with $P\{\xi < \infty\} = 0$ and a probability $Q^0$ on $(\Omega, \Fc_{\xi-})$, such that 
$$
Q^0 (A_\tau \cap \set{\tau < \xi}) = E^P \bra{Y_\tau \1_{A_\tau} \1_{\tau < \xi}}
$$
 holds for all $\tau \in \Stop_+$ and $A_\tau \in \Fc_{\tau+}$. In particular, $Q^{0}\{\xi >0\}=E^{P}[Y_{0}]=1$. Since $A_\tau \cap \set{\tau < \xi\wedge\zeta} \in \Fc_{\tau+}$ for all $A_\tau \in \Fc_{\tau+}$, the above formula also holds for $\xi':=(\xi\wedge \zeta) \1_{\xi > 0} + \zeta \1_{\xi = 0}$. Thus, we may assume that $\xi \in \Stop_+$ satisfies $0 < \xi \leq \zeta$ and $P\{\xi = \zeta\} = 1$, and that $Q^0 (A_\tau \cap \set{\tau < \xi}) = E^P \bra{Y_\tau \1_{A_\tau} \1_{\tau < \xi}}$ holds for all $\tau \in \Stop_+$ and $A_\tau \in \Fc_{\tau+}$.
We shall extend $Q^0$ to a probability $Q$ on $\Fc = \Fc_{\zeta-}$ such that $Q\{\xi = \zeta\} = 1$ holds; this will immediately establish~\eqref{eq: measure_change}. Define a map $\psi: \Omega \rightarrow \Omega$ as follows: for $\omega \in \Omega$, set $\psi_{t}(\omega) = \omega_{t}$ when $t < \xi (\omega)$ and $\psi_{t}(\omega) =\triangle$ when $\xi (\omega) \leq t$. Since $\Fc$ is generated by the coordinate projections and 
\[
\{ \psi \in \Lambda \} = ( \{\omega :\, \omega_{t} \in \Lambda \cap E  \} \cap \{ t < \xi \} ) \cup \{t \geq \xi \}  \in \Fc_{\xi-}
\]
holds for all $\tir$ and Borel subsets $\Lambda$ of $\bar{E}=E\cup \{ \triangle\}$, it follows that $\psi$ is $(\Fc_{\xi-} / \Fc)$-measurable. By construction,  $\zeta \circ \psi = \xi$. We claim that $\xi \le \xi\circ \psi$ holds as well. Indeed, since $\xi\wedge t$ is $\cF_{t-}$-measurable for all $\tir$, \cite[Theorem~96, Chapter~IV]{DellacherieMeyer.78} implies that $\xi \wedge t = (\xi \wedge t)\circ k_{t}$, where $k$ is the killing operator defined via $k_{t}(\omega) = \omega\1_{[0,t)} + \triangle \1_{[t,\infty)}$ for $\omega \in \Omega$. Since $\xi(\omega)\wedge t= \xi\circ k_{t}(\omega) \wedge t$ holds for all $(\omega, t) \in \Omega \times \Real_+$, plugging in $t = \xi(\omega)$ gives
$$
\xi(\omega)= \xi\circ k_{\xi(\omega)}(\omega) \wedge \xi(\omega)=\xi\circ \psi (\omega) \wedge \xi(\omega), \quad   \omega \in \Omega,
$$
where we have used that $k_{\xi(\omega)}(\omega) = \psi(\omega)$ holds for all $\omega \in \Omega$. Therefore,  $\xi \le \xi\circ \psi$. The last inequality, combined with $\xi \leq \zeta$ and $\zeta \circ \psi = \xi$, gives $\zeta \circ \psi = \xi  \circ \psi$. Define $Q$ on $\Fc$ via $Q(A )= Q^0(\psi^{-1}(A))$  for all $A \in \Fc $. By construction, $Q$ is an extension of $Q^0$, and \eqref{eq: measure_change} follows since 
$$
Q\{\xi < \zeta\} = Q^0\{\xi(\psi) < \zeta(\psi)\} = Q^0(\emptyset) = 0.
$$

Finally, if $Y$ is an $(\bF_+, P)$-local martingale, let $(\tau_n)$ be a localizing sequence and call  $\tau \dfn \limn \tau_n$. Note that $\tau = \infty = \zeta$ holds in the $P$-a.s.\ sense. By Remark \ref{rem:localization_under_both}, $\tau\geq \zeta$ holds in the $Q$-a.s.\ sense. Furthermore, from~\eqref{eq: measure_change}, we obtain $Q\{\tau_n < \tau\} = E_P [Y_{\tau_n}] = 1$ for all $\nin$. Therefore, $\zeta$ is foretellable under $Q$.
\end{proof}

\subsection{On the Path Space $\Omega$}

The goal of this section is to show that $\Omega$ carries a natural Polish topology, which is required for the measurable selection arguments in Sections~\ref{sec: dyna prog and local mart} and~\ref{sec: superhedging duality}. To the best of our knowledge, this result is not contained in the previous literature---only the Lusin property is mentioned; see, e.g., \cite{Meyer.72}.

Let $\D=\D_{x_{*}}([0,\infty);E)$ be the usual Skorokhod space of $E$-valued c\`adl\`ag paths on $[0,\infty)$ starting at the point $x_{*}\in E$ and let $\delta_{\infty}$ be its usual metric, rendering $\D$ a Polish space. We may think of a path $\omega\in\Omega$ as consisting of a path $\tilde\omega\in\D$ and a lifetime $z\in(0,\infty]$; in this context, it is useful to equip $(0,\infty]$ with the complete metric $d_{(0,\infty]}(z,z'):=|z^{-1}-z'^{-1}|$, where $\infty^{-1}:=0$.
More precisely, given $z\in(0,\infty]$, let
$$
e_{z}(t):=\begin{cases}
 t&\mbox{if }z=\infty,\\
 z(1-e^{-t})&\mbox{if }z<\infty.
\end{cases} 
$$
We note that $e_{z}: [0,\infty)\to[0,z)$ is a monotone bijection; thus, precomposition with $e_{z}$ turns a path $\omega\in\Omega$ with lifetime $z=\zeta(\omega)$ into an element of~$\D$.
As a result, we can define
$$
\delta_{\Omega}(\omega,\omega'):= d_{(0,\infty]}\big(\zeta(\omega),\zeta(\omega')\big) + \delta_{\infty}\big(\omega\circ e_{\zeta(\omega)}, \omega'\circ e_{\zeta(\omega')}\big), \quad \omega,\omega'\in\Omega.
$$

\begin{lemma}\label{le:OmegaPolish} 
  The space $(\Omega,\delta_{\Omega})$ is Polish and its Borel $\sigma$-field coincides with~$\cF$. Moreover, $\cF_{\tau}=\sigma(B_{t\wedge\tau},\,t\in\R_{+})$ for any $\F$-stopping time $\tau$; in particular, $\cF_{\tau}$ is countably  generated.
\end{lemma}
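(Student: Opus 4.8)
The plan is to prove the three assertions in turn, with the bulk of the work going into the first (Polishness), then the identification of the Borel field, then the $\sigma$-field $\cF_\tau$.

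\emph{Polishness via a time change.} The definition of $\delta_\Omega$ already hints at the right move: I would introduce $\Phi\colon\Omega\to\D\times(0,\infty]$, $\Phi(\omega):=\big(\omega\circ e_{\zeta(\omega)},\,\zeta(\omega)\big)$, and show it is a bijective isometry when $\D\times(0,\infty]$ carries the sum of $\delta_\infty$ and $d_{(0,\infty]}$. The elementary facts to verify are: for each $z\in(0,\infty]$, $e_z$ is a strictly increasing continuous bijection of $[0,\infty)$ onto $[0,z)$, so that precomposition with $e_{\zeta(\omega)}$ turns $\omega$ (càdlàg and $E$-valued on $[0,\zeta(\omega))$, started at $x_*$) into a bona fide element of $\D$; conversely, given $(\tilde\omega,z)$ the path equal to $\tilde\omega\circ e_z^{-1}$ on $[0,z)$ and to $\triangle$ on $[z,\infty)$ lies in $\Omega$ and is the unique $\Phi$-preimage of $(\tilde\omega,z)$ --- here one uses the structural description of $\Omega$, in particular that $\zeta(\omega)<\infty$ forces $\omega\equiv\triangle$ on $[\zeta(\omega),\infty)$ (by right-continuity and the definition of $\zeta$), so that no data beyond $(\tilde\omega,z)$ is lost or needed. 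Since $\delta_\Omega$ is by construction the pullback under $\Phi$ of the sum metric, $\Phi$ is an isometry onto $\D\times(0,\infty]$; as $\D=\D_{x_*}([0,\infty);E)$ is Polish for Polish $E$ (well known) and $\big((0,\infty],d_{(0,\infty]}\big)$ is isometric to $[0,\infty)$ via $z\mapsto 1/z$ and hence Polish, their product under the sum metric is Polish, and therefore so is $(\Omega,\delta_\Omega)$. In particular $\Omega$ is separable, which I use below.

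\emph{Borel $\sigma$-field.} For $\cF\subseteq\cB(\Omega,\delta_\Omega)$ it suffices that each coordinate $B_t$ be $\delta_\Omega$-Borel; transported along $\Phi$, $B_t$ becomes $(\tilde\omega,z)\mapsto\tilde\omega_{e_z^{-1}(t)}\1_{\{z>t\}}+\triangle\,\1_{\{z\le t\}}$, which is Borel because the evaluation $(\tilde\omega,s)\mapsto\tilde\omega_s$ is jointly Borel on $\D\times[0,\infty)$ (a consequence of right-continuity: it is the pointwise limit of the jointly Borel maps $(\tilde\omega,s)\mapsto\tilde\omega_{\lceil ns\rceil/n}$, recalling that coordinate evaluations generate $\cB(\D)$) and $z\mapsto e_z^{-1}(t)$ is Borel on $\{z>t\}$. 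For the reverse inclusion, since $\Omega$ is separable it is enough to check that $\omega\mapsto\delta_\Omega(\omega,\omega_0)$ is $\cF$-measurable for each fixed $\omega_0$. The first summand is a Borel function of the $\cF$-measurable random time $\zeta$ (note $\{\zeta\le t\}=\{B_t=\triangle\}\in\cF_t$); for the second, I would show that $\Psi\colon\omega\mapsto\omega\circ e_{\zeta(\omega)}$ is $\cF/\cB(\D)$-measurable, which, evaluations generating $\cB(\D)$, reduces to $\cF$-measurability of $\omega\mapsto(\Psi\omega)_s=B_{e_{\zeta(\omega)}(s)}(\omega)$ for each $s$; this holds because $\omega\mapsto e_{\zeta(\omega)}(s)$ is a $[0,\infty)$-valued $\cF$-measurable map and the coordinate process $(t,\omega)\mapsto B_t(\omega)$ is $\cB([0,\infty))\otimes\cF$-measurable, again by right-continuity.

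\emph{The $\sigma$-field $\cF_\tau$.} The inclusion $\sigma(B_{t\wedge\tau}\colon t\ge0)\subseteq\cF_\tau$ is the standard fact that a stopped coordinate process is $\cF_\tau$-measurable. For the converse I would invoke Galmarino's test: introduce the stopping operator $\kappa_\tau(\omega):=\omega_{\cdot\wedge\tau(\omega)}$ and check that it maps $\Omega$ into itself --- if $\tau(\omega)<\zeta(\omega)$ the stopped path is $E$-valued on all of $[0,\infty)$ and hence càdlàg, while if $\tau(\omega)\ge\zeta(\omega)$ it coincides with $\omega$ --- and that $B_{t\wedge\tau}=B_t\circ\kappa_\tau$, so $\sigma(B_{t\wedge\tau}\colon t)=\sigma(\kappa_\tau)$. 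Galmarino's test (see, e.g., \cite[Chapter~IV]{DellacherieMeyer.78}) then yields $A=\kappa_\tau^{-1}(A)$ for every $A\in\cF_\tau$ --- the ingredients being that $\cF_t$-sets depend on $\omega$ only through $\omega|_{[0,t]}$ and that $\tau\circ\kappa_\tau=\tau$ --- whence $A\in\sigma(\kappa_\tau)=\sigma(B_{t\wedge\tau}\colon t)$. Finally, $\cF_\tau$ is countably generated because right-continuity gives $\sigma(B_{t\wedge\tau}\colon t\ge0)=\sigma(B_{q\wedge\tau}\colon q\in\Q_+)$ and $\bar E$ is second countable. The one genuinely nontrivial point is the construction in the first step: it is not obvious a priori that the cemetery-augmented path space carries any Polish topology, and the crux is recognizing that the explicit time change $e_z$ identifies $\Omega$ isometrically with $\D\times(0,\infty]$, the care required being the verification that $\Phi$ is actually a bijection; the remaining two parts are routine, but must be carried out on $\Omega$ itself rather than quoted from results about $C([0,\infty))$ or $\D$, which is legitimate precisely because $\Omega$ is stable under the deterministic stopping operators and its Borel structure has just been identified with $\cF$.
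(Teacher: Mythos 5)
Your proof is correct and follows essentially the same route as the paper: the isometric identification $\omega\mapsto(\omega\circ e_{\zeta(\omega)},\zeta(\omega))$ with $\D\times(0,\infty]$ for Polishness, coordinate-wise measurability for $\cF\subseteq\cB(\Omega)$, and $\cF$-measurability of the time-change map for the reverse inclusion (the paper phrases this via continuous functions inducing continuous functions on the product, you via distance functions plus separability --- an equivalent variation). For the identity $\cF_\tau=\sigma(B_{t\wedge\tau})$ the paper simply cites Stroock--Varadhan's standard arguments, and your Galmarino-test argument with the stopping operator $\kappa_\tau$ (including the check that $\Omega$ is stable under stopping) is exactly that standard argument written out.
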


\begin{proof}
  It is clear that $\delta_{\Omega}$ defines a metric on $\Omega$. Moreover, the mapping
  $$
    \Omega \to \D\times (0,\infty],\quad \omega \mapsto \big(\omega \circ e_{\zeta(\omega)}, \zeta(\omega)\big)
  $$
  admits the inverse 
  $$
     \D\times (0,\infty] \to \Omega,\quad (\tilde\omega,z) \mapsto (\tilde \omega \circ e_{z}^{-1}) \,\1_{[0,z)} + \triangle \,\1_{[z,\infty)}.
  $$
  By the definition of $\delta_{\Omega}$, these mappings constitute an isometric isomorphism between $\Omega$ and $\D\times (0,\infty]$; in particular, $\Omega$ is Polish like $\D\times (0,\infty]$.
  
  Let $\cB(\Omega)$ be the Borel $\sigma$-field on $\Omega$. To prove that $\cF\subset\cB(\Omega)$, it suffices to show that the evaluation 
  $B_{t}: \omega\mapsto\omega_{t}$ is Borel-measurable for any fixed $t\geq0$. To this end, note that the functions  
  $$
  \omega\mapsto \zeta(\omega)\in (0,\infty], \quad \omega\mapsto \omega\circ e_{\zeta(\omega)}\in \D,\quad \omega\mapsto e^{-1}_{\zeta(\omega)}(t)\in[0,\infty)
  $$
  are continuous on $\Omega$. Let $\tilde B$ be the canonical process on $\D$ and recall that $(t,\tilde\omega)\mapsto \tilde B_{t}(\tilde\omega)$ is jointly Borel-measurable. It then follows that
  $$
    \omega\mapsto B_{t}(\omega)= \tilde{B}_{e^{-1}_{\zeta(\omega)}(t)}\big(\omega\circ e_{\zeta(\omega)}\big) \,\1_{[0,\zeta(\omega))}(t) + \triangle \,\1_{[\zeta(\omega),\infty)}(t)
  $$
  is Borel-measurable as well.
  
  To prove the reverse inclusion $\cB(\Omega)\subset\cF$, it suffices to show that any continuous function $f:\Omega\to\R$ is $\cF$-measurable. 
  Indeed, the maps
  $$
  \omega\mapsto \zeta(\omega)\in (0,\infty], \quad \omega\mapsto \omega\circ e_{\zeta(\omega)}\in \D
  $$
  are clearly $\cF$-measurable. Moreover, any function $f$ on $\Omega$ induces a unique function $\tilde f$ on $\D\times (0,\infty]$ satisfying
  $$
    f(\omega) = \tilde f\big(\omega\circ e_{\zeta(\omega)}, \zeta(\omega)\big),\quad \omega\in\Omega.
  $$
  If $f$ is continuous, it follows that $\tilde f$ is  continuous and hence the composition 
  $\omega\mapsto f(\omega)=\tilde f(\omega\circ e_{\zeta(\omega)}, \zeta(\omega))$ is $\cF$-measurable. This completes the proof that $\cF=\cB(\Omega)$.
  
  The last claim follows from the fact that $\bar E$ is Polish and standard arguments; see \cite[Lemma~1.3.3 and Exercise~1.5.6]{StroockVaradhan.79}.
\end{proof}


\newcommand{\dummy}[1]{}

\end{document}